\newif\ifshowproofs
\newcommand{\smallbox}[1]{\ensuremath{Small({#1})}}
\newcommand{\mediumbox}[1]{\ensuremath{Med}({#1})}
\newcommand{\largebox}[1]{\ensuremath{Large}({#1})}
\definecolor{blue}{rgb}{0.274,0.392,0.666}
\definecolor{darkblue}{rgb}{0.063,0.306,0.545}
\definecolor{red}{rgb}{1,0.3,0.3}
\definecolor{greennn}{rgb}{0,0.588,0.509}
\newcommand{\blue}[1]{{{\textcolor{darkblue}{#1}\xspace}}}
\newtheorem{observation}{Observation}
\renewcommand{\paragraph}[1]{\smallskip\noindent\textbf{#1}\xspace}
\renewcommand{\emph}[1]{\blue{\em #1}\xspace}
\Crefname{observation}{Observation}{Observations}
\Crefname{algorithm}{Algorithm}{Algorithms}
\Crefname{section}{Section}{Sections}
\Crefname{lemma}{Lemma}{Lemmata}
\Crefname{claim}{Claim}{Claims}
\Crefname{figure}{Fig.}{Figs.}
\Crefname{figure}{Fig.}{Figs.}
\Crefname{property}{Property}{Properties}
\Crefname{enumi}{Condition}{Conditions.}
\renewcommand{\paragraph}[1]{\smallskip\noindent\textbf{#1}\xspace}
\newcommand{\remove}[1]{}
\newif\ifshort
\begin{document}

\authorrunning{F. Barrera-Cruz et al.}

\title{
How to Morph a Tree on a Small Grid}
\author{
Fidel {Barrera-Cruz}$^1$,
Manuel Borrazzo$^2$,
Giordano {Da Lozzo}$^2$,
Giuseppe \mbox{Di Battista}$^2$, 
Fabrizio Frati$^2$,
Maurizio Patrignani$^2$, and 
{Vincenzo Roselli}$^2$ \thanks{This research was supported in part by MIUR Project ``MODE'' under PRIN 20157EFM5C, by MIUR Project ``AHeAD'' under PRIN 20174LF3T8, by H2020-MSCA-RISE project 734922 -- ``CONNECT'', and by MIUR-DAAD JMP N$^\circ$ 34120.\xspace}}
\institute{
$^1$Sunnyvale, CA, USA\\
\href{mailto:fidel.barrera@gmail.com}{fidel.barrera@gmail.com}\\
$^2$Roma Tre University, Rome, Italy\\
\href{mailto:manuel.borrazzo@uniroma3.it,giordano.dalozzo@uniroma3.it,giuseppe.dibattista@uniroma3.it,fabrizio.frati@uniroma3.it,maurizio.patrignani@uniroma3.it,vincenzo.roselli@uniroma3.it}{name.lastname@uniroma3.it}
}
{\def\addcontentsline#1#2#3{}\maketitle}

\begin{abstract}
In this paper we study planar morphs between straight-line planar grid drawings of trees. A morph consists of a sequence of morphing steps, where in a morphing step vertices move along straight-line trajectories at constant speed. We show how to construct planar morphs that simultaneously achieve a reduced number of morphing steps and a polynomially-bounded resolution. We assume that both the initial and final drawings lie on the grid and we ensure that each morphing step produces a grid drawing; further, we consider both upward drawings of rooted trees and drawings of arbitrary trees.
\end{abstract}

\section{Introduction}\label{se:introduction}

The problem of morphing combinatorial structures is a consolidated research topic with important applications in several areas of Computer Science such as Computational Geometry, Computer Graphics, Modeling, and Animation. The structures of interest typically are drawings of graphs; a \emph{morph} between two drawings $\Gamma_0$ and $\Gamma_1$ of the same graph $G$ is defined as a continuously changing family of drawings $\{\Gamma_t\}$ of $G$ indexed by time $t \in [0,1]$, such that the drawing at time $t=0$ is $\Gamma_0$ and the drawing at time $t=1$ is $\Gamma_1$. 
A morph is usually required to preserve a certain drawing standard and pursues certain qualities.


The \emph{drawing standard} is the set of the geometric properties that are maintained at any time during the morph. For example, if both $\Gamma_0$ and $\Gamma_1$ are planar drawings, then the drawing standard might require that all the drawings of the morph are planar.
Other properties that might required to be preserved are the convexity of the faces, or the fact that the edges are 
straight-line segments, or polylines, etc. 

Regarding the \emph{qualities} of the morph, the research up to now mainly focused on limiting the number of \emph{morphing steps}, where in a morphing step vertices move along straight-line trajectories at constant speed. A morph $\mathcal{M}$ can then be described as a sequence of drawings $\mathcal M=\langle \Gamma_0=\Delta_0,\Delta_1,\dots,\Delta_{k}=\Gamma_1\rangle$ where the morph $\langle \Delta_{i-1}, \Delta_i \rangle$, for $i=1, \dots, k$, is a morphing step. Following the pioneeristic works of Cairns and Thomassen~\cite{c-dprc-44,t-dpg-83}, most of the literature focused on the straight-line planar drawing standard. A sequence of recent results in~\cite{DBLP:journals/siamcomp/AlamdariABCLBFH17,DBLP:conf/soda/AlamdariACBFLPRSW13,DBLP:conf/icalp/AngeliniLBFPR14,DBLP:conf/compgeom/AngeliniLFLPR15,DBLP:conf/gd/AngeliniFPR13} proved that a linear number of morphing steps suffices, and is sometimes necessary, to construct a morph between any two straight-line planar drawings of a graph. 

Although the results mentioned in the previous paragraph establish strong theoretical foundations for the topic of morphing graph drawings, they produce morphs that are not appealing from a visualization perspective. Namely, 
such algorithms produce drawings that have poor \emph{resolution}, i.e., they may have an exponential ratio of the distances between the farthest and closest pairs of geometric objects (points representing nodes or segments representing edges), even if the same ratio is polynomially bounded in the initial and final drawings. Indeed, most of the above cited papers mention the problem of constructing morphs with bounded resolution as the main challenge in this research area.

The only paper we are aware of where the resolution problem has been successfully addressed is the one by Barrera-Cruz et al.~\cite{DBLP:conf/gd/Barrera-CruzHL14}, who showed how to construct a morph with polynomially-bounded resolution between two \emph{Schnyder drawings} $\Gamma_0$ and $\Gamma_1$ of the same planar triangulation. 
The model they use in order to ensure a bound on the resolution requires that $\Gamma_0=\Delta_0,\Delta_1,\dots,\Delta_{k}=\Gamma_1$ are \emph{grid drawings}, i.e., vertices have integer coordinates, and 
the resolution is measured by comparing the area of $\Gamma_0$ and $\Gamma_1$ with the area of the $\Delta_i$'s.
We remark that morphs between planar orthogonal drawings of maximum-degree-$4$ planar graphs, like those in~\cite{DBLP:journals/talg/BiedlLPS13,DBLP:conf/compgeom/GoethemV18}, inherently have polynomial resolution.

In this paper we show how to construct morphs of tree drawings that simultaneously achieve a reduced number of morphing steps and a polynomially-bounded resolution. Adopting the setting of~\cite{DBLP:conf/gd/Barrera-CruzHL14}, we assume that $\Gamma_0$ and $\Gamma_1$ are grid drawings and we ensure that each morphing step produces a grid drawing.

We present three algorithms. The first two algorithms construct morphs between any two strictly-upward straight-line planar grid drawings $\Gamma_0$ and $\Gamma_1$ of $n$-node rooted trees; \emph{strictly-upward} drawings are such that each node lies above its children. Both algorithms construct morphs in which each intermediate grid drawing has linear width and height, where the input size is measured by $n$ and by the width and the height of $\Gamma_0$ and $\Gamma_1$. The first algorithm employs $\Theta(n)$ morphing steps. The second algorithm employs $\Theta(1)$ morphing steps, however it only applies to binary trees. The third algorithm allows us to achieve our main result, namely that for any two straight-line planar grid drawings $\Gamma_0$ and $\Gamma_1$ of an $n$-node tree, there is a planar morph with $\Theta(n)$ morphing steps between $\Gamma_0$ and $\Gamma_1$ such that each intermediate grid drawing has polynomial area, where the input size is again measured by $n$ and by the width and the height of $\Gamma_0$ and $\Gamma_1$.


The first algorithm uses recursion; namely, it eliminates a leaf in the tree, it recursively morphs the drawings of the remaining tree and it then reintroduces the removed leaf in suitable positions during the morph. The second algorithm morphs the given drawings by independently changing their $x$- and $y$-coordinates; this technique is reminiscent of a recent paper by Da Lozzo et al.~\cite{DBLP:conf/gd/LozzoBFPR18}. Finally, the third algorithm scales the given drawings up in order to make room for a bottom-up modification of each drawing into a ``canonical'' drawing of the tree.

We remark that, although tree drawing algorithms are well investigated in Graph Drawing, morphs of tree drawings have not been the subject of research until now, with the exception of the recent work by Arseneva et al.~\cite{DBLP:conf/gd/ArsenevaBCDDFLT18}, who showed how to construct a three-dimensional crossing-free morph between two straight-line planar drawings of an $n$-node tree in $O(\log n)$ morphing steps. 

The rest of the paper is organized as follows. In \cref{se:preliminaries} we present some definitions and preliminaries. In \cref{se:upward} we present our results on small-area upward planar morphs between strictly-upward straight-line planar grid drawings of rooted trees. In \cref{se:general} we present our main result on small-area planar morphs between straight-line planar grid drawings of trees. Finally, in \cref{se:conclusions} we conclude, present some open problems, and argue about the generality of the model adopted in this paper. Namely, we prove that the problem of constructing a planar morph with polynomial resolution between two planar straight-line drawings of the same graph can be reduced to the problem of constructing a planar morph with polynomial area between two planar straight-line grid drawings of the same graph.

\section{Preliminaries}\label{se:preliminaries}

In this section we introduce some definitions and preliminaries; see also~\cite{DETT}. 

\paragraph{Trees.} The node and edge sets of a tree $T$ are denoted by $V(T)$ and $E(T)$, respectively. The \emph{degree} $\deg(v)$ of a node $v$ of $T$ is the number of its neighbors. In an \emph{ordered} tree, a counter-clockwise order of the edges incident to each node is specified. 

A \emph{rooted tree} $T$ is a tree with one distinguished node, which is called \emph{root} and is denoted by $r(T)$. For any node $u \in V(T)$ with $u\neq r(T)$, consider the unique path from $u$ to $r(T)$ in $T$; the \emph{ancestors} of $u$ are the nodes of such a path, the \emph{proper ancestors} of $u$ are the ancestors different from $u$ itself, and the \emph{parent} $p(u)$ of $u$ is the proper ancestor of $u$ which is adjacent to $u$. For any two nodes $u,v\in V(T)$, the \emph{lowest common ancestor} is the ancestor of $u$ and $v$ whose graph-theoretic distance from $r(T)$ is maximum. For any node $u\in V(T)$ with $u\neq r(T)$, the \emph{children} of $u$ are the neighbors of $u$ different from $p(u)$; the \emph{children} of $r(T)$ are all its neighbors. The nodes that have children are called \emph{internal}; a non-internal node is a \emph{leaf}.
For any node $u\in V(T)$ with $u\neq r(T)$, the \emph{subtree} $T_u$ of $T$ rooted at $u$ is defined as follows: remove from $T$ the edge $(u,p(u))$, thus separating $T$ in two trees; the one containing $u$ is the subtree of $T$ rooted at $u$. If each node of $T$ has at most two children, then $T$ is a \emph{binary tree}. 

An \emph{ordered rooted tree} is a tree that is rooted and ordered. In an ordered rooted tree $T$, for each node $u\in V(T)$, a \emph{left-to-right} (linear) order $u_1,\dots,u_k$ of the children of $u$ is specified.
If $T$ is binary then the first (second) child in the left-to-right order of the children of any node $u$ is the \emph{left} (\emph{right}) \emph{child} of $u$, and the subtree rooted at the left (right) child of $u$ is the \emph{left} (\emph{right}) \emph{subtree} of $u$.

\paragraph{Tree drawings.} In a \emph{straight-line drawing} $\Gamma$ of a tree $T$ each node $u$ is represented by a point of the plane (whose coordinates are denoted by $x_{\Gamma}(u)$ and $y_{\Gamma}(u)$) and each edge is represented by a straight-line segment between its end-points. All the drawings considered in this paper are straight-line, even when not specified. In a \emph{planar} drawing no two edges intersect except, possibly, at common end-points. For a rooted tree $T$, a \emph{strictly-upward} drawing $\Gamma$ is such that each edge $(u,p(u))\in E(T)$ is represented by a curve monotonically increasing in the $y$-direction from $u$ to $p(u)$; if $\Gamma$ is a straight-line drawing, this is equivalent to requiring that $y_{\Gamma}(u)<y_{\Gamma}(p(u))$. For an ordered tree $T$, an \emph{order-preserving} drawing $\Gamma$ is such that, for each node $u\in V(T)$, the counter-clockwise order of the edges incident to $u$ in $\Gamma$ is the same as the order associated to $u$ in $T$. Note that a strictly-upward drawing $\Gamma$ of an ordered rooted tree $T$ is order-preserving if and only if, for each node $u\in V(T)$, the edges from $u$ to its children enter $u$ in the left-to-right order associated with $u$. 

The \emph{bounding box} of a drawing $\Gamma$ is the smallest axis-parallel rectangle enclosing $\Gamma$. In a \emph{grid} drawing $\Gamma$ each node has integer coordinates; then the \emph{width} and the \emph{height} of $\Gamma$, denoted by $w(\Gamma)$ and $h(\Gamma)$, respectively, are the number of grid columns and rows intersecting the bounding box of $\Gamma$, while the \emph{area} of $\Gamma$ is its width times its height. 
For a node $v$ in a drawing $\Gamma$, an \emph{$\ell$-box centered at $v$} is the convex hull of the square whose corners are $(x_\Gamma(v) \pm \frac{\ell}{2}, y_\Gamma(v) \pm \frac{\ell}{2})$.

\paragraph{Morphs.} A \emph{morph} between two straight-line drawings $\Gamma_0$ and $\Gamma_1$ of a graph $G$ is a continuously changing family of drawings $\{\Gamma_t\}$ of $G$ indexed by time $t\in [0,1]$, such that the drawing at time $t=0$ is $\Gamma_0$ and the drawing at time $t=1$ is $\Gamma_1$. A morph is \emph{planar} if all its intermediate drawings are planar. A morph between two strictly-upward drawings of a rooted tree is \emph{upward} if all its intermediate drawings are strictly-upward.  A morph is \emph{linear} if each node moves along a straight-line trajectory at constant speed. Whenever the linear morph between two straight-line planar drawings $\Gamma_0$ and $\Gamma_1$ of a graph $G$ is not planar, one is usually interested in the construction of a piecewise-linear morph with small complexity between $\Gamma_0$ and $\Gamma_1$. This is formalized by defining a \emph{morph} between $\Gamma_0$ and $\Gamma_1$ as a sequence $\langle \Gamma_0=\Delta_0,\Delta_1,\dots,\Delta_k=\Gamma_1 \rangle$ of drawings of $G$ such that the linear morph $\langle \Delta_{i-1},\Delta_{i} \rangle$ is planar, for $i=1,\dots,k$; each linear morph $\langle \Delta_{i-1},\Delta_{i} \rangle$ is called a \emph{morphing step} or simply a \emph{step}.  

The \emph{width} $w(\mathcal M)$ of a morph $\mathcal M=\langle \Delta_0,\Delta_1,\dots,\Delta_k \rangle$, where $\Delta_i$ is a grid drawing, for $i=0,1,\dots,k$, is equal to $\max \{w(\Delta_0),w(\Delta_1),\dots,w(\Delta_k)\}$. The \emph{height} $h(\mathcal M)$ of $\mathcal M$ and the area of $\mathcal M$ are defined analogously.

The algorithms we design in this paper receive in input two order-preserving straight-line planar grid drawings $\Gamma_0$ and $\Gamma_1$ of an ordered tree and construct morphs $\langle \Gamma_0=\Delta_0,\Delta_1,\dots,\Delta_k=\Gamma_1 \rangle$ with few steps and small area.

\begin{remark}
	A necessary and sufficient condition for the existence of a planar morph between two straight-line planar drawings $\Gamma_0$ and $\Gamma_1$ of a tree $T$ is that they are ``topologically-equivalent'', i.e., the counter-clockwise order of the edges incident to each node $u\in V(T)$ is the same in $\Gamma_0$ and $\Gamma_1$. In order to better exploit standard terminology about tree drawings, we ensure that $\Gamma_0$ and $\Gamma_1$ are topologically-equivalent by assuming that $T$ is ordered and that $\Gamma_0$ and $\Gamma_1$ are order-preserving drawings; hence, dealing with ordered trees and with order-preserving drawings is not a loss of generality.
\end{remark}

\begin{remark}
	The width and the height of the morphs we construct are expressed not only in terms of the number of nodes of the input tree $T$, but also in terms of the width and the height of the input drawings $\Gamma_0$ and $\Gamma_1$ of $T$; this is necessary, given that $\max\{w(\Gamma_0),w(\Gamma_1)\}$ and  $\max\{h(\Gamma_0),h(\Gamma_1)\}$ are obvious lower bounds for the width and the height of any morph between $\Gamma_0$ and $\Gamma_1$, respectively.   
\end{remark}

\begin{remark}
	The morphs $\langle \Delta_0,\Delta_1,\dots,\Delta_k \rangle$ we construct in this paper are such that $\Delta_0,\Delta_1,\dots,\Delta_k$ are \emph{grid} drawings, even when not explicitly specified.
\end{remark}

In the following we introduce two tools we are going to use later. First, we observe that whether a linear morph is upward only depends on the upwardness of the initial and final drawings of the morph.

\begin{observation} \label{obs:upward-stays-upward}
	Let $\Gamma_0$ and $\Gamma_1$ be two strictly-upward straight-line drawings of a rooted tree $T$. Then the linear morph $\langle \Gamma_0,\Gamma_1 \rangle$ is upward.
\end{observation}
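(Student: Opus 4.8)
The plan is to reduce the claim to a direct computation on $y$-coordinates, exploiting the fact that a linear morph interpolates each coordinate affinely in the time parameter $t$. Recall that a drawing is strictly-upward precisely when $y_{\Gamma}(u) < y_{\Gamma}(p(u))$ for every non-root node $u$, so it suffices to verify this single inequality for each edge $(u,p(u))$ and for every intermediate time $t \in [0,1]$.

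First I would fix an arbitrary non-root node $u$ with parent $p(u)$, and write down the position of each node at time $t$ in the linear morph. Since every node moves along a straight-line trajectory at constant speed, the $y$-coordinate at time $t$ is the convex combination
\begin{equation*}
y_{\Gamma_t}(v) = (1-t)\,y_{\Gamma_0}(v) + t\,y_{\Gamma_1}(v).
\end{equation*}
Subtracting the expression for $u$ from that for $p(u)$, I obtain
\begin{equation*}
y_{\Gamma_t}(p(u)) - y_{\Gamma_t}(u) = (1-t)\bigl(y_{\Gamma_0}(p(u)) - y_{\Gamma_0}(u)\bigr) + t\bigl(y_{\Gamma_1}(p(u)) - y_{\Gamma_1}(u)\bigr).
\end{equation*}

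The next step is to observe that both parenthesized quantities on the right-hand side are strictly positive: the first because $\Gamma_0$ is strictly-upward, the second because $\Gamma_1$ is strictly-upward. For $t \in [0,1]$ the coefficients $1-t$ and $t$ are nonnegative and not both zero, so the right-hand side is a convex combination of two strictly positive numbers and is therefore itself strictly positive. Hence $y_{\Gamma_t}(u) < y_{\Gamma_t}(p(u))$ for all $t \in [0,1]$, which is exactly the strict-upwardness of $\Gamma_t$ along the edge $(u,p(u))$. Since $u$ was arbitrary, every intermediate drawing $\Gamma_t$ is strictly-upward, and thus the linear morph $\langle \Gamma_0, \Gamma_1 \rangle$ is upward by definition.

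I do not expect any genuine obstacle here: the entire content is the elementary fact that a convex combination of positive reals stays positive, applied edge-by-edge. The only point worth stating carefully is that the inequality must be checked as a \emph{strict} inequality at the endpoints as well, but this follows immediately since at $t=0$ and $t=1$ the difference equals the (positive) upward gap of $\Gamma_0$ and $\Gamma_1$ respectively.
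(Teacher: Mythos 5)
Your proof is correct and follows essentially the same route as the paper's: both fix an edge $(u,p(u))$, write the $y$-coordinate at time $t$ as the convex combination $(1-t)\,y_{\Gamma_0}+t\,y_{\Gamma_1}$, and conclude that the strictly positive upward gap is preserved because a convex combination of strictly positive numbers is strictly positive. No substantive difference.
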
	

\begin{proof}
	Assume that the morph $\langle \Gamma_0,\Gamma_1 \rangle$ happens between the time instants $t=0$ and $t=1$. For any $t\in [0,1]$, denote by $\Gamma_t$ the drawing of $T$ in $\langle \Gamma_0,\Gamma_1 \rangle$ at time $t$.
	
	Consider any edge $(p(v),v)$ of $T$. Since $\Gamma_0$ and $\Gamma_1$ are strictly-upward, it follows that $y_{\Gamma_0}(p(v))>y_{\Gamma_0}(v)$ and $y_{\Gamma_1}(p(v))>y_{\Gamma_1}(v)$. Hence, at any time instant $t\in [0,1]$ of the morph $\langle \Gamma_0,\Gamma_1 \rangle$, we have $y_{\Gamma_t}(p(v))=(1-t)\cdot y_{\Gamma_0}(p(v)) + t \cdot y_{\Gamma_1}(p(v)) > (1-t)\cdot  y_{\Gamma_0}(v) + t \cdot y_{\Gamma_1}(v) = y_{\Gamma_t}(v)$. It follows that the drawing $\Gamma_t$ is strictly-upward, and hence that $\langle \Gamma_0,\Gamma_1 \rangle$ is an upward morph.
\end{proof}

The planarity of a morph can not be ensured as simply as its upwardness. However, if the initial and final drawings of the morph satisfy some further conditions, it turns out that planarity is actually guaranteed; this is similar to a lemma by Da Lozzo {\em et al.}~\cite{DBLP:conf/gd/LozzoBFPR18}.

\begin{lemma}\label{le:unidirectional-morph}
	Let $\Gamma_0$ and $\Gamma_1$ be two order-preserving strictly-upward straight-line planar drawings of a rooted ordered tree $T$. Suppose that, for each node $v\in V(T)$, we have $y_{\Gamma_0}(v)=y_{\Gamma_1}(v)$. Then the linear morph $\langle \Gamma_0,\Gamma_1 \rangle$ is planar.
\end{lemma}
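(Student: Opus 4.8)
The plan is to exploit the fact that, since $y_{\Gamma_0}(v)=y_{\Gamma_1}(v)$ for every node $v$, the $y$-coordinate of each node is constant throughout the morph, so every node moves horizontally and, by \cref{obs:upward-stays-upward}, every intermediate drawing $\Gamma_t$ is strictly-upward. In particular no edge is ever horizontal, so at a \emph{generic} height $h$ (one passing through no node) the set $S_h$ of edges crossing the horizontal line $\ell_h\colon y=h$ is the same in every $\Gamma_t$, being determined by the (fixed) $y$-coordinates alone. For an edge $e\in S_h$ let $\xi_e(h,t)$ denote the $x$-coordinate of the single point $e\cap\ell_h$ at time $t$; since the endpoints of $e$ move linearly in $t$ and $e$ meets $\ell_h$ at a fixed fraction of its length, $\xi_e(h,t)$ is an affine function of $t$. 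It therefore suffices to prove that $\Gamma_t$ is planar for every $t$, and I would derive this from a single invariant: the left-to-right order of $S_h$ along $\ell_h$ is the same in $\Gamma_0$ and $\Gamma_1$.

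Next I would show this invariant is enough. Fix $e_1,e_2\in S_h$. Because $\ell_h$ avoids all nodes and both $\Gamma_0$ and $\Gamma_1$ are planar, the points $e_1\cap\ell_h$ and $e_2\cap\ell_h$ are distinct at $t=0$ and at $t=1$, so $\xi_{e_1}(h,t)-\xi_{e_2}(h,t)$ is nonzero at both endpoints; if moreover it has the same sign there, then---being affine in $t$---it is nonzero with that sign for all $t\in[0,1]$. Hence no two edges of $S_h$ ever meet on $\ell_h$. Ranging over all generic $h$ shows that at no time and no generic height do two edges coincide; a crossing of two independent edges, the coincidence of two nodes, or an edge passing through a node would each force such a coincidence at all nearby generic heights, so every $\Gamma_t$ is planar and the morph is planar.

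The crux is to prove the invariant, by showing the order of $S_h$ along $\ell_h$ is fixed by the ordered-tree structure alone, hence identical in the two order-preserving drawings. Write $e_1=(p(u_1),u_1)$ and $e_2=(p(u_2),u_2)$ with $u_1,u_2$ below $\ell_h$ and their parents above. First, neither $u_i$ can be an ancestor of the other: if $u_1$ were an ancestor of $u_2$, then all of $T_{u_1}$---including $p(u_2)$---would lie at height at most $y(u_1)<h$, contradicting $e_2\in S_h$. So the lowest common ancestor $z$ of $u_1$ and $u_2$ is a proper ancestor of both, and since $z$ lies on the path from $u_1$ to the root no lower than $p(u_1)$ we get $y(z)>h$. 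The upward path from $u_i$ to $z$ is $y$-monotone and crosses $\ell_h$ exactly once, precisely at $e_i$, and the two paths leave $z$ through two distinct children $c_1,c_2$ whose left-to-right order is fixed by the order associated with $z$. It remains to argue the geometric fact that, in any order-preserving strictly-upward planar drawing and at every height below $z$, the whole of $T_{c_1}$ lies to the left of the whole of $T_{c_2}$ whenever $c_1$ precedes $c_2$; this places $e_1\cap\ell_h$ to the left of $e_2\cap\ell_h$ exactly according to the combinatorial order, in both $\Gamma_0$ and $\Gamma_1$.

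The main obstacle is precisely this last geometric fact. I expect to establish it through a Jordan-curve argument: just below $z$ the edges $(z,c_1)$ and $(z,c_2)$ leave $z$ in the prescribed left-to-right order, the subtrees $T_{c_1}$ and $T_{c_2}$ are disjoint by planarity, and upwardness prevents either subtree from wrapping back around $z$ to the opposite side, so the rotation at $z$ propagates to a consistent left-to-right separation at every lower height. The degenerate situations excluded in the second step---non-generic heights, coincident nodes, and edges through nodes---are then dispatched as limits of the generic analysis.
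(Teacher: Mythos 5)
Your proof is correct in strategy but takes a genuinely different route from the paper's. The paper argues by contradiction on a hypothetical crossing point $q$ of two edges and invokes Corollary~7.2 of Alamdari et al.\ (sidedness with respect to a line through two moving points is preserved under unidirectional morphs) to show that $q$ cannot end up on the other edge; you instead slice the drawing with generic horizontal lines and observe that, because the $y$-coordinates are frozen, each edge meets a fixed line $\ell_h$ at a fixed fraction of its length, so the $x$-coordinate of that intersection is affine in $t$ and the left-to-right order along $\ell_h$ at $t=0$ and $t=1$ propagates to all $t$. This makes the argument self-contained and arguably more transparent. Both proofs ultimately rest on the same combinatorial invariant --- that the relative horizontal order of the two relevant objects at a common height agrees in $\Gamma_0$ and $\Gamma_1$ --- and both justify it the same way, via the two $y$-monotone, non-crossing paths to the lowest common ancestor leaving it in the prescribed left-to-right order; your reduction to this invariant (ruling out the ancestor case, placing the LCA above $\ell_h$) is sound, and note that you only need the path version of your ``subtree separation'' fact, which follows from a one-line continuity-of-sign argument rather than a full Jordan-curve analysis. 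One caveat: the degenerate events are not all ``limits of the generic analysis.'' If a node lands in the interior of a non-incident edge (or two equal-height nodes coincide), the edge--edge slice differences merely tend to $0$ at the boundary of the common height range, which an affine function can do without vanishing in the interior, so no contradiction is extracted from generic slices alone; these cases need the same affine-in-$t$ sign argument applied directly to the pair consisting of the node and the edge at the node's own height, together with the corresponding order invariance. This is routine with your machinery, and the paper's own proof is no more explicit about such degeneracies.
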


\begin{proof}
	The proof exploits Corollary 7.2~in~\cite{DBLP:journals/siamcomp/AlamdariABCLBFH17}, which we introduce in the following. 
	
	Assume that the morph $\langle \Gamma_0,\Gamma_1 \rangle$ happens between the time instants $t=0$ and $t=1$. For any $t\in [0,1]$, denote by $\Gamma_t$ the drawing of $T$ in $\langle \Gamma_0,\Gamma_1 \rangle$ at time $t$ and denote by $u_t$ the position of a node $u$ at time $t$, where $u_t=(1-t) \cdot u_0 + t \cdot u_1$.
	
	
	Now, consider a point $q_0$ \emph{of} $\Gamma_0$, that is a point that represents a node of $T$ or that belongs to a segment representing an edge of $T$ in $\Gamma_0$. The morph $\langle \Gamma_0,\Gamma_1 \rangle$ moves $q_0$ to a point $q_1$ of $\Gamma_1$. This is evident if $q_0$ represents a node of $T$ in $\Gamma_0$; if $q_0$ is a point of the segment representing an edge $(u,v)$ in $\Gamma_0$, then $q_0=(1-\gamma) \cdot u_0+ \gamma \cdot v_0$, for some value $0<\gamma<1$, and the point $q_1=(1-\gamma) \cdot u_1 + \gamma \cdot v_1$ indeed belongs to the segment representing $(u,v)$ in $\Gamma_1$. More in general, for any $t\in [0,1]$, the point $q_t=(1-t) \cdot q_0 + t \cdot q_1$ belongs to the segment representing $(u,v)$ in $\Gamma_t$.
	
	Now consider any three points $q_0$, $r_0$, and $s_0$ of $\Gamma_0$, which are moved to three points $q_1$, $r_1$, and $s_1$ of $\Gamma_1$ by $\langle \Gamma_0,\Gamma_1 \rangle$, respectively. Suppose that $q_i$ is to the left (to the right) of the  line through $r_i$ and $s_i$, for each $i=0,1$. Then Corollary 7.2~in~\cite{DBLP:journals/siamcomp/AlamdariABCLBFH17} ensures that $q_t$ is to the left (resp.\ to the right) of the line through $r_t$ and $s_t$, for any $t\in [0,1]$. The applicability of Corollary 7.2~from~\cite{DBLP:journals/siamcomp/AlamdariABCLBFH17} to the morph $\langle \Gamma_0,\Gamma_1 \rangle$ is a consequence of the fact that $\langle \Gamma_0,\Gamma_1 \rangle$ is \emph{unidirectional}, that is, all the node trajectories are parallel (indeed, each node $v\in V(T)$ moves along a horizontal line, given that $y_{\Gamma_0}(v)=y_{\Gamma_1}(v)$). 
	
	We prove the planarity of $\langle \Gamma_0,\Gamma_1 \rangle$. Suppose, for a contradiction, that two edges $(u,p(u))$ and $(v,p(v))$ of $T$ cross during $\langle \Gamma_0,\Gamma_1 \rangle$. For the sake of the simplicity of notation, let $u'=p(u)$ and $v'=p(v)$. Then there exists a point $q_0$ that belongs the segment $\overline{u_0u'_0}$ and such that, for some $t\in (0,1)$, the point $q_{t}=(1-{t}) \cdot q_0 + {t} \cdot q_1$ is a point of the segment $\overline{u_{t}u'_{t}}$ and also a point of the segment $\overline{v_{t} v'_{t}}$. Since every node of $T$ only moves horizontally in $\langle \Gamma_0,\Gamma_1 \rangle$, we have that $q_0$ ($q_1$) lies in the strip delimited by the horizontal lines through $v_0$ and $v'_0$ (resp.\ through $v_1$ and $v'_1$). Since $\Gamma_0$ and $\Gamma_1$ are planar, $q_0$ and $q_1$ do not belong to the segments $\overline{v_0v'_0}$ and $\overline{v_1v'_1}$, respectively. It follows that $q_0$ is either to the left or to the right of the line $\ell_0^v$ that passes through $v_0$ and $v'_0$; analogously, $q_1$ is either to the left or to the right of the line $\ell_1^v$ that passes through $v_1$ and $v'_1$. Assume that $q_0$ is to the left of $\ell_0^v$; we claim that this implies that $q_1$ is to the left of $\ell_1^v$. The claim follows from the fact that, both in $\Gamma_0$ and in $\Gamma_1$, the two paths of $T$ from $u$ and $v$ to their lowest common ancestor $w$ are monotone in the $y$-direction (since $\Gamma_0$ and $\Gamma_1$ are strictly-upward), do not cross each other (since $\Gamma_0$ and $\Gamma_1$ are planar), and enter $w$ in the same left-to-right order (since $\Gamma_0$ and $\Gamma_1$ are order-preserving). Now since $q_0$ is to the left of $\ell_0^v$ and $q_1$ is to the left of $\ell_1^v$, Corollary 7.2~from~\cite{DBLP:journals/siamcomp/AlamdariABCLBFH17} implies that $q_{t}$ is to the left of the line that passes through $v_{t}$ and $v'_{t}$, while $q_{t}$ is supposed to be a point of $\overline{v_{t} v'_{t}}$. This contradiction proves the lemma.
\end{proof}

Upward planar morphs between strictly-upward drawings of rooted ordered trees maintain the drawing order-preserving at all times, as proved in the following.

\begin{lemma}\label{le:upwardplanar-implies-order}
	Let $\Gamma_0$ and $\Gamma_1$ be two order-preserving strictly-upward straight-line planar drawings of a rooted ordered tree $T$. Let $\mathcal M$ be any upward planar morph between $\Gamma_0$ and $\Gamma_1$. Then any intermediate drawing of $\mathcal M$ is order-preserving.
\end{lemma}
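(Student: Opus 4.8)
The plan is to reduce the statement to a purely local, per-node claim and then use planarity to forbid the only kind of local change that could destroy the order. First I would recall the characterization stated in \cref{se:preliminaries}: since $\mathcal M$ is upward, every intermediate drawing $\Gamma_t$ is strictly-upward, and for strictly-upward drawings being order-preserving is equivalent to requiring that, at each node $u$, the edges toward the children of $u$ enter $u$ in the left-to-right order prescribed by $T$. Hence it suffices to prove that, for every node $u$ and every $t\in[0,1]$, the children of $u$ appear below $u$ in the correct left-to-right order. Note that the parent edge (when it exists) always points into the open upper half-plane at $u$, while all child edges point into the open lower half-plane, so the parent edge can never swap with a child edge; the only order that can possibly change is the one among the children.

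Then I would fix a node $u$ with children $u_1,\dots,u_k$ (in the left-to-right order associated by $T$) and track the directions of the child edges over time. For $t\in[0,1]$ let $u_t$ and $(u_i)_t$ denote the positions of $u$ and $u_i$ in $\Gamma_t$, and consider the vector $d_i(t)=(u_i)_t-u_t$. Because $\Gamma_t$ is strictly-upward, $d_i(t)$ has strictly negative $y$-component, so it lies in the open lower half-plane for all $t$; this lets me assign to $d_i(t)$ a single-valued, continuous angular coordinate $\alpha_i(t)$ (the cyclic ambiguity of angles disappears once we are confined to a half-plane). The left-to-right order of the children is exactly the order of the $\alpha_i(t)$'s, and each $\alpha_i$ is continuous in $t$ because the morph moves every node continuously.

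The core of the argument is that this order cannot change. At $t=0$ the angles are pairwise distinct and sorted consistently with $u_1,\dots,u_k$, since $\Gamma_0$ is order-preserving. If the order were violated at some time $t_1$, then for some pair $u_i,u_j$ the continuous function $\alpha_i(t)-\alpha_j(t)$ would have opposite signs at $t=0$ and at $t=t_1$, so by the intermediate value theorem it would vanish at some $t^\ast\in(0,t_1)$, giving $\alpha_i(t^\ast)=\alpha_j(t^\ast)$. Equal angles mean $d_i(t^\ast)$ and $d_j(t^\ast)$ point in the same direction, so $(u_i)_{t^\ast}$ and $(u_j)_{t^\ast}$ lie on a common ray emanating from $u_{t^\ast}$; consequently the segments $\overline{u_{t^\ast}(u_i)_{t^\ast}}$ and $\overline{u_{t^\ast}(u_j)_{t^\ast}}$ overlap in more than the single point $u_{t^\ast}$, contradicting the planarity of $\Gamma_{t^\ast}$, which holds because $\mathcal M$ is planar. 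Hence the angular order, and thus the left-to-right order of the children at $u$, is preserved throughout $\mathcal M$; applying this at every node and invoking the characterization above yields that every $\Gamma_t$ is order-preserving.

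I expect the only delicate point to be the bookkeeping that a failure of order-preservation must first manifest as a coincidence of two \emph{child} edges at a single node, rather than through the parent edge or across different nodes; the strict-upwardness (which pins the parent above and the children below at all times) is exactly what rules these out and confines the whole phenomenon to the one-dimensional angular ordering of downward edges, where continuity and the intermediate value theorem apply cleanly.
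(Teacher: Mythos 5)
Your proposal is correct and follows essentially the same route as the paper's proof: reduce to the left-to-right order of the child edges at each node (using strict upwardness throughout the morph), and argue by continuity that a change of order would force two child edges to overlap at some intermediate time, contradicting planarity. Your explicit angular-coordinate/intermediate-value formalization is just a more detailed rendering of the same overlap argument the paper gives.
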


\begin{proof}
	Assume that the morph $\mathcal M$ happens between the time instants $t=0$ and $t=1$. For any $t\in [0,1]$, denote by $\Gamma_t$ the drawing of $T$ in $\mathcal M$ at time $t$. Since $\mathcal M$ is upward, the drawing $\Gamma_t$ is strictly-upward, for any $t\in [0,1]$. Hence, it suffices to prove that, for any internal node $v$ of $T$, the edges from $v$ to its children enter $v$ in $\Gamma_t$ in the left-to-right order associated with $v$; this is indeed true for $t=0$ and $t=1$. Suppose that, for some $t\in (0,1)$, the left-to-right order in which two edges $(u_1,v)$ and $(u_2,v)$ enter $v$ in $\Gamma_t$ is different than in $\Gamma_0$. Since such edges are represented by curves monotonically increasing in the $y$-direction from $u_1$ and $u_2$ to $v$ throughout $\mathcal M$, it follows that there is a time $t^*\in (0,t)$ such that the edges $(u_1,v)$ and $(u_2,v)$ overlap in $\Gamma_{t^*}$. However, this cannot happen due to the planarity of  $\mathcal M$.
\end{proof}	 
	

\section{Upward Planar Morphs of Rooted-Tree Drawings}\label{se:upward}

\newcommand{\osusp}{order-preserving strictly-upward straight-line planar grid\xspace}
In this section we show how to construct small-area morphs between order-preserving strictly-upward straight-line planar grid drawings of rooted ordered~trees.

Our first result shows that such morphs can always be constructed consisting of a linear number of steps. This is obtained via an inductive algorithm which is described in the following.
Let $T$ be an $n$-node rooted ordered tree. The \emph{rightmost path} of $T$ is the maximal path $(s_0,\dots,s_m)$ such that $s_0=r(T)$ and $s_{i}$ is the rightmost child of $s_{i-1}$, for $i=1,\dots,m$. Note that $s_m$ is a leaf, which is called the \emph{rightmost leaf} $l^{\rightarrow}_T$ of $T$. For a straight-line grid drawing $\Gamma$, denote by $\ell_{\Gamma}$ the rightmost vertical line intersecting $\Gamma$; note that $\ell_{\Gamma}$ is a grid column. 

Let $\Gamma_0$ and $\Gamma_1$ be two \osusp drawings of $T$. We inductively construct a morph $\mathcal M$ from $\Gamma_0$ to $\Gamma_1$~as~follows. 

In the base case $n=1$; then $\mathcal M$ is the linear morph $\langle \Gamma_0,\Gamma_1\rangle$.

In the inductive case $n>1$. Let $l=l^{\rightarrow}_T$ be the rightmost leaf of $T$. Let $\pi=p(l)$ be the parent of $l$. Let $T'$ be the $(n-1)$-node tree obtained from $T$ by removing the node $l$ and the edge $(\pi,l)$. Let $\Gamma'_0$ and $\Gamma'_1$ be the drawings of $T'$ obtained from $\Gamma_0$ and $\Gamma_1$, respectively, by removing the node $l$ and the edge $(\pi,l)$. Inductively compute a $k$-step upward planar morph $\mathcal M'=\langle \Gamma'_0=\Delta'_1,\Delta'_2,\dots,\Delta'_{k}=\Gamma'_1\rangle$.

We now construct a morph $\mathcal M=\langle \Gamma_0,\Delta_1,\Delta_2,\dots,\Delta_k,\Gamma_1\rangle$. For each $i=2,3,\dots,k-1$, we define $\Delta_i$ as the drawing obtained from $\Delta'_i$ by placing $l$ one unit below $\pi$ and one unit to the right of $\ell_{\Delta'_i}$. Further, we define $\Delta_1$ ($\Delta_{k}$) as the drawing obtained from $\Delta'_1$ (resp.\ from $\Delta'_k$) by placing $l$ one unit below $\pi$ and one unit to the right of $\ell_{\Gamma_0}$ (resp.\ $\ell_{\Gamma_1}$). Note that the point at which $l$ is placed in $\Delta_1$ (in $\Delta_k$) is one unit to the right of $\ell_{\Delta'_1}$ (resp.\ $\ell_{\Delta'_k}$), similarly as in $\Delta_2, \Delta_3, \dots, \Delta_{k-1}$, except if $l$ is to the right of every other node of $\Gamma_0$ (of $\Gamma_1$); in that case $l$ might be several units to the right of $\ell_{\Delta'_1}$ (resp.\ $\ell_{\Delta'_k}$). This completes the construction of $\mathcal M$. We get the following.



%

\begin{theorem}\label{th:nary-trees-linearsteps-polyarea}
	Let $T$ be an $n$-node rooted ordered tree, and let $\Gamma_0$ and $\Gamma_1$ be two \osusp drawings of $T$. 
	There exists a $(2n-1)$-step upward planar morph $\mathcal M$ from $\Gamma_0$ to $\Gamma_1$ 
	with $h(\mathcal M) = \max\{h(\Gamma_0),h(\Gamma_1)\}$ and $w(\mathcal M) = \max\{w(\Gamma_0),w(\Gamma_1)\} +n-1$.
\end{theorem}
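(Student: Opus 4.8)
The plan is to prove this by induction on $n$, following the recursive construction already described in the text. The construction removes the rightmost leaf $l$, recursively morphs the reduced tree $T'$, and reinserts $l$ at each intermediate drawing. I must verify three things: the step count, the bounds on $w(\mathcal{M})$ and $h(\mathcal{M})$, and — the crux — that every morphing step is upward and planar.

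\smallskip\noindent\textbf{Step count.} In the base case $n=1$ the morph $\langle\Gamma_0,\Gamma_1\rangle$ is a single step, and $2n-1=1$, so the bound holds. Inductively, $\mathcal{M}'$ has at most $2(n-1)-1 = 2n-3$ steps. The morph $\mathcal{M}=\langle\Gamma_0,\Delta_1,\dots,\Delta_k,\Gamma_1\rangle$ prepends $\langle\Gamma_0,\Delta_1\rangle$ and appends $\langle\Delta_k,\Gamma_1\rangle$ to the $\langle\Delta_1,\dots,\Delta_k\rangle$ portion (which mirrors $\mathcal{M}'$ step-for-step, so contributes $2n-3$ steps), giving $2n-3+2 = 2n-1$ steps in total. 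I must double-check the indexing: the $\Delta_i$ for $2\le i\le k-1$ reuse the recursive positions, while $\Delta_1,\Delta_k$ differ only in the placement of $l$ relative to $\Gamma_0,\Gamma_1$, so the internal step structure is preserved.

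\smallskip\noindent\textbf{Height and width.} Since $l$ is always placed one unit below $\pi$, and $\pi$ lies within the existing drawing, adding $l$ never increases the height; thus $h(\mathcal{M}) = h(\mathcal{M}') = \max\{h(\Gamma_0),h(\Gamma_1)\}$ by induction. For the width, $l$ is placed one unit to the right of the rightmost column $\ell$ of the current reduced drawing (with the noted exception at $\Delta_1,\Delta_k$, where $l$ sits one unit right of $\ell_{\Gamma_0}$ or $\ell_{\Gamma_1}$, possibly several units right of $\ell_{\Delta'_1}$). Inductively $w(\mathcal{M}') = \max\{w(\Gamma'_0),w(\Gamma'_1)\}+(n-2)$, and adding one new rightmost column increases this by exactly one to $\max\{w(\Gamma_0),w(\Gamma_1)\}+(n-1)$; I must argue the endpoint placements of $l$ stay within $\max\{w(\Gamma_0),w(\Gamma_1)\}$ columns, which holds because in $\Gamma_0$ (resp.\ $\Gamma_1$) the leaf $l$ already occupies a column no farther right than $w(\Gamma_0)$ (resp.\ $w(\Gamma_1)$).

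\smallskip\noindent\textbf{Upwardness and planarity — the main obstacle.} Each step $\langle\Delta_{i-1},\Delta_i\rangle$ restricted to $T'$ is upward and planar by the inductive hypothesis, so I only need to control the newly added edge $(\pi,l)$. Upwardness of the edge is immediate since $l$ is always one unit below $\pi$, so $y(l)<y(\pi)$ throughout, and \cref{obs:upward-stays-upward} confirms each step stays upward. Planarity is the hard part: I must show the segment $\overline{\pi l}$ never crosses any edge of $T'$ during a step, and that no edge of $T'$ crosses $\overline{\pi l}$. The key geometric fact is that $l$ is placed strictly to the right of every node and edge of the current drawing of $T'$ (column $\ell+1$), while $\pi$ is the rightmost node at its height (being on the rightmost path, $l$ is the rightmost leaf and $\pi=p(l)$ lies on the rightmost path of $T$). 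I would argue that throughout each step both endpoints of $\overline{\pi l}$, and hence the whole segment, remain to the right of the rest of the drawing, so the region swept by $\overline{\pi l}$ is disjoint from the region swept by $T'$. The delicate point is handling the two endpoint steps $\langle\Gamma_0,\Delta_1\rangle$ and $\langle\Delta_k,\Gamma_1\rangle$, where $l$ may move several units horizontally; here I would invoke \cref{le:upwardplanar-implies-order} to ensure the reduced drawings stay order-preserving, so that $\pi$ remains the rightmost point at its $y$-level and the corridor to the right stays clear, letting $l$ slide without collision. Formalizing ``the sweep of $\overline{\pi l}$ stays right of everything'' rigorously — rather than merely asserting it — is where the bulk of the careful argument lies.
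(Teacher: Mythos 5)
Your construction is the paper's construction and your step-count and width arguments match the paper's, but there are two genuine gaps. The first is in the height bound: the claim that ``adding $l$ never increases the height'' is false --- if $\pi$ happens to lie on the bottom row of the intermediate drawing $\Delta'_i$, placing $l$ one unit below $\pi$ adds a row. Moreover, induction gives $h(\mathcal M')=\max\{h(\Gamma'_0),h(\Gamma'_1)\}$, not $\max\{h(\Gamma_0),h(\Gamma_1)\}$, and these can differ; the naive inductive hypothesis does not exclude a drawing $\Delta'_i$ that already attains the maximum height with $\pi$ on its bottom row, which would break the bound. The paper closes this by strengthening the induction: it carries through the recursion the invariant that the vertical distance from every node $u$ to $r(T)$ in each $\Delta_i$ is at most $\max\{d^v_{\Gamma_0}(u,r(T)),d^v_{\Gamma_1}(u,r(T))\}$, and then uses strict upwardness ($d^v_{\Gamma_j}(l,r(T))\geq d^v_{\Gamma_j}(\pi,r(T))+1$) to absorb the extra unit for $l$. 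Without some such strengthened invariant your height argument does not go through.

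The second gap is in planarity, which you correctly identify as the crux but whose central claim is not true as stated: the segment $\overline{\pi l}$ does \emph{not} remain to the right of the rest of the drawing, since $\pi$ is itself a node of $T'$ and the rightmost path of $T'$ continues downward from $\pi$ through the same horizontal strip $[y(\pi)-1,\,y(\pi)]$; arbitrary edges of $T'$ may also traverse that strip. The correct tool is an emptiness statement about the region to the right of the \emph{rightmost path} (\cref{pr:empty-right} in the paper), and even granting it, three separate verifications remain: (i) the endpoint steps, where the paper confines $l$'s motion to a convex subregion of that empty region; (ii) planarity of each static $\Delta_i$, where the paper uses that the strip between the rows of $\pi$ and $l$ contains no grid points, so an edge crossing $(\pi,l)$ would have to exit the relevant triangle through the line $\ell_\pi$ or through $\ell_{\Delta'_i}$, both impossible; and (iii) planarity of each linear morph $\langle\Delta_i,\Delta_{i+1}\rangle$, where a node of $T'$ moving between grid points could a priori sweep through the moving edge $(\pi,l)$ --- the paper rules this out by showing such a trajectory would have to cross the rightmost path, the root's horizontal line, or the vertical line through $l$. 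Your proposal supplies none of (i)--(iii), and your appeal to \cref{le:upwardplanar-implies-order} only yields order-preservation of the intermediate drawings of $\mathcal M'$, which by itself does not imply that the corridor to the right of $\pi$ is clear.
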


\begin{proof}
For a drawing $\Gamma$ of $T$ and for any two nodes $s$ and $t$ of $T$, denote by $d^v_\Gamma(s,t)$ the vertical distance between $s$ and $t$ in $\Gamma$ (that is, the absolute value of the difference between their $y$-coordinates).

We claim that the morph $\mathcal M=\langle \Gamma_0,\Delta_1,\Delta_2,\dots,\Delta_k,\Gamma_1\rangle$ defined before the statement of the theorem satisfies the requirements of the theorem and also satisfies the property that $d^v_{\Delta_i}(u,r(T))\leq \max \{d^v_{\Gamma_0}(u,r(T)),d^v_{\Gamma_1}(u,r(T))\}$, for any $i=1,2,\dots,k$ and for each node $u$ of $T$. 

First, note that $\mathcal M$ has $k+1=2n-1$ steps. This is trivially true if $n=1$ and it is true by induction if $n>1$ since $\mathcal M$ has $2$ steps more than $\mathcal M'$.	

For each $i=1,2,\dots,k$, the drawing $\Delta_i$ is straight-line by construction; further, $\Delta_i$ is strictly-upward since $\Delta'_i$ is strictly-upward and since $l$ lies one unit below $\pi$ in $\Delta_i$. By \cref{obs:upward-stays-upward}, the morph $\mathcal M$ is upward. Since $\ell_{\Gamma_0}, \ell_{\Delta'_2},\ell_{\Delta'_3},\dots,\ell_{\Delta'_{k-1}},\ell_{\Gamma_1}$ are grid columns and since $\pi$ is placed at a grid point in each of $\Delta'_1,\Delta'_2,\dots,\Delta'_k$, we have that $l$ is placed at a grid point in each of $\Delta_1,\Delta_2,\dots,\Delta_k$, hence each of such drawings is a grid drawing.
 
We now analyze $w(\mathcal M)$. If $n=1$, then $\mathcal M=\langle \Gamma_0,\Gamma_1\rangle$, hence $w(\mathcal M)=\max\{w(\Gamma_0),w(\Gamma_1)\} +n-1$. Assume next that $n\geq 2$. Consider any $i\in \{2,3,\dots,k-1\}$. By construction $\Delta_i$ occupies one more grid column than $\Delta'_i$. By induction $w(\Delta'_i)\leq \max\{w(\Gamma'_0),w(\Gamma'_1)\} +n-2$; further, $w(\Gamma'_0)\leq w(\Gamma_0)$ and $w(\Gamma'_1)\leq w(\Gamma_1)$. Hence, $w(\Delta_i)\leq  \max\{w(\Gamma_0),w(\Gamma_1)\} +n-1$, as required. In order to bound $w(\Delta_1)$, the argument is the same as the one above if $\ell_{\Gamma_0}$ contains a node of $\Delta'_1$. Otherwise, $\ell_{\Gamma_0}$ contains $l$ and no other node of $\Gamma_0$; then, by construction, we have $w(\Delta_1)= w(\Gamma_0)+1\leq \max\{w(\Gamma_0),w(\Gamma_1)\} +n-1$, as required. The proof that $w(\Delta_k)\leq \max\{w(\Gamma_0),w(\Gamma_1)\} +n-1$ is analogous.
	
Next, consider any index $i\in \{1,2,\dots,k\}$ and any node $u$ of $T$. We prove that $d^v_{\Delta_i}(u,r(T))\leq \max \{d^v_{\Gamma_0}(u,r(T)),d^v_{\Gamma_1}(u,r(T))\}$.

\begin{itemize}
	\item If $u\neq l$, then we have $d^v_{\Delta_i}(u,r(T))= d^v_{\Delta'_i}(u,r(T'))$. By induction, we have $d^v_{\Delta'_i}(u,r(T'))\leq \max \{d^v_{\Gamma'_0}(u,r(T')),d^v_{\Gamma'_1}(u,r(T'))\}$. Since $d^v_{\Gamma'_0}(u,r(T'))= d^v_{\Gamma_0}(u,r(T))$ and $d^v_{\Gamma'_1}(u,r(T'))= d^v_{\Gamma_1}(u,r(T))$, we have $d^v_{\Delta_i}(u,r(T))\leq \max \{d^v_{\Gamma_0}(u,r(T)),d^v_{\Gamma_1}(u,r(T))\}$ as required. 
	\item By construction, we have $d^v_{\Delta_i}(l,r(T))= d^v_{\Delta_i}(\pi,r(T))+1$. Further, $d^v_{\Delta_i}(\pi,r(T))=d^v_{\Delta'_i}(\pi,r(T'))$. By induction we have $d^v_{\Delta'_i}(\pi,r(T'))\leq \max \{d^v_{\Gamma'_0}(\pi,r(T')),d^v_{\Gamma'_1}(\pi,r(T'))\}$. Since $d^v_{\Gamma'_0}(\pi,r(T'))=d^v_{\Gamma_0}(\pi,r(T))$ and $d^v_{\Gamma'_1}(\pi,r(T'))=d^v_{\Gamma_1}(\pi,r(T))$, we have $d^v_{\Delta_i}(l,r(T)) \leq \max \{d^v_{\Gamma_0}(\pi,r(T)),d^v_{\Gamma_1}(\pi,r(T))\}+1$. Since $\Gamma_0$ and $\Gamma_1$ are strictly-upward, we have $d^v_{\Gamma_0}(l,r(T))\geq d^v_{\Gamma_0}(\pi,r(T))+1$ and $d^v_{\Gamma_1}(l,r(T))\geq d^v_{\Gamma_1}(\pi,r(T))+1$, hence $d^v_{\Delta_i}(l,r(T)) \leq \max \{d^v_{\Gamma_0}(l,r(T)),d^v_{\Gamma_1}(l,r(T))\}$ as required.
\end{itemize}

Since $d^v_{\Delta_i}(u,r(T))\leq \max \{d^v_{\Gamma_0}(u,r(T)),d^v_{\Gamma_1}(u,r(T))\}$, for any index $i\in \{1,2,\dots,k\}$ and any node $u$ of $T$, we directly get $h(\Delta_i)\leq \max\{h(\Gamma_0),h(\Gamma_1)\}$ and hence $h(\mathcal M) = \max\{h(\Gamma_0),h(\Gamma_1)\}$.

It remains to prove the planarity of $\mathcal M$; then Lemma~\ref{le:upwardplanar-implies-order} implies that the drawing of $T$ is order-preserving throughout $\mathcal M$. The following property is useful (refer to \cref{fig:upward-linearsteps-polyarea-property}). 

\begin{figure}[tb]
	\centering
	\subfloat[]{\includegraphics[]{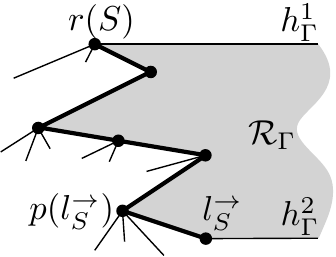}\label{fig:upward-linearsteps-polyarea-property}}\hfil
	\subfloat[]{\includegraphics[]{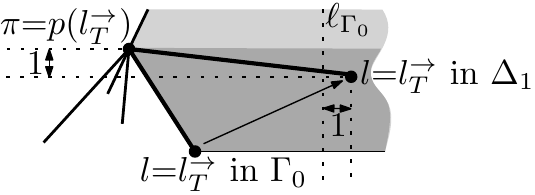}\label{fig:upward-linearsteps-polyarea-planar}}\\
	\subfloat[]{\includegraphics[]{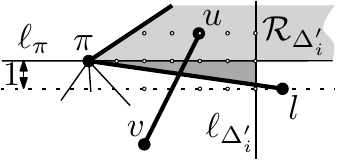}\label{fig:upward-linearsteps-intermediate}}\hfil
	\subfloat[]{\includegraphics[]{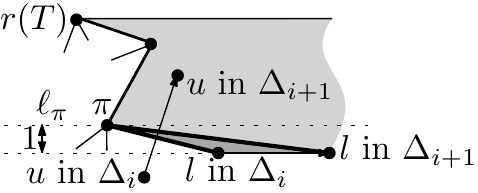}\label{fig:upward-linearsteps-intermediate-new}}
	\caption{(a) Illustration for \cref{pr:empty-right}. The rightmost path $R_S$ of $S$ is represented by a thick line. (b) Illustration for the proof of planarity of $\langle \Gamma_0,\Delta_1\rangle$. The gray region is $\mathcal R_{\Gamma_0}$; the darker gray region is $\Pi_{\Gamma_0}$. (c) Illustration for the proof of the planarity of $\Delta_i$. The gray region is $\mathcal R_{\Delta'_i}$; the darker gray region is the triangle $D$. (d) Illustration for the proof of the planarity of $\langle \Delta_i,\Delta_{i+1}\rangle$.}
\end{figure}

\begin{property} \label{pr:empty-right}
Consider any \osusp drawing $\Gamma$ of an ordered rooted tree $S$. Let $h^1_{\Gamma}$ and $h^2_{\Gamma}$ be the horizontal half-lines starting at the root $r(S)$ and at the rightmost leaf $l^{\rightarrow}_S$ of $S$, respectively, and directed rightwards. Let $\mathcal R_{\Gamma}$ be the region of the plane which is delimited by the rightmost path $R_S$ of $S$ from the left and by $h^1_{\Gamma}$ and $h^2_{\Gamma}$ from above and below, respectively. Then no node or edge of $S$, other than those of $R_S$,~intersects~$\mathcal R_{\Gamma}$. 
\end{property}

\begin{proof}
	We first argue about possible intersections with the interior of $\mathcal R_{\Gamma}$. 
	
	\begin{itemize}
		\item Suppose, for a contradiction, that an edge of $S$ intersects the interior of $\mathcal R_{\Gamma}$. Since $S$ is connected, it follows that there is an edge $e$ of $S$ that intersects the interior of $\mathcal R_{\Gamma}$ {\em and} its boundary. Since $\Gamma$ is planar, the edge $e$ does not cross $R_S$. Since $R_S$ is the rightmost path of $S$, the edge $e$ does not share a node with $R_S$. Since $\Gamma$ is strictly-upward and $r(S)$ is the root of $S$, we have that $e$ does not intersect $h^1_{\Gamma}$ other than, possibly, at $r(S)$; however, $r(S)$ belongs to $R_S$, and we already ruled out the possibility that $e$ shares a node with $R_S$. Finally, if $e$ intersects $\mathcal R_{\Gamma}$ and $h^2_{\Gamma}$, then, since $\Gamma$ is strictly-upward, the path from $e$ to $r(S)$ intersects $R_S$, hence there is an edge $e'$ of $S$ that intersects $\mathcal R_{\Gamma}$ and $R_S$, a case which we already ruled out; we thus get a contradiction.
		\item Suppose, for a contradiction, that a node of $S$ lies in the interior of $\mathcal R_{\Gamma}$. Since $S$ is connected, it follows that an edge of $S$ also intersects the interior of $\mathcal R_{\Gamma}$, a case which we already ruled out; we thus get a contradiction. 
	\end{itemize}
	
	We now argue about possible intersections with the boundary of $\mathcal R_{\Gamma}$. 
	
	\begin{itemize}
		\item Suppose, for a contradiction, that a node $u$ not in $R_S$ lies on the boundary of $\mathcal R_{\Gamma}$. Since $\Gamma$ is planar, the node $u$ does not lie on $R_S$. Since $\Gamma$ is strictly-upward and the root $r(S)$ of $S$ is a node of $R_S$, the node $u$ does not lie on $h^1_{\Gamma}$. Finally, if $u$ lies on $h^2_{\Gamma}$, then since $\Gamma$ is strictly-upward, the edge from $u$ to its parent intersects the interior of $\mathcal R_{\Gamma}$, a case which we already ruled out; we thus get a contradiction.
		\item Suppose, for a contradiction, that the interior of an edge $e$ not in $R_S$ intersects the boundary $\mathcal R_{\Gamma}$. Since $\Gamma$ is planar, the interior of $e$ does not intersect $R_S$. If the interior of $e$ intersects $h^1_{\Gamma}$ or $h^2_{\Gamma}$, then, since $\Gamma$ is strictly-upward, it intersects the interior of $\mathcal R_{\Gamma}$ as well, a case which we already ruled out; we thus get a contradiction.
	\end{itemize}
	This concludes the proof.
\end{proof}

We now exploit \cref{pr:empty-right} to prove the planarity of $\mathcal M$. Since $\mathcal M'$ is planar, by induction, and since $\mathcal M$ coincides with $\mathcal M'$ when restricted to the nodes and edges of $T'$, it follows that $\mathcal M$ is planar, as long as the edge $(\pi,l)$ does not intersect any edge, other than at a common end-point, throughout $\mathcal M$. We now argue about the possible intersections of the edge $(\pi,l)$ in $\mathcal M$. 

First, we deal with the morph $\langle \Gamma_0,\Delta_1 \rangle$, in which only the node $l$ moves. By \cref{pr:empty-right} applied to $T$ and $\Gamma_0$, no node or edge of $T$, other than those of the righmost path $R_T$ of $T$, intersects $\mathcal R_{\Gamma_0}$. Hence, it suffices to prove that the edge $(\pi,l)$ lies in $\mathcal R_{\Gamma_0}$ throughout $\langle \Gamma_0,\Delta_1 \rangle$; refer to \cref{fig:upward-linearsteps-polyarea-planar}. Since $\Gamma_0$ is a strictly-upward grid drawing, we have that $l$ lies at least one unit below $\pi$ in $\Gamma_0$; further, by construction, $l$ lies one unit below $\pi$ in $\Delta_1$. Moreover, by construction, the position of $l$ in $\Delta_1$ is at least one unit to the right of the positions of $\pi$ and $l$ in $\Gamma_0$. It follows that the point at which $l$ is placed in $\Delta_1$ lies in the part $\Pi_{\Gamma_0}$ of $\mathcal R_{\Gamma_0}$ that is delimited by the representation of the edge $(\pi,l)$ in $\Gamma_0$ from the left and by the horizontal lines through the positions of $\pi$ and $l$ in $\Gamma_0$ from above and from below, respectively. The convexity of $\Pi_{\Gamma_0}$ implies that the edge $(\pi,l)$ lies in $\mathcal R_{\Gamma_0}$ throughout $\langle \Gamma_0,\Delta_1 \rangle$.

The proof that the  morph $\langle \Delta_k,\Gamma_1 \rangle$ is planar is symmetric.

We now prove that the drawing $\Delta_i$ is planar, for each $i=0,\dots,k$; refer to \cref{fig:upward-linearsteps-intermediate}. Since $\Delta'_i$ is planar, by induction, any crossing in $\Delta_i$ involves the edge $(\pi,l)$. Suppose, for a contradiction, that an edge $(u,v)$ crosses $(\pi,l)$. If $(u,v)$ is an edge of $R_{T}\setminus \{(\pi,l)\}$, then $(u,v)$ and $(\pi,l)$ are separated by the horizontal line through $\pi$ and thus do not cross. Assume hence that $(u,v)$ is not an edge of $R_{T}\setminus \{(\pi,l)\}$. The intersection between $(u,v)$ and $(\pi,l)$ in $\Delta_i$ has to happen at an interior point $c$ of both edges. Indeed:

\begin{itemize}
	\item $l$ is to the right of both $u$ and $v$, hence it is not on $(u,v)$; 
	\item $\pi$ is not in the interior of $(u,v)$, since $\Delta'_i$ is planar; 
	\item $u$ and $v$ are not in the interior of $(\pi,l)$, given that $y_{\Delta_i}(l)=y_{\Delta_i}(\pi)-1$ and hence there is no grid point in the interior of $(\pi,l)$;
	\item if one of $u$ and $v$, say $u$, overlaps with $\pi$, then the planarity of $\Delta'_i$ implies that $u$ is the same node as $\pi$; since $v$ is not in the interior of $(\pi,l)$ and $l$ is not on $(u,v)$, it follows that $(u,v)$ \mbox{and $(\pi,l)$ do not cross.}
\end{itemize} 

Since the interior of $(u,v)$ intersects the interior of $(\pi,l)$, it follows that $(u,v)$ intersects the interior of the triangle $D$ which is delimited by $(\pi,l)$, by the horizontal line $\ell_\pi$ through $\pi$, and by $\ell_{\Delta'_i}$. Since $y_{\Delta_i}(l)=y_{\Delta_i}(\pi)-1$, it follows that $D$ contains no grid point in its interior. Hence, either $(u,v)$ intersects $\ell_{\pi}$ to the right of $\pi$, thus contradicting \cref{pr:empty-right} for $\Delta'_i$, or $(u,v)$ intersects $\ell_{\Delta'_i}$ below $\ell_{\pi}$, thus implying that $u$ or $v$ is to the right of $\ell_{\Delta'_i}$; this contradicts the definition of $\ell_{\Delta'_i}$ and hence proves that $\Delta_i$ is planar.

Finally, we prove that, for each $i=1,\dots,k-1$, the morph $\langle \Delta_i,\Delta_{i+1} \rangle$ is planar. Suppose, for a contradiction, that an edge $(u,v)$ of $T$ crosses the edge $(\pi,l)$ during $\langle \Delta_i,\Delta_{i+1} \rangle$. Consider the first drawing during $\langle \Delta_i,\Delta_{i+1} \rangle$ in which $(u,v)$ and $(\pi,l)$ cross; denote such a drawing by $\Gamma$ and recall that $\Gamma\neq \Delta_i,\Delta_{i+1}$. Since any drawing before $\Gamma$ in $\langle \Delta_i,\Delta_{i+1} \rangle$ is planar, it follows that in $\Gamma$ an end-point of one of $(u,v)$ and $(\pi,l)$ lies on the other edge. Since $l$ is to the right of $u$ and $v$ both in $\Delta_i$ and in $\Delta_{i+1}$, it follows that $l$ is to the right of $u$ and $v$ throughout $\langle \Delta_i,\Delta_{i+1} \rangle$, hence it does not lie on $(u,v)$ in $\Gamma$. Since $\langle \Delta'_i,\Delta'_{i+1} \rangle$ is planar, $\pi$ does not lie on $(u,v)$ (except if it is the same node as $u$ or $v$, which however does not cause the crossing between $(u,v)$ and $(\pi,l)$). Hence, one of $u$ and $v$, say $u$, lies in the interior of $(\pi,l)$ in $\Gamma$.

Assume that $\pi$ does not move during $\langle \Delta_i,\Delta_{i+1} \rangle$; this is not a loss of generality, as a planar linear morph between two drawings remains planar if one of the two drawings is translated by an arbitrary vector (see, e.g.,~\cite{DBLP:journals/siamcomp/AlamdariABCLBFH17}). Refer to \cref{fig:upward-linearsteps-intermediate-new}. Since $l$ is one unit below the horizontal line $\ell_{\pi}$ through $\pi$ both in $\Delta_i$ and in $\Delta_{i+1}$, it follows that $l$ moves horizontally in $\langle \Delta_i,\Delta_{i+1} \rangle$. By assumption, the straight-line segment representing the trajectory of $u$ in $\langle \Delta_i,\Delta_{i+1} \rangle$ crosses the triangle whose vertices are $\pi$ and the positions of $l$ in $\Delta_i$ and $\Delta_{i+1}$. Note that $\langle \Delta_i,\Delta_{i+1} \rangle$ moves $u$ from a grid point in $\Delta_i$ to a grid point in $\Delta_{i+1}$. However, since there are no grid points below $\ell_{\pi}$ and above the horizontal line through $l$, it follows that during $\langle \Delta_i,\Delta_{i+1} \rangle$  either $u$ crosses $R_{T'}$, a contradiction to the planarity of $\langle \Delta'_i,\Delta'_{i+1} \rangle$, or it crosses the horizontal line through $r(T)$, a contradiction to the fact that $\langle \Delta_i,\Delta_{i+1} \rangle$ is an upward morph, or it crosses the vertical line through $l$, a contradiction to the fact that $l$ is to the right of $u$ throughout $\langle \Delta_i,\Delta_{i+1} \rangle$, or it lies in $\mathcal R_{\Delta_i}$ in $\Delta_i$, or it lies in $\mathcal R_{\Delta_{i+1}}$ in $\Delta_{i+1}$; both the last two possibilities contradict \cref{pr:empty-right}.
This concludes the proof of the theorem.
\end{proof}

In view of \cref{th:nary-trees-linearsteps-polyarea}, it is natural to ask whether a sub-linear number of steps suffices to construct a small-area morph between any two \osusp drawings of a rooted ordered tree. In the following we prove that this is indeed the case for binary trees, for which just three morphing steps are sufficient. 

Our algorithm borrows ideas from a recent paper by Da Lozzo {\em et al.}~\cite{DBLP:conf/gd/LozzoBFPR18}, which deals with upward planar morphs of \emph{upward plane graphs}. 

Consider any two order-preserving strictly-upward straight-line planar grid drawings $\Gamma_0$ and $\Gamma_1$ of an $n$-node rooted ordered binary tree $T$. We define two order-preserving strictly-upward straight-line planar grid drawings $\Gamma'_0$ and $\Gamma'_1$ of $T$ such that the $3$-step morph $\langle \Gamma_0,\Gamma'_0,\Gamma'_1,\Gamma_1 \rangle$ is upward and planar. 

\begin{figure}[tb!]
	\centering
	\includegraphics[scale=0.6]{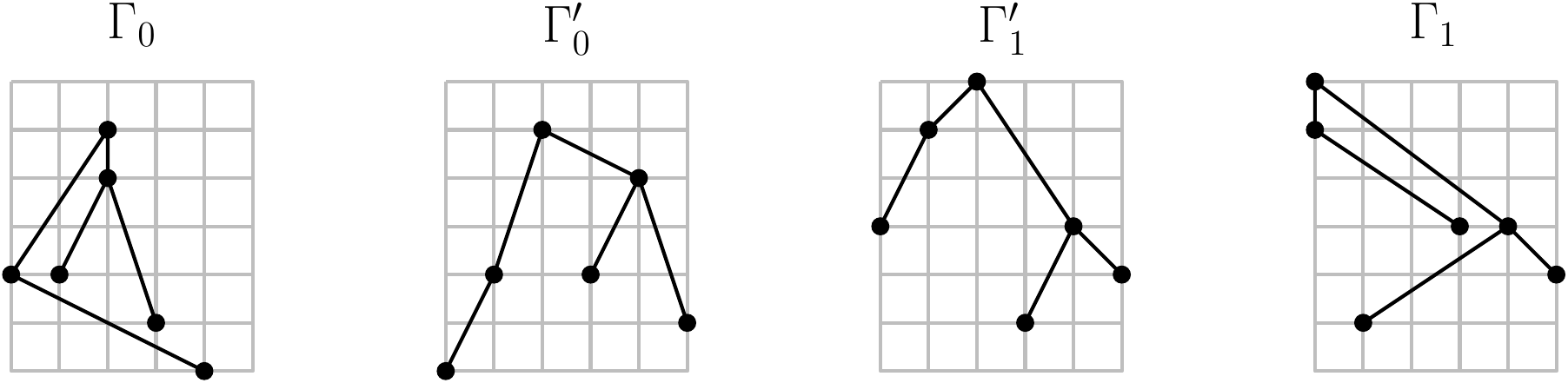}
	\caption{The $3$-step morph $\langle \Gamma_0,\Gamma'_0,\Gamma'_1,\Gamma_1 \rangle$.}
	\label{fig:binary-upward-morph}
\end{figure}

For $i=0,1$, we define $\Gamma'_i$ recursively as follows; refer to \cref{fig:binary-upward-morph}. Let $x_{\Gamma'_i}(r(T))=0$ and let $y_{\Gamma'_i}(r(T))=y_{\Gamma_i}(r(T))$. If the left subtree $L$ of $r(T)$ is non-empty, then recursively construct a drawing of it. Let $x_M$ be the maximum $x$-coordinate of a node in the constructed drawing of $L$; horizontally translate such a drawing by subtracting $x_M+1$ from the $x$-coordinate of every node in $L$, so that the maximum $x$-coordinate of any node in $L$ is now $-1$. Symmetrically, if the right subtree $R$ of $r(T)$ is non-empty, then recursively construct a drawing of it. Let $x_m$ be the minimum $x$-coordinate of a node in the constructed drawing of $R$; horizontally translate such a drawing by subtracting $x_m-1$ from the $x$-coordinate of every node in $R$, so that the minimum $x$-coordinate of any node in $R$ is now $1$.  

\begin{theorem} \label{th:binary-trees-upward-lineararea}
	Let $T$ be an $n$-node rooted ordered binary tree, and let $\Gamma_0$ and $\Gamma_1$ be two \osusp drawings of $T$. 
	There exists a $3$-step upward planar morph $\mathcal M$ from $\Gamma_0$ to $\Gamma_1$ 
	with $h(\mathcal M) = \max\{h(\Gamma_0),h(\Gamma_1)\}$ and $w(\mathcal M) = \max\{w(\Gamma_0),w(\Gamma_1),n\}$.
\end{theorem}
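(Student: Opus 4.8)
The plan is to analyze the $3$-step morph $\mathcal M=\langle \Gamma_0,\Gamma'_0,\Gamma'_1,\Gamma_1\rangle$ built before the statement, and I would begin by recording structural facts about the canonical drawings $\Gamma'_i$. Because the construction fixes the $y$-coordinate of each subtree root to its value in $\Gamma_i$ and only ever translates $x$-coordinates, an easy induction gives $y_{\Gamma'_i}(v)=y_{\Gamma_i}(v)$ for every node $v$; hence $\Gamma'_i$ is strictly-upward (as $\Gamma_i$ is) and $h(\Gamma'_i)=h(\Gamma_i)$. By contrast the $x$-coordinates depend only on $T$, not on the index $i$: induction on $|T|$ shows that the left subtree occupies the consecutive columns $-|L|,\dots,-1$, the root occupies column $0$, and the right subtree occupies $1,\dots,|R|$. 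Thus $\Gamma'_i$ is exactly the in-order layout, each node sits in its own column, $w(\Gamma'_i)=n$, and, decisively, $\Gamma'_0$ and $\Gamma'_1$ have identical $x$-coordinates. Since all four key frames have integer coordinates and the width and height of the morph are measured only on these key frames, this already yields $w(\mathcal M)=\max\{w(\Gamma_0),w(\Gamma_1),n\}$ and $h(\mathcal M)=\max\{h(\Gamma_0),h(\Gamma_1)\}$.

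Next I would isolate the static fact that does all the real work: $(\star)$ every strictly-upward straight-line drawing of $T$ whose node $x$-coordinates are distinct and agree with the in-order traversal order is planar and order-preserving. I would prove $(\star)$ by induction on $|T|$. The left subtree lies in columns strictly smaller than that of the root and the right subtree in columns strictly larger, so by induction both are planar and they cannot cross each other. Since the root is the topmost node, the root $r(L)$ of the left subtree is the highest node of $L$; every interior point of the edge $(r(T),r(L))$ has $y$-coordinate strictly above $y(r(L))$, whereas all of $L$ lies at height at most $y(r(L))$, so this edge meets $L$ only at the shared endpoint $r(L)$, and being confined to columns $\le 0$ it avoids the right subtree entirely. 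The edge $(r(T),r(R))$ is handled symmetrically, and the two root edges are separated by column $0$. Order-preservation is immediate from the in-order placement together with strict-upwardness. Applying $(\star)$ to $\Gamma'_0$ and $\Gamma'_1$ shows both drawings are planar and order-preserving, which supplies the hypotheses needed below.

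Finally I would verify the three steps. Each step is a linear morph between two strictly-upward drawings, so by \cref{obs:upward-stays-upward} every step, and hence $\mathcal M$, is upward. For the first step $\langle \Gamma_0,\Gamma'_0\rangle$ the two drawings are order-preserving, strictly-upward, planar, and share all $y$-coordinates, so \cref{le:unidirectional-morph} gives planarity; the third step $\langle \Gamma'_1,\Gamma_1\rangle$ is identical in form. The hard part will be the middle step $\langle \Gamma'_0,\Gamma'_1\rangle$, where the $x$-coordinates are frozen and the horizontal lemma does not apply. Here I would exploit that throughout the step the $x$-coordinates remain the canonical in-order values, while by \cref{obs:upward-stays-upward} each intermediate drawing stays strictly-upward; hence by $(\star)$ every intermediate drawing is planar, and therefore the whole linear step is planar. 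If desired, \cref{le:upwardplanar-implies-order} then certifies that the order is preserved throughout $\mathcal M$. This completes the proof.
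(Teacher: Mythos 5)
Your proof is correct, and its skeleton coincides with the paper's: the same morph $\langle \Gamma_0,\Gamma'_0,\Gamma'_1,\Gamma_1\rangle$, the same width/height accounting via the preserved $y$-coordinates and the in-order $x$-coordinates, upwardness via \cref{obs:upward-stays-upward}, and \cref{le:unidirectional-morph} for the first and third steps. The genuine difference is in the middle step $\langle \Gamma'_0,\Gamma'_1\rangle$. The paper argues dynamically about pairs of edges: any two edges stay separated throughout the morph either by a horizontal line through a shared ancestor endpoint (using upwardness) or by the vertical line through the lowest common ancestor of their endpoints (using that the left and right subtrees of that ancestor occupy disjoint column ranges in both $\Gamma'_0$ and $\Gamma'_1$, hence throughout). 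You instead isolate the static fact $(\star)$ that any strictly-upward straight-line drawing with distinct, in-order $x$-coordinates is planar and order-preserving, and apply it to every time slice; this is legitimate because planarity of a morph is, by the paper's definition, planarity of all intermediate drawings, the $x$-coordinates are frozen during this step, and \cref{obs:upward-stays-upward} keeps every slice strictly-upward. Your route is somewhat more modular --- $(\star)$ simultaneously certifies that $\Gamma'_0$ and $\Gamma'_1$ are planar and order-preserving, which is needed to invoke \cref{le:unidirectional-morph} and which the paper only establishes inside the same edge-pair argument --- whereas the paper's separation-line technique is the pattern it reuses for the harder morphs later on. Both arguments are sound, and ultimately they rest on the same two facts: frozen in-order $x$-coordinates and preserved upwardness.
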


\begin{proof}
We prove that the morph $\mathcal M=\langle \Gamma_0,\Gamma'_0,\Gamma'_1,\Gamma_1\rangle$, where $\Gamma'_0$ and $\Gamma'_1$ are the drawings defined before the statement of the theorem, \mbox{satisfies the requirements.}

Consider any $i\in \{1,2\}$. By construction, we have $y_{\Gamma'_i}(r(T))=y_{\Gamma_i}(r(T))$; since the drawings of $L$ and $R$ are  constructed recursively and then translated horizontally, we have $y_{\Gamma'_i}(v)=y_{\Gamma_i}(v)$ for each node $v\in V(T)$. This has two implications. First, we have that $h(\Gamma'_i)=h(\Gamma_i)$ and hence that $h(\mathcal M) = \max\{h(\Gamma_0),h(\Gamma_1)\}$. Second, we have that $\Gamma'_i$ is strictly-upward, given that $\Gamma_i$ is strictly-upward, and hence that $\mathcal M$ is upward, by \cref{obs:upward-stays-upward}.

An easy inductive argument shows that no two nodes have the same $x$-coordinate in $\Gamma'_i$ and that every grid column intersecting $\Gamma'_i$ contains a node of $T$, hence $w(\Gamma'_i)=n$ and $w(\mathcal M) = \max\{w(\Gamma_0),w(\Gamma_1),n\}$. 


It remains to prove that $\mathcal M$ is planar; then Lemma~\ref{le:upwardplanar-implies-order} implies that the drawing of $T$ is order-preserving throughout $\mathcal M$.

We first deal with the planarity of the drawings $\Gamma'_0$ and $\Gamma'_1$ and of the morph $\langle \Gamma'_0,\Gamma'_1\rangle$. Note that the assignment of $x$-coordinates to the nodes of $T$ in $\Gamma'_0$ and in $\Gamma'_1$ depends on $T$, and not on $\Gamma_0$ or $\Gamma_1$. It follows that $x_{\Gamma'_0}(v)=x_{\Gamma'_1}(v)$, for each node $v\in V(T)$. Hence, each node moves along a vertical line in $\langle \Gamma'_0,\Gamma'_1\rangle$. We now prove that any two distinct edges $(u,p(u))$ and $(v,p(v))$ of $T$ do not cross in $\langle \Gamma'_0,\Gamma'_1\rangle$. If $u=p(v)$, then the edges $(u,p(u))$ and $(v,p(v))$ are separated by the horizontal line through $u$ throughout $\langle \Gamma'_0,\Gamma'_1\rangle$, given that such a morph is upward, hence they do not intersect, except at $u$; similarly, if $u$ is a proper ancestor of $p(v)$, or if $v=p(u)$, or if $v$ is a proper ancestor of $p(u)$, then the edges $(u,p(u))$ and $(v,p(v))$ do not cross. Now suppose that $u$ is not an ancestor of $p(v)$ and  $v$ is not an ancestor of $p(u)$. Then $u$ and $v$ are respectively in the left subtree and in the right subtree of their lowest common ancestor $w$ (up to renaming $u$ with $v$). By construction, the drawing of the left subtree of $r(T)$ lies to the left of the vertical line $\ell_w$ through $w$ both in $\Gamma'_0$ and in $\Gamma'_1$, hence it lies to the left of $\ell_w$ throughout $\langle \Gamma'_0,\Gamma'_1\rangle$. Analogously, the drawing of the right subtree of $r(T)$ lies to the right of $\ell_w$ throughout $\langle \Gamma'_0,\Gamma'_1\rangle$. It follows that $(u,p(u))$ and $(v,p(v))$ are separated by $\ell_w$ throughout $\langle \Gamma'_0,\Gamma'_1\rangle$, hence they do not cross. 

Since $\Gamma'_0$ and $\Gamma'_1$ are order-preserving strictly-upward straight-line planar drawings of $T$ and since $y_{\Gamma'_0}(v)=y_{\Gamma_0}(v)$ and $y_{\Gamma'_1}(v)=y_{\Gamma_1}(v)$ for each node $v\in V(T)$, \cref{le:unidirectional-morph} applies twice to ensure the planarity of the morphs $\langle \Gamma_0,\Gamma'_0\rangle$ and $\langle \Gamma'_1,\Gamma_1\rangle$. This concludes the proof.
\end{proof}

\ifshowproofs{
The morphing algorithm for binary trees we just presented has a simple extension to trees with unbounded degree. Namely, let $\Gamma_0$ and $\Gamma_1$ be any two order-preserving strictly-upward straight-line planar grid drawings of an $n$-node rooted ordered tree $T$. For $i=0,1$, we define an order-preserving strictly-upward straight-line planar grid drawing $\Gamma'_i$ of $T$ as follows; refer to \cref{fig:tree-upward-construction}. Let $x_{\Gamma'_i}(r(T))=0$ and let $y_{\Gamma'_i}(r(T))=y_{\Gamma_i}(r(T))$. Recursively construct a drawing of each subtree of $r(T)$. Then translate all such drawings horizontally so that: (i) the bounding boxes of the drawings of any two distinct subtrees of $r(T)$ can be separated by a vertical line; (ii) the left-to-right order of such bounding boxes corresponds to the left-to-right order of the children of $T$; and (iii) the edges from $r(T)$ to its children do not cross the nodes and the edges of the subtrees of $r(T)$. A proof similar to the one of \cref{th:binary-trees-upward-lineararea} shows that the morph $\langle \Gamma_0,\Gamma'_0,\Gamma'_1,\Gamma_1 \rangle$ is upward and planar.

\begin{figure}[tb]
    \centering
    \subfloat[\label{fig:tree-upward-construction}]
    {\includegraphics[height=.125\textwidth]{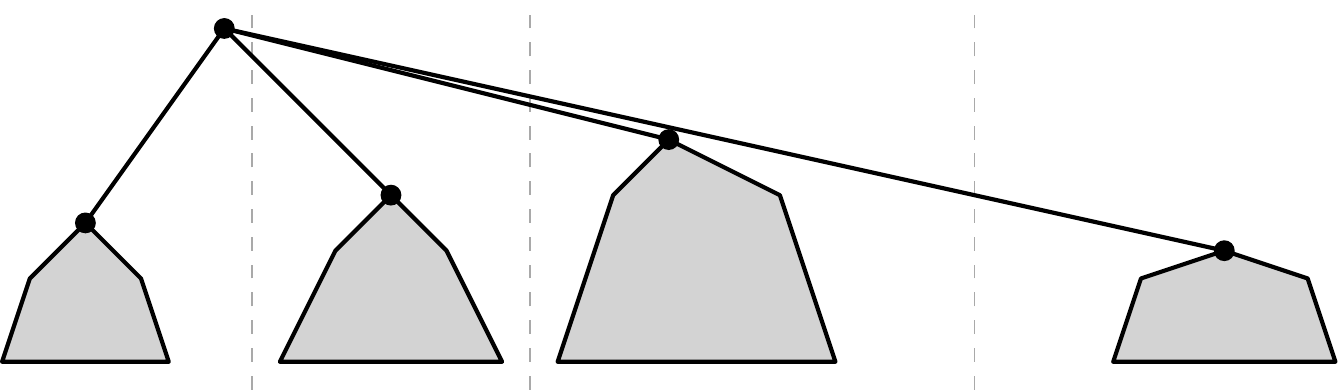}}
    \hfil
    \subfloat[]
    {\includegraphics[height=.125\textwidth]{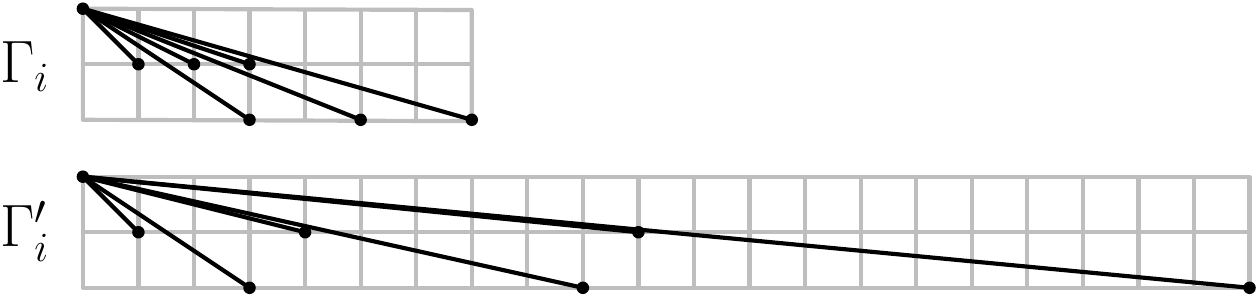}\label{fig:tree-upward-lowerbound}}
    \hfil
    \caption{(a) Construction of the drawing $\Gamma'_i$; the $y$-coordinate of each node is the same as in $\Gamma_i$. (b) A drawing $\Gamma_i$ such that the corresponding drawing $\Gamma'_i$ requires exponential width; in this illustration $h=2$.}
    \label{fig:tree-upward-constant-exponential}
  \end{figure}

Differently from the case of binary trees, however, the morph $\langle \Gamma_0,\Gamma'_0,\Gamma'_1,\Gamma_1 \rangle$ might require exponential area. Namely, consider a tree $T$ such that $r(T)$ has $n-1$ children. In their left-to-right order in $T$ such children alternate between \emph{top} and \emph{bottom children}. Let $\Gamma_i$ be the drawing of $T$ such that (see \cref{fig:tree-upward-lowerbound} top):

\begin{itemize}
	\item $r(T)$ has coordinates $(0,0)$;
	\item the top children of $r(T)$ have coordinates $(i,-1)$, for $i=1,2,\dots,\lfloor \frac{n}{2}\rfloor$; and 
	\item the bottom children of $r(T)$ have coordinates $(i\cdot h+1,-h)$, for $i=1,2,\dots,\lfloor\frac{n-1}{2}\rfloor$ and for some integer $h\geq 2$.
\end{itemize} 

Note that $h(\Gamma_i)\in O(h)$ and $w(\Gamma_i)\in O(n\cdot h)$, hence $\Gamma_i$ has polynomial area. Now consider any drawing $\Gamma'_i$ such that $y_{\Gamma'_i}(v)=y_{\Gamma_i}(v)$ for each node $v\in V(T)$ and such that properties (i)--(iii) above are satisfied; refer to \cref{fig:tree-upward-lowerbound} bottom. A linear number of children of $r(T)$ are on the same side of the vertical line $\ell$ through $r(T)$ in $\Gamma'_i$; say that $\Omega(n)$ children of $r(T)$ are to the right of $\ell$, the other case is analogous. Let $t_1,b_1,t_2,b_2,\dots,t_k$ be the left-to-right order of the children of $r(T)$ that are to the right of $\ell$ in $\Gamma'_i$, where the $t_i$'s are top children of $r(T)$ and the $b_i$'s are bottom children of $r(T)$; if the first child of $r(T)$ to the right of $\ell$ is a bottom child, we just disregard it. Suppose that the $x$-coordinate of $t_i$ is $x_i$; note that $x_1\geq 1$. Then the $x$-coordinate of $b_i$ is larger than $x_i \cdot h$, since the slope of $(r(T),t_i)$ is $-1/x_i$, since $(r(T),b_i)$ goes to the right of $(r(T),t_i)$, i.e., its slope is larger than $-1/x_i$ (by properties (ii) and (iii)), and since the vertical extension of $(r(T),b_i)$ is $h$. Since the $x$-coordinate of $t_{i+1}$ is larger than the one of $b_i$ (by properties (i) and (ii)), it follows that $x_{i+1}> x_i \cdot h$, which implies that $x_k > h^{k-1}$, hence $w(\Gamma'_i)\in h^{\Omega(n)}$.
}
\fi

\section{Planar Morphs of Tree Drawings}\label{se:general}

In this section we show how to construct small-area morphs between straight-line planar grid drawings of trees. In particular, we prove the following result.

\begin{theorem}\label{th:morph-straight-line-planar-drawings}
Let $T$ be an $n$-node ordered tree and let $\Gamma_0$ and $\Gamma_1$ be two order-preserving straight-line planar grid drawings of~$T$. There exists an $O(n)$-step planar morph $\mathcal M$ from $\Gamma_0$ to $\Gamma_1$ with $h(\mathcal M) \in O(D^3n \cdot H)$ and $w(\mathcal M) \in O(D^3 n \cdot W)$, where $H = \max\{h(\Gamma_0),h(\Gamma_1)\}$, $W = \max\{w(\Gamma_0),w(\Gamma_1)\}$, and $D = \max\{H,W\}$.
\end{theorem}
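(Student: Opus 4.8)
The plan is to avoid morphing $\Gamma_0$ directly into $\Gamma_1$ and instead route both drawings through a common target. Concretely, I would root $T$ at an arbitrary leaf $r$ and fix a single \osusp drawing $\Gamma_C$ of $T$ whose width and height are both $O(n)$ --- for instance the standard left-aligned upward layout in which the subtree of every node occupies a dedicated axis-parallel box and the children of a node are placed side by side below it in their left-to-right order. It then suffices to build, for each $i\in\{0,1\}$, an $O(n)$-step planar morph $\mathcal M_i$ from $\Gamma_i$ to $\Gamma_C$ meeting the stated area bounds; the desired morph is $\mathcal M_0$ followed by the reverse of $\mathcal M_1$. Because $\Gamma_C$ depends only on $T$, the two halves meet exactly at $\Gamma_C$, and the step count and area of $\mathcal M$ are within a constant factor of those of $\mathcal M_0$ and $\mathcal M_1$.

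The engine of each $\mathcal M_i$ is the ``scale up, then rebuild bottom-up'' idea announced in the introduction. First I would apply a single uniform scaling step that multiplies all coordinates by a factor $s$ polynomial in $n$ and $D$; uniform scaling is a homothety, so the intermediate drawings are similar copies of $\Gamma_i$ and the step is trivially planar. The purpose of $s$ is to guarantee that after scaling every node $v$ has a large empty $\ell$-box centered at it --- with $\ell=\Theta(n)$ --- containing no node or edge other than those incident to $v$. This is where the $D$ factors enter: in a grid drawing of size at most $D$ the minimum distance between a vertex and a non-incident edge is $\Omega(1/D)$ by the lattice-triangle-area bound, so a factor $s=\mathrm{poly}(D)\cdot n$ turns every such clearance into something exceeding the $O(n)$ footprint of any canonical subtree, yielding, after tracking the constants, the $O(D^3 n\cdot W)\times O(D^3 n\cdot H)$ bounds.

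With the room in hand I would process the nodes in order of non-increasing depth, maintaining the invariant that when a node $v$ is processed the subtree of each child of $v$ has already been collapsed into a tiny grid ``blob'' sitting at that child. Processing $v$ then consists of (i) swinging these child-blobs, one at a time and in their cyclic order around $v$, out of the arbitrary directions they inherit from $\Gamma_i$ and into a small downward fan below $v$ in the correct left-to-right order, and (ii) collapsing the resulting local star into a single blob at $v$; each motion is confined to the empty $\ell$-box of $v$, and edges sharing the apex $v$ never cross one another, so a constant number of linear steps per child suffices. Summed over the tree this is $O(\sum_v \deg v)=O(n)$ steps, and since the collapsing moves each node along the original straight segment toward its parent, planarity with respect to the rest of the tree follows from the planarity of $\Gamma_i$, in the spirit of \cref{pr:empty-right} and \cref{obs:upward-stays-upward}. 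Reversing this bottom-up contraction turns the final blob into $\Gamma_C$.

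I expect the main obstacle to be the simultaneous control of planarity and area during the uprighting of the stars. The delicate point is that a child of $v$ may hang in \emph{any} direction in $\Gamma_i$, so swinging its blob down to the canonical fan traces an arc that must be certified to avoid both the already-parked siblings and the not-yet-moved siblings; this forces a careful choice of the order in which children are swung and of the radius at which each blob travels, and it is here that one must verify the empty-box invariant is preserved step by step. The second difficulty is arguing that a single global scaling --- rather than a per-level one, which would blow the area up exponentially in the depth --- already provides enough clearance at every level; the key observation is that once a subtree has been contracted it lives entirely inside the $\ell$-box of its root, so the clearances are consumed in pairwise-disjoint regions and never compound. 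Establishing these two facts rigorously, together with the bookkeeping that keeps every intermediate drawing on the integer grid, is the technical heart of the argument.
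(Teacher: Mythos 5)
Your overall strategy coincides with the paper's: root $T$ at a leaf, scale each input drawing up by a factor in $O(D^3 n)$ so that every node acquires an empty region large enough to host a canonical $O(n)\times O(n)$ drawing of its subtree, canonicalize subtrees bottom-up with $O(\deg(v))$ linear steps per node, and glue the two halves at a canonical drawing; the step count, the area accounting, and the use of the lattice bound on vertex--edge distances to justify a single global scaling all match. There are, however, two concrete gaps, both sitting exactly where you flag ``the technical heart.''

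First, the invariant that every processed node ends with its children in a \emph{downward} fan, so that both morphs terminate at the identical drawing $\Gamma_C$, cannot be maintained: if $p(v)$ lies below $v$ in the scaled input, the edge $(v,p(v))$ occupies the region under $v$, and an upward-canonical drawing of $T_v$ placed there would cross it. The paper therefore lets each visited subtree become upward \emph{or} downward canonical depending on where its parent sits, so the two bottom-up phases end at two possibly different canonical drawings that must be reconciled by an extra $O(1)$-step morph. Second, ``swinging child blobs along arcs'' is not a sequence of linear morphs, and you explicitly leave open the point that requires a new idea: how to planarly rotate an already-canonicalized subtree by a quarter turn in $O(1)$ linear steps. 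The paper's Pinwheel Lemma (\cref{lemma:rotate}) supplies this: the \emph{single} linear morph between an upward and a leftward canonical drawing sharing the root is realized pointwise by a family of nonsingular affine maps interpolating between the identity and a $90^\circ$ rotation, hence is planar, and it stays inside one half of the $2n$-box around the root. Each child subtree is then moved by the four-step pattern rotate, translate into a buffer strip ($\mathcal S_L$ or $\mathcal S_R$), rotate, translate to its slot, with the sectors $S_e$ of the partially-canonical invariant certifying disjointness from parked and unparked siblings. Without this lemma (or an equivalent), the constant-step per-child rotation you assume is unproved, and it is the one ingredient that does not follow from scaling and bookkeeping alone.
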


The rest of this section is devoted to the proof of \cref{th:morph-straight-line-planar-drawings}.
We are going to use the following definition (see \cref{fig:canonical}). 

\begin{figure}[tb] 
	\centering
	{\includegraphics[]{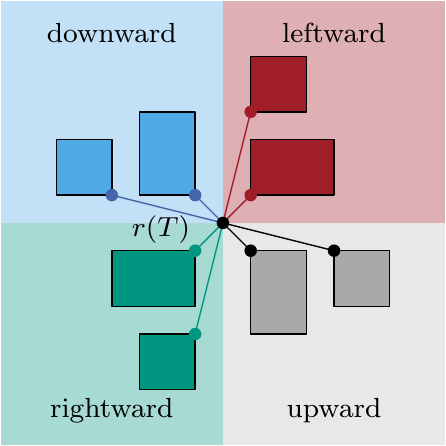}}
	\caption{Four canonical drawings of a tree $T$ (each shown in a differently colored quadrant).}
	\label{fig:canonical}
\end{figure}



\begin{definition}\label{def:canonical}
	An \emph{upward canonical drawing} of a rooted ordered tree $T$ is an order-preserving strictly-upward straight-line planar grid drawing $\Gamma$ of $T$ satisfying the following properties: 
	\begin{itemize}
		\item if $|V(T)|=1$, then $\Gamma$ is a grid point in the plane, representing $r(T)$;
		\item otherwise, let $\Gamma_1,\dots,\Gamma_k$ be upward canonical drawings of the subtrees $T_1,\dots,T_k$ of $r(T)$ (in their left-to-right order), respectively; then $\Gamma$ is such that:
		\begin{itemize}
			\item $r(T)$ is one unit to the left and one unit above the top-left corner of the bounding box of $\Gamma_1$;
			\item the top sides of the bounding boxes of $\Gamma_1, \dots, \Gamma_k$ have the same $y$-coordinate; and
			\item the right side of the bounding box of $\Gamma_i$ is one unit to the left of the left side of the bounding box of $\Gamma_{i+1}$, for $i=1, \dots, k-1$.		
		\end{itemize}
	\end{itemize}
\end{definition}

By counter-clockwise rotating an upward canonical drawing of $T$ by $\frac{\pi}{2}$, $\pi$, and $\frac{3\pi}{2}$ radians, we obtain a \emph{leftward}, a \emph{downward}, and a \emph{rightward canonical drawing} of $T$, respectively. A \emph{canonical drawing} of $T$ is an upward, leftward, downward, or rightward canonical drawing of $T$. Note that, in an upward, leftward, downward, or rightward canonical drawing $\Gamma$ of $T$, $r(T)$ is placed at the top-left, bottom-left, bottom-right, and top-right corner of the bounding box of $\Gamma$, respectively. 

\begin{remark}\label{remark:canonical-area}
If $T$ has $n$ nodes, then a canonical drawing of $T$ lies in the $2n$-box centered at $r(T)$.
\end{remark}

\begin{proof}
We prove the statement for an upward canonical drawing $\Gamma$; the proof for the other types of canonical drawings is analogous. First, we have $w(\Gamma)=n$, since each node has a distinct $x$-coordinate and, for each node $u$ of $T$, either there exists a node $v$ with $x(v)=x(u)+1$, or no node $v$ exists with $x(v)>x(u)$. Further, $h(\Gamma) \leq n$, since each node is vertically spaced from $r(T)$ by its graph-theoretic distance from it.
\end{proof}

The following lemma allows us to morph one canonical drawing into another in a constant number of morphing steps.

\begin{lemma}[Pinwheel]\label{lemma:rotate}
	Let $\Gamma$ and $\Gamma'$ be two canonical drawings of a rooted ordered tree $T$, where $r(T)$ is at the same point in $\Gamma$ and $\Gamma'$. If $\Gamma$ and $\Gamma'$ are
	\begin{inparaenum}[(i)]
		\item upward and leftward, or
		\item leftward and downward, or
		\item downward and rightward, or
		\item rightward and upward, respectively,
	\end{inparaenum}  
	then the morph $\langle \Gamma, \Gamma' \rangle$ is planar and lies in the interior of the right, top, left, or bottom half of the $2n$-box centered at $r(T)$, respectively.
\end{lemma}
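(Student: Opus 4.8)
The plan is to exploit a fact hidden in the recursive definition of canonical drawings: up to translation, the leftward canonical drawing $\Gamma'$ is exactly the upward canonical drawing $\Gamma$ rotated about $r(T)$ by $\frac{\pi}{2}$ counter-clockwise. Consequently the linear morph $\langle \Gamma,\Gamma'\rangle$ is, at every instant, a \emph{scaled rotation} of the single drawing $\Gamma$, and such a transformation trivially preserves planarity.

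First I would place $r(T)$ at the origin. From \cref{def:canonical} and the argument behind \cref{remark:canonical-area}, in $\Gamma$ every node $u$ satisfies $y_\Gamma(u)=-d_u$ and $0\le x_\Gamma(u)\le n-1$, where $d_u\in[0,n-1]$ denotes the graph-theoretic distance of $u$ from $r(T)$. Writing $x_u:=x_\Gamma(u)$, node $u$ thus sits at $(x_u,-d_u)$ in $\Gamma$, and since $\Gamma'$ is the $\frac{\pi}{2}$ counter-clockwise rotation of $\Gamma$ about the origin, $u$ sits at $(d_u,x_u)$ in $\Gamma'$.

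The key step is then purely computational. In the linear morph $\langle\Gamma,\Gamma'\rangle$, node $u$ is at time $t\in[0,1]$ at the point $\big((1-t)x_u+t\,d_u,\ t\,x_u-(1-t)d_u\big)$, which equals $N_t\cdot(x_u,-d_u)^{\top}$ for the matrix $N_t=\left(\begin{smallmatrix} 1-t & -t \\ t & 1-t\end{smallmatrix}\right)$. I would observe that $N_t$ does not depend on $u$, that $\det N_t=(1-t)^2+t^2>0$ for all $t\in[0,1]$, and indeed that $N_t$ is the composition of a rotation by $\arctan\frac{t}{1-t}\in[0,\frac{\pi}{2}]$ with a uniform scaling by $\sqrt{(1-t)^2+t^2}$. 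Hence each intermediate drawing $\Gamma_t$ is the image of the one planar drawing $\Gamma$ under the nonsingular linear map $N_t$; since a nonsingular linear map is a homeomorphism of the plane that sends segments to segments and preserves all (non-)crossings, $\Gamma_t$ is straight-line and planar for every $t$, which is exactly the planarity of $\langle\Gamma,\Gamma'\rangle$. For containment I would read the bounds directly off $N_t\cdot(x_u,-d_u)$: the $x$-coordinate $(1-t)x_u+t\,d_u$ always lies in $[0,n-1]$ and the $y$-coordinate $t\,x_u-(1-t)d_u$ always lies in $[-(n-1),n-1]$, so $\Gamma_t\subseteq[0,n-1]\times[-(n-1),n-1]$, i.e.\ in the region $x\ge 0$ and strictly away from the walls $x=n$ and $y=\pm n$ of the box; this is precisely the interior of the right half of the $2n$-box. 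Cases (ii)--(iv) follow from case (i) by rotating the whole configuration about $r(T)$ by $\frac{\pi}{2}$, $\pi$, and $\frac{3\pi}{2}$: these rotations carry the pair (upward, leftward) to (leftward, downward), (downward, rightward), and (rightward, upward), and carry the right half of the box to its top, left, and bottom half, so both planarity and containment transfer verbatim.

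The step I expect to be the crux is recognizing the scaled-rotation structure, namely that linearly interpolating between a drawing and its $90^\circ$ rotation about a fixed point realizes at each time a similarity of the original drawing; once this is in hand, planarity is immediate and no delicate edge-crossing case analysis is needed. The only points demanding care are verifying that $\Gamma'$ is the exact $\frac{\pi}{2}$-rotation of $\Gamma$ (so that the off-diagonal signs in $N_t$ are correct), confirming $\det N_t\neq 0$ throughout, and making the word ``interior'' precise via the $\le n-1$ coordinate bounds.
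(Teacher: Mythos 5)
Your proof is correct and follows essentially the same route as the paper: both identify that the linear morph realizes, at each time $t$, the nonsingular linear map $\left(\begin{smallmatrix}1-t & -t\\ t & 1-t\end{smallmatrix}\right)$ (a scaled rotation, with determinant $(1-t)^2+t^2\neq 0$) applied to $\Gamma$, which immediately gives planarity, and both obtain containment from the coordinate bounds $[0,n-1]\times[-n+1,n-1]$. The only cosmetic difference is that the paper derives this matrix via an auxiliary bounding $4$-cycle and convex-combination coefficients, whereas you write it down directly from the coordinates of a canonical drawing.
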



\begin{proof}
We prove the statement in the case in which $\Gamma$ is upward and $\Gamma'$ is leftward; the other cases are symmetric. Without loss of generality, up to translation of the Cartesian axes, we assume that $r(T)$ lies at the point $o = (0,0)$.

We insert in $\Gamma$ and in $\Gamma'$ the drawings $B$ and $B'$, respectively, of a $4$-cycle $(r(T), a, b, c)$ as follows. 
The coordinates of $a$, $b$, and $c$ in $B$ are $a(0) = (0,-n)$, $b(0) = (n,-n)$, and $c(0) = (n,0)$, respectively.
The coordinates of $a$, $b$, and $c$ in $B'$ are $a(1) = (n,0)$, $b(1) = (n,n)$, and $c(1) = (0,n)$, respectively.
Note that the drawing of $T$ in $\Gamma$ ($\Gamma'$) lies strictly inside $B$ ($B'$), except for $r(T)$. For each node $u$ of $T$ denote by $\omega_u$, $\alpha_u$, $\beta_u$, and $\gamma_u$ non-negative reals such that $u(0) =\omega_u \cdot o + \alpha_u \cdot a(0) + \beta_u \cdot b(0) + \gamma_u \cdot c(0)$, where $u(0)$ is the position of $u$ in $\Gamma$ and $\omega_u + \alpha_u + \beta_u + \gamma_u = 1$. 

Consider the morph $\langle B,B'\rangle$ and assume it happens within the time interval $[0,1]$. Let $B(t)$ be the drawing of the cycle $(r(T), a, b, c)$ at time $t \in [0,1]$, where 
$B(0)=B$ and $B(1)=B'$.
Note that the coordinates of $r(T)$, $a$, $b$, and $c$ in $B(t)$ are $o = (0,0)$, $a(t)=(tn,(t-1)n)$, $b(t)=(n,(2t-1)n)$, and $c(t)=((1-t)n,tn)$, respectively.
For any $t \in [0,1]$, the affine transformation $\psi_t(\vec{x}):= \boldsymbol{A_t}\cdot\vec{x}$ with
$\boldsymbol{A_t} = 
\begin{pmatrix}
    1-t && -t \\
    t && 1-t
\end{pmatrix}
$ turns $B$ into $B(t)$.
%
%
Let $\Gamma(t)$ be the drawing of $T$ obtained by applying $\psi_t$ to $\Gamma$.
Since $\psi_t$ is an affine transformation and $det(\boldsymbol{A_t}) = (1-t)^2+t^2 \neq 0$, for any real value of $t$, we get that:
\begin{inparaenum}
\item for each node $u$ of $T$, the position $u(t)$ of $u$ in $\Gamma(t)$ is $u(t) =\omega_u \cdot o + \alpha_u \cdot a(t) + \beta_u \cdot b(t) + \gamma_u \cdot c(t)$, that is, the coefficients of the convex combination expressing the placement of $u$ in $\Gamma(t)$ with respect to the placement of $r(T)$, $a$, $b$, and $c$ are the same as in $\Gamma$; and
\item $\Gamma(t)$ is an order-preserving straight-line planar drawing of $T$.
\end{inparaenum}

It remains to prove that $\Gamma(t)$ is the drawing at time $t$ of the morph $\langle \Gamma, \Gamma' \rangle$, for any $t \in [0,1]$.
First, $\Gamma = \Gamma(0)$ since $\boldsymbol{A_0}$ is the identity matrix.
Second, since a leftward canonical drawing of $T$ is obtained by counter-clockwise rotating an upward canonical drawing of $T$ by $\frac{\pi}{2}$ and since $\boldsymbol{A_1}= 
\begin{pmatrix}
    0 && -1 \\
    1 && 0
\end{pmatrix}$
defines the same rotation, we have that $\Gamma'=\Gamma(1)$.
Finally, for any $0<t<1$ and for each node $u$ of $T$, the position of $u$ at time $t$ of the morph $\langle \Gamma, \Gamma' \rangle$ is 
{
\begin{eqnarray*}
&& (1-t)[\omega_u \cdot o + \alpha_u \cdot a(0) + \beta_u \cdot b(0) + \gamma_u \cdot c(0)] +t[\omega_u \cdot o + \alpha_u \cdot a(1) + \beta_u \cdot b(1) + \gamma_u \cdot c(1)] =\\
& = &\omega_u \cdot o + \alpha_u [a(0)\cdot (1-t) + a(1)\cdot t] + \beta_u [b(0)\cdot(1-t) + b(1)\cdot t] + \gamma_u [c(0)\cdot(1-t) + c(1)\cdot t] = \\
& = &\omega_u \cdot o + \alpha_u \cdot a(t) + \beta_u \cdot b(t) + \gamma_u \cdot c(t) = u(t).
\end{eqnarray*}
}
This concludes the proof that $\langle \Gamma, \Gamma' \rangle$ is planar.

Observe that, for each node $u$ of $T$, both $u(0)$ and $u(1)$ have their $x$-coordinates in the interval $[0,n-1]$ and their $y$-coordinates in the interval $[-n+1,n-1]$. This implies that, for any $t \in [0,1]$, the same holds for the coordinates of $u(t)$.
Therefore, the morph $\langle \Gamma, \Gamma' \rangle$ lies in the right half of the $2n$-box centered at $r(T)$.
\end{proof}

We now describe a proof of Theorem~\ref{th:morph-straight-line-planar-drawings}. Let $T$ be an $n$-node ordered tree and let $\Gamma_0$ and $\Gamma_1$ be two order-preserving straight-line planar grid drawings of~$T$. In order to compute a morph $\cal{M}$ from $\Gamma_0$ to $\Gamma_1$, we root $T$ at any leaf $r(T)$. Since $T$ is ordered, this determines a left-to-right order \mbox{of the children of each node.}



We are going to construct three morphs: a morph $\mathcal M^0$ from $\Gamma_0$ to a canonical drawing $\Gamma_0^*$ of $T$, a morph $\mathcal M^1$ from $\Gamma_1$ to a canonical drawing $\Gamma^*_1$ of $T$, and a morph $\mathcal M^{0,1}$ from $\Gamma_0^*$ to $\Gamma^*_1$. The morph $\mathcal M$ is then obtained by composing $\mathcal M^0$, $\mathcal M^{0,1}$, and the reverse of $\mathcal M^1$. 
The morph $\mathcal M^{0,1}$ consists of $O(1)$ steps and can be constructed by applying \cref{lemma:rotate}.
We describe below how to construct $\mathcal M^0$; the construction of $\mathcal{M}^1$ is analogous. However, before describing the construction of $\mathcal M^0$, we introduce a labeling of the nodes of $T$ and the concept of ``partially-canonical drawing''.

Let $T[0]$ be the tree $T$ together with a labeling of each of the $k$ internal nodes of $T$ as \texttt{unvisited} and of each leaf as \texttt{visited}. We perform a bottom-up visit of $T$, labeling one-by-one the internal nodes of $T$ as \texttt{visited}. We label a node $v$ as \texttt{visited} only after all of its children have been labeled as \texttt{visited}. For $i=0,\dots,k$, we denote by $T[i]$ the tree $T$ once $i$ of its internal nodes have been labeled as \texttt{visited}. 

The outline of our algorithm for constructing $\mathcal M^0$ is as follows. In a first morphing step, we scale $\Gamma_0$ up in order to make some free room around each node. Then we process the nodes of $T$ and label them as \texttt{visited} one by one, as described above. When we label a node $v$ as \texttt{visited}, we morph the current drawing into one in which $T_v$ is upward or downward canonical; this is accomplished by only moving the subtrees rooted at the children of $v$. Note that, when $v$ is labeled as \texttt{visited}, all the children of $v$ are already labeled as \texttt{visited}, hence the drawings of the subtrees rooted at them are upward or downward canonical. Thus, the movement of such drawings only consists of translations and rotations to bring such drawings where they need to be in the upward or downward canonical drawing of $T_v$. The initial scaling ensures that there is enough room around $v$ so that an upward or downward canonical of $T_v$ does not intersect the rest of the drawing. In order to formalize this process, we need to describe the properties of the drawing that we obtain after a number of nodes of $T$ have been labeled as \texttt{visited}; we call \emph{partially-canonical} such a drawing.

%
%

Let $D_0 = \max\{w(\Gamma_0),h(\Gamma_0)\}$. 
Let $\Gamma$ be a drawing of $T$ and let $v$ be a node of $T$. We denote by 
$\largebox{v}$, $\mediumbox{v}$, and $\smallbox{v}$ the 
$(\ell_0+4n)$-box, the $\ell_0$-box, and the $2n$-box centered at $v$ in $\Gamma$, respectively, where $\ell_0 = k_0 D_0^2n$ for some constant $k_0 > 1$ to be determined later. We have the following definition.

\begin{figure}[tb] 
	\centering
	\includegraphics[width=.5\textwidth,page=1]{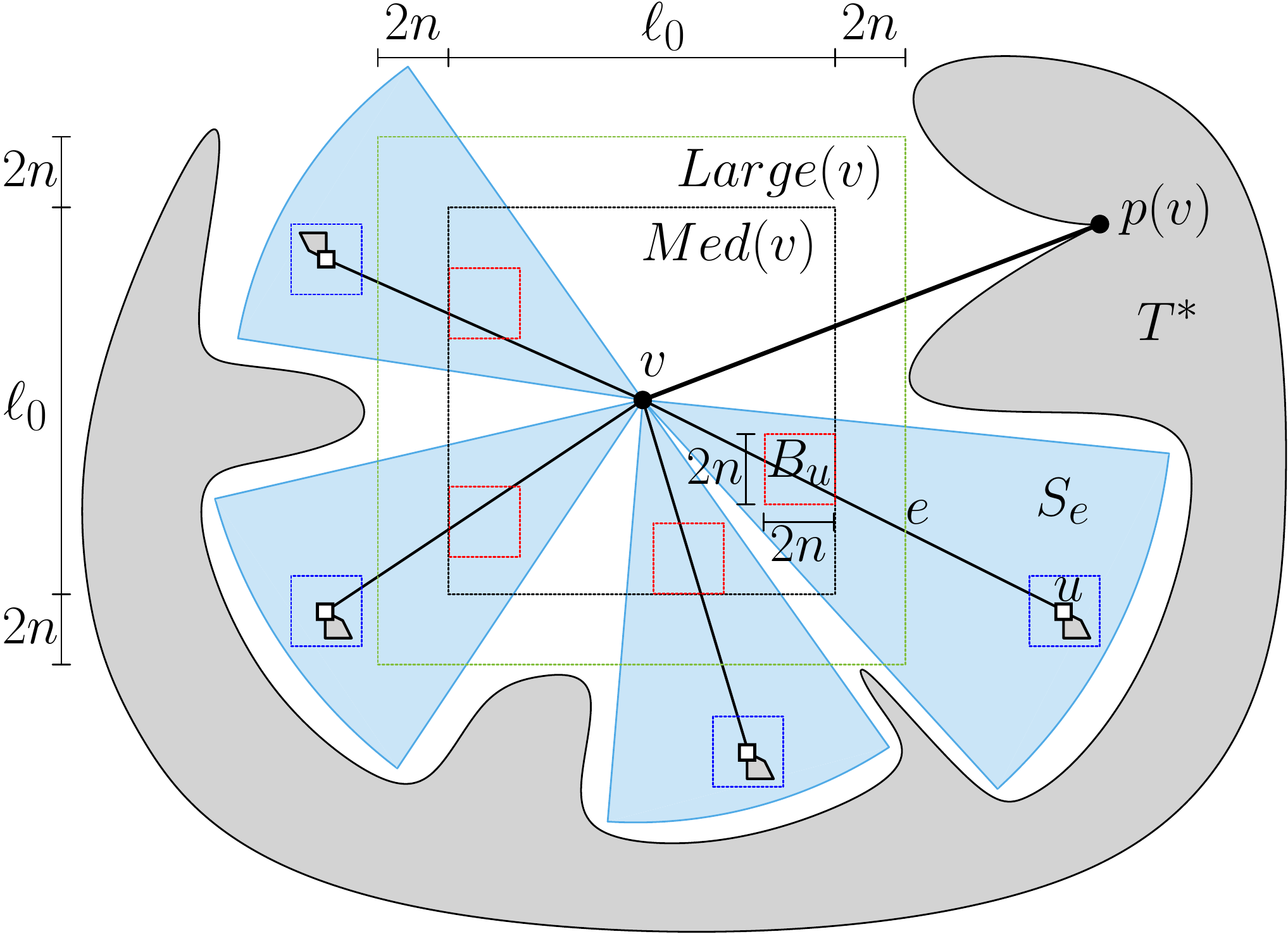}
	\caption{A partially-canonical drawing $\Gamma$ of $T[i]$. The illustration focuses the attention on an \texttt{unvisited} $v$ whose children are all \texttt{visited}.}
	\label{fig:partially-canonical}
\end{figure}

\begin{definition}\label{def:partially-canonical}
	An order-preserving straight-line planar grid drawing $\Gamma$ of $T$ is a \emph{partially-canonical drawing} of $T[i]$ if it satisfies the following properties (refer to \cref{fig:partially-canonical}):
	\begin{enumerate}[(a)]
		\item \label{partially-canonical-a} for each \texttt{visited} node $u$ of $T$, the drawing $\Gamma_u$ of $T_u$ in $\Gamma$ is upward canonical or downward canonical; further, if $u\neq r(T)$, then $\Gamma_u$ is upward canonical, if $y_\Gamma(u) \le y_\Gamma(p(u))$, or downward canonical, if $y_\Gamma(u) > y_\Gamma(p(u))$;
		\item \label{partially-canonical-c} for each edge $e=(v,u)$ of $T$, where $v$ is the parent of $u$ and $v$ is \texttt{unvisited}, there exists a sector $S_e$ of a circumference centered at $v$ such that:
		\begin{enumerate}[(b.i)]
			\item $S_e$ encloses $\smallbox{u}$; 
			\item $S_e$ contains no node with the exception of $v$ and of, possibly, the nodes of $T_u$, and no edge with the exception of $e$ and of, possibly, the edges of $T_u$;
			\item the intersection between $S_e$ and $\mediumbox{v}$ contains a $2n$-box $B_u$ whose corners have integer coordinates and whose center $c_u$ is such that $y_\Gamma(c_u) \leq y_\Gamma(v)$ if and only if $y_\Gamma(u) \leq y_\Gamma(v)$; 
			and 
			\item for any edge $e' \neq e$ incident to $v$, the sectors $S_e$ and $S_{e'}$ are internally disjoint;
		\end{enumerate}
	\item for any two \texttt{unvisited} nodes $v$ and $w$, it holds $\largebox{v} \cap \largebox{w} = \emptyset$; and
	\item \label{partially-canonical-b} for any \texttt{unvisited} node $v$ of $T$, $\largebox{v}$ contains no node different from $v$, and any edge $e$ or any sector $S_e$ intersecting $\largebox{v}$ is such that $e$ is incident to $v$.
	\end{enumerate}
\end{definition}

Note that, by Property~(a), a partially-canonical drawing of $T[k]$ is a canonical drawing of $T$.

The algorithm to construct $\mathcal M^0$ is as follows. First, we scale $\Gamma_0$ up by a factor in $O(D_0^3n)$ so that the resulting drawing $\Delta_0$ is a partially-canonical drawing of $T[0]$ (see \cref{lemma:scaling}). Clearly, the morph $\mathcal M_0 = \langle \Gamma_0, \Delta_0 \rangle$ is planar, $w(\mathcal M_0) = w(\Delta_0)$, and $h(\mathcal M_0) = h(\Delta_0)$.

For $i=1,\dots, k$, let $v_i$ be the node that is labeled as \texttt{visited} at the $i$-th step of the bottom-up visit~of~$T$.
Starting from a partially-canonical drawing $\Delta_{i-1}$ of $T[i-1]$, we construct a partially-canonical drawing $\Delta_{i}$ of $T[i]$ and a morph $\mathcal M_{i-1,i}$ from $\Delta_{i-1}$ to $\Delta_{i}$ with $O(\deg(v_i))$ steps, with $w(\mathcal M_{i-1,i}) \leq  w(\Delta_0) + \ell_0 + 4n$ and $h(\mathcal M_{i-1,i}) \leq h(\Delta_0) + \ell_0 + 4n$ (see \cref{lemma:partially-canonical-from-partially-canonical}).

Composing the morphs $\mathcal M_0, \mathcal M_{0,1},\mathcal M_{1,2},\dots,\mathcal M_{k-1,k}$ yields the desired morph $\mathcal M^0$ from~$\Gamma_0$ to a canonical drawing $\Delta_k=\Gamma^*_0$ of $T$. The morph has $O(\sum_{i=1}^k deg(v_i))\subseteq O(n)$ steps (by \cref{lemma:partially-canonical-from-partially-canonical}). Further, $w(\mathcal M^0) \leq  w(\Delta_0) + \ell_0 + 4n$ and $h(\mathcal M^0) \leq h(\Delta_0) + \ell_0 + 4n$ (again by \cref{lemma:partially-canonical-from-partially-canonical}), hence $w(\mathcal M^0) \in O(D^3_0n \cdot w(\Gamma_0))$ and $h(\mathcal M^0) \in O(D^3_0n \cdot h(\Gamma_0))$ (by \cref{lemma:scaling}).

It remains to show how to construct the partially-canonical drawing $\Delta_i$, for $i=0,\dots,k$, and the morphs $\mathcal M_{i-1,i}$, for $i=1,\dots,k$. Specifically, in the following we prove the next two lemmas.

\begin{lemma}\label{lemma:scaling}
	There is an integer $B_0 \in O(D^3_0n)$ such that the drawing $\Delta_0$ obtained by scaling the drawing $\Gamma_0$ of $T$ up by $B_0$ is a partially-canonical drawing~of~$T[0]$.
\end{lemma}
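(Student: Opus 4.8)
The plan is to take an integer scaling factor $B_0=\Theta(D_0^3n)$, to let $\Delta_0$ be the drawing obtained from $\Gamma_0$ by multiplying every coordinate by $B_0$, and then to check the four properties of \cref{def:partially-canonical} one at a time. Throughout, recall $\ell_0=k_0D_0^2n$ and that $\largebox{v}$, $\mediumbox{v}$, $\smallbox{v}$ have sides $\ell_0+4n$, $\ell_0$, $2n$. The whole argument rests on three elementary facts about the grid drawing $\Gamma_0$, two of which scale with $B_0$: (F1) two distinct nodes are at $L_\infty$-distance at least $1$ in $\Gamma_0$, hence at distance at least $B_0$ in $\Delta_0$; (F2) a node $v$ and an edge not incident to $v$ are at distance at least $\tfrac{1}{\sqrt2 D_0}$ in $\Gamma_0$, hence at least $\tfrac{B_0}{\sqrt2 D_0}$ in $\Delta_0$ (the supporting line of the edge is spanned by an integer vector of length at most $\sqrt2 D_0$, so an integer point off it is at distance at least $\tfrac{1}{\sqrt2 D_0}$, and if the perpendicular foot misses the segment the nearest point is an endpoint, at distance at least $1$); and (F3), which is scale-invariant, for a node $v$ and nodes $a,b$ with $v,a,b$ not collinear, $\angle avb\ge\tfrac{1}{2D_0^2}$, because $|\overrightarrow{va}\times\overrightarrow{vb}|\ge1$ while $|\overrightarrow{va}|,|\overrightarrow{vb}|\le\sqrt2 D_0$.

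Property (a) is immediate, since in $T[0]$ only leaves are \texttt{visited} and a one-node drawing is trivially upward and downward canonical. Properties (c) and (d) are purely metric. By (F1) two \texttt{unvisited} nodes differ by at least $B_0$ in some coordinate, so the boxes $\largebox{v}$ and $\largebox{w}$, of side $\ell_0+4n$, are disjoint as soon as $B_0>\ell_0+4n$ (Property (c)), and the same bound keeps every other node out of $\largebox{v}$. The \emph{binding} constraint, and the origin of the cubic factor, is the clause of Property (d) asking that an edge not incident to $v$ avoid $\largebox{v}$: by (F2) such an edge is at distance at least $\tfrac{B_0}{\sqrt2 D_0}$ from $v$, while every point of $\largebox{v}$ is within $\tfrac{\ell_0+4n}{\sqrt2}$ of $v$, so it suffices that $\tfrac{B_0}{\sqrt2 D_0}>\tfrac{\ell_0+4n}{\sqrt2}$, i.e. $B_0>D_0(\ell_0+4n)\in O(D_0^3n)$; this is what pins down $B_0\in\Theta(D_0^3n)$.

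For Property (b) I would build, for each edge $e=(v,u)$ with $v$ \texttt{unvisited}, the circular sector $S_e$ centered at $v$, symmetric about the ray $\overrightarrow{vu}$, of radius $R_e=\mathrm{dist}(v,u)+n\sqrt2$ and angular half-width $\delta=\tfrac{c_1}{D_0^2}$ for a small absolute constant $c_1<\tfrac12$. The radius reaches around $\smallbox{u}$ and, since $\delta\ge\tfrac{2\sqrt2\,n}{B_0}$ for $B_0\in\Omega(D_0^2n)$, the wedge is angularly wide enough to contain it, giving condition (b.i). Taking $\delta$ below half of the minimal angular gap at $v$, which by (F3) is $\Omega(1/D_0^2)$, makes the angular ranges of the sectors at $v$ pairwise disjoint, giving (b.iv); the same choice, applied via (F3) to the direction of every non-incident node, forces the only nodes in the angular wedge of $S_e$ to be those collinear with $v$ and $u$, and since the extra radius $R_e-\mathrm{dist}(v,u)=n\sqrt2<B_0$ is below the spacing to the next collinear node, the only nodes of $\Delta_0$ in $S_e$ are $v$, $u$, and possibly further nodes of $T_u$ — the node half of (b.ii). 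Finally, once $k_0$ is chosen large (so that $\ell_0$ dominates $n/\delta$), the part of $S_e$ inside $\mediumbox{v}$ is a wedge of lateral width $\Omega(k_0c_1n)\ge2n$ over a radial band much longer than $2n$, in which a $2n$-box $B_u$ with integer corners fits; placing it on the side of the horizontal line through $v$ matching the sign of $y_{\Gamma}(u)-y_{\Gamma}(v)$ (and, when that sign is $0$, in the lower half-wedge) meets the side condition of (b.iii).

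The technical heart, and the step I expect to be hardest, is the edge half of condition (b.ii) together with the clause of Property (d) forbidding a \emph{foreign} sector (one based at a node $w\neq v$) from meeting $\largebox{v}$: here a sector is \emph{long}, with radius up to $\sqrt2 D_0B_0$, so it is not contained in $\largebox{w}$. The device I would use is to bound a sector by decomposing along its axis: every point of $S_{e'}$ lies within lateral distance $R_{e'}\sin\delta\le\tfrac{\sqrt2\,c_1B_0}{D_0}$ of the line through its edge, within the axial band of length $R_{e'}$ spanned by that edge, and within $O(\tfrac{B_0}{D_0})$ of the far endpoint beyond it. Since two disjoint edges of a planar grid drawing are at distance at least $\tfrac{B_0}{\sqrt2 D_0}$ — their closest pair is an endpoint-to-edge pair, each governed by (F2) — choosing $c_1<\tfrac12$ pushes any edge $f$ sharing no endpoint with $e$ outside this region, settling the edge half of (b.ii). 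The same lateral estimate, measured against the distance $\ge\tfrac{B_0}{\sqrt2 D_0}$ from $v$ to the edge of a foreign sector (F2) when $v$ projects onto that edge, and against the node separation $B_0$ of (F1) when it projects beyond an endpoint, keeps every foreign sector out of $\largebox{v}$, completing Property (d). All these inequalities close simultaneously for a single $B_0\in\Theta(D_0^3n)$, a constant $c_1<\tfrac12$ also below half the $\Omega(1/D_0^2)$ angular gap, and a sufficiently large $k_0$.
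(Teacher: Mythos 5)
Your proposal is correct and follows essentially the same route as the paper's proof: the same scaling factor $\Theta(D_0^3 n)$, the same sectors (radius equal to the edge length plus $\sqrt{2}n$, half-angle $\Theta(1/D_0^2)$), the same three grid facts (unit node separation, the $1/(\sqrt{2}D_0)$ point-to-line bound of \cref{claim:minimum-grid-distance}, and the $\Omega(1/D_0^2)$ angular gap), and the same identification of the edge-avoiding-$\largebox{v}$ clause as the constraint forcing the cubic factor. The only cosmetic deviation is that you certify the node half of Property~(b.ii) by an angular-separation argument where the paper uses a uniform distance bound from $S_e$ to $e$ for both nodes and edges.
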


\begin{lemma}\label{lemma:partially-canonical-from-partially-canonical}
For any $i\in \{1,\dots,k\}$, let $\Delta_{i-1}$ be a partially-canonical drawing of $T[i-1]$. There exists a partially-canonical drawing $\Delta_i$ of $T[i]$ and an $O(\deg(v_i))$-step planar morph $\mathcal M_{i-1,i}$ from $\Delta_{i-1}$ to $\Delta_{i}$ such that $w(\mathcal M_{i-1,i}) \leq w(\Delta_0) + \ell_0 + 4n$ and $h(\mathcal M_{i-1,i}) \leq  h(\Delta_0) + \ell_0 + 4n$.
\end{lemma}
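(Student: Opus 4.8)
The plan is to keep the whole drawing fixed outside the subtree $T_v$, where $v := v_i$, and to bring $T_v$ into a canonical drawing by only relocating the subtrees rooted at the children $u_1,\dots,u_d$ of $v$. Since $v$ is labeled \texttt{visited} only after all its children, each $u_j$ is \texttt{visited} in $T[i-1]$, so by Property~(a) the drawing of $T_{u_j}$ in $\Delta_{i-1}$ is already upward or downward canonical and, by \cref{remark:canonical-area}, contained in \smallbox{u_j}. I would define $\Delta_i$ as the drawing in which $v$ stays where it is in $\Delta_{i-1}$ and $T_v$ is drawn canonically with the orientation prescribed by Property~(a) (upward if $y_{\Delta_{i-1}}(v)\le y_{\Delta_{i-1}}(p(v))$, downward otherwise; note $v\ne r(T)$ since $T$ is rooted at a leaf), every node outside $T_v$ keeping its position. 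By \cref{remark:canonical-area} all of $T_v$ then lies in \smallbox{v}, which, when $p(v)$ is \texttt{unvisited}, is enclosed in the cleared sector $S_{(p(v),v)}$ by Property~(b.i); hence $\Delta_i$ satisfies Property~(a) for $v$. Properties~(b)--(d) for the remaining \texttt{unvisited} nodes are inherited from $\Delta_{i-1}$: the two drawings agree outside \largebox{v}, and by Properties~(c) and~(d) the large box \largebox{w} of any other \texttt{unvisited} node $w$ is disjoint from \largebox{v} and is met by no sector incident to $v$, so nothing we touch enters \largebox{w}.

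To realize $\mathcal M_{i-1,i}$ I would process the children one at a time, spending $O(1)$ steps on each so that the total is $O(d)\subseteq O(\deg(v_i))$. For a fixed child $u_j$ the steps are: (1) translate the rigid canonical drawing of $T_{u_j}$ along the cleared sector $S_{(v,u_j)}$ until its root reaches a grid point of the parking box $B_{u_j}\subseteq S_{(v,u_j)}\cap\mediumbox{v}$ granted by Property~(b.iii); (2) apply \cref{lemma:rotate} a constant number of times inside the $2n$-box $B_{u_j}$ to turn the orientation of $T_{u_j}$ into the one it must have in the canonical drawing of $T_v$; and (3) translate $T_{u_j}$ from $B_{u_j}$ to its final slot inside \smallbox{v}. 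Throughout, $v$ is held fixed and the edge $(v,u_j)$ remains inside the convex sector $S_{(v,u_j)}$, whose interior is free of every foreign node and edge by Property~(b.ii). Each keyframe is a grid drawing because $B_{u_j}$ has integer corners and \cref{lemma:rotate} maps canonical drawings to canonical drawings.

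Planarity of each step follows because a rigid translation of a planar subtree is a planar linear morph and each pinwheel step is planar by \cref{lemma:rotate}; since the motion of $T_{u_j}$ never leaves $S_{(v,u_j)}\cup\largebox{v}$, a region certified free of the rest of the drawing by Properties~(b.ii), (c), and~(d), no foreign edge and no other subtree can be crossed, and distinct sectors are internally disjoint by Property~(b.iv). Confining all motion to the $(\ell_0+4n)$-box \largebox{v} (together with the untouched remainder of $\Delta_{i-1}$) keeps the width and height below $w(\Delta_0)+\ell_0+4n$ and $h(\Delta_0)+\ell_0+4n$, as required. The main obstacle I anticipate lies in step~(3): once several children already occupy their canonical slots inside \smallbox{v}, the segment that carries the next child from $B_{u_j}$ into its slot must avoid the already-placed subtrees and the edges $(v,u_\ell)$. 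I would handle this by fixing a processing order that fills the slots monotonically (so that at insertion time the target slot is reachable by a segment staying in the still-empty part of \smallbox{v}) and, if a parking box sits on the opposite vertical side of $v$ from its slot, by routing the subtree through the ample free space of \largebox{v} with one extra intermediate step; verifying this collision-freeness directly from the sector/box structure of \cref{def:partially-canonical} is the technical heart of the argument.
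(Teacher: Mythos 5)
Your overall skeleton matches the paper's: you fix everything outside $T_{v_i}$, define $\Delta_i$ by redrawing $T_{v_i}$ canonically around the fixed position of $v_i$, park each child's (already canonical) subtree in the box $B_{u_j}$ granted by Property~(b.iii), reorient it with \cref{lemma:rotate}, and finally slide it to its slot inside $\smallbox{v_i}$. The inheritance of Properties~(a)--(d) and the area/step-count bookkeeping are also essentially as in the paper.

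However, the step you yourself flag as ``the technical heart'' --- moving each child from its parking box to its final slot without crossing the already-placed siblings or the edges $(v_i,u_\ell)$ --- is precisely where the paper's proof does all of its real work, and your sketch does not supply the missing argument. Once a subtree leaves its sector $S_{(v_i,u_j)}$, Properties~(b.ii) and~(b.iv) no longer protect it, and a straight segment from $B_{u_j}$ to the target slot can in general sweep across other sectors, parked subtrees, or the edge $(v_i,p(v_i))$; ``fill the slots monotonically'' is not enough, because the parking boxes are scattered angularly around $v_i$ while the slots are lined up on one side. The paper resolves this with a concrete choreography: it partitions the children into four angular regions $\mathcal R_1,\dots,\mathcal R_4$ relative to $v_i$ and $(v_i,p(v_i))$, reserves the empty slabs $\mathcal S_L$ and $\mathcal S_R$ of $\largebox{v_i}\setminus\mediumbox{v_i}$ as staging areas, processes the children of each region in a specific rotational order (e.g.\ clockwise from $h_\rightarrow(v_i)$ for $\mathcal R_2$, counter-clockwise for $\mathcal R_1$), interleaves the translations with pinwheel rotations performed \emph{inside the slab} rather than inside $B_{u_j}$, and invokes \cref{le:unidirectional-morph} to justify the final simultaneous horizontal shift of the lower subtrees. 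Each of these choices is needed in one of the planarity claims, so your proof is incomplete until you spell out an equivalent routing scheme. A further small slip: $r(T)$ is rooted at a leaf of the unrooted tree but is an internal node of the rooted tree, so $v_i=r(T)$ does occur (at $i=k$) and needs the separate (parent-free) treatment the paper gives it; your parenthetical ``$v\neq r(T)$'' is incorrect.
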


\ifshowproofs{
	\subsection{Proof of \cref{lemma:scaling}}\label{sse:lemma-scaling}
	
In order to prove \cref{lemma:scaling}, we show that there exists a constant $\beta_0$ such that setting $B_0 = \beta_0 D_0^3 n$ guarantees that Properties (a)--(d) of \cref{def:partially-canonical} are satisfied by the drawing $\Delta_0$ of $T[0]$ obtained by scaling the drawing $\Gamma_0$ of $T$ up by $B_0$. In the following, we often implicitly exploit $n\geq 1$ and $D_0\geq 1$.

	Regarding Property (a), recall that a node $u$ of $T$ is labeled \texttt{visited} in $T[0]$ if and only if $u$ is a leaf. Hence, for a \texttt{visited} node $u$ of $T[0]$, we have that $T_u=u$ and the drawing of $T_u$ in $\Delta_0$ is both upward canonical and downward canonical, trivially satisfying Property (a) of \cref{def:partially-canonical}.
	
	\begin{figure}[tb] 
		\centering
		\hfill
		\subfloat[\label{fig:prop-i}]
		{\includegraphics[scale=.75,page=2]{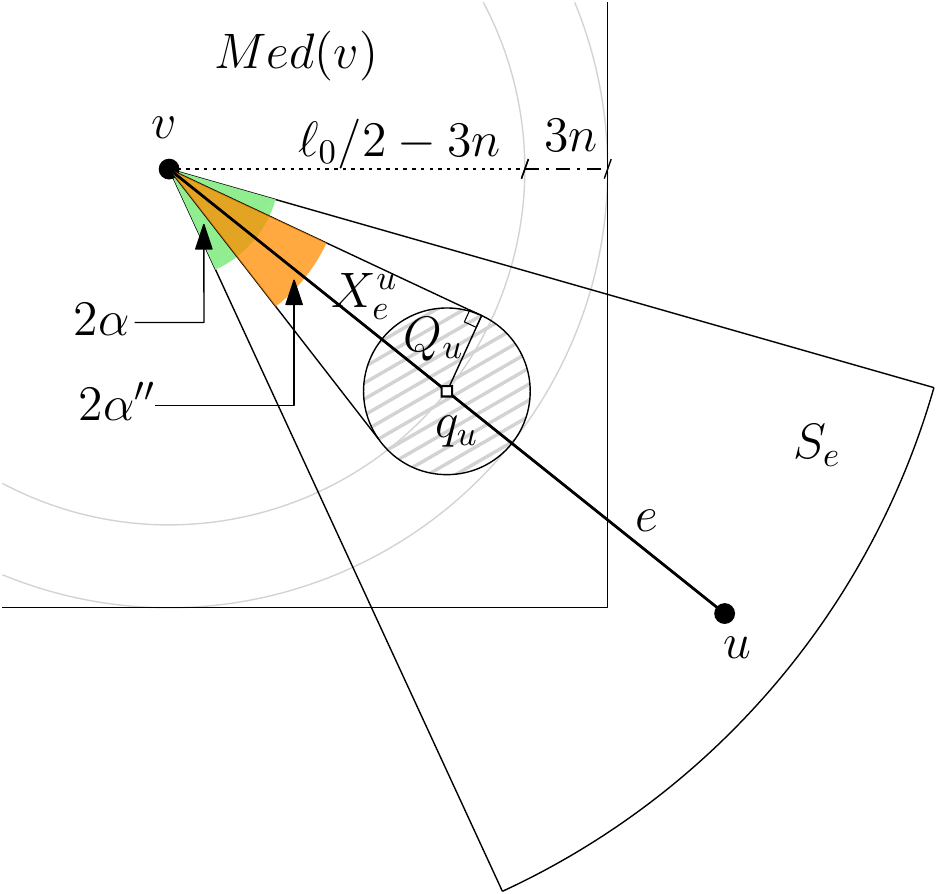}}
		\hfill
		\subfloat[\label{fig:prop-iii}]
		{\includegraphics[scale=.75,page=1]{props.pdf}}
		\hfill
		\\
		\hfill
		\subfloat[\label{fig:prop-iv}]
		{\includegraphics[scale=.75,page=3]{props.pdf}}
		\hfill
		\subfloat[\label{fig:prop-ii}]
		{\includegraphics[scale=.75,page=4]{props.pdf}}
		\hfill
		\caption{Illustrations for the proof of \cref{lemma:scaling}. (a) Property~\emph{(b.i)}. (b) Property~\emph{(b.iii)}. (c) Property~\emph{(b.iv)}. (d) Property~\emph{(b.ii)}.}
	\end{figure}
	
	We now show that Properties (b.i)--(b.iv) of \cref{def:partially-canonical} hold.
	We denote by $\ell_{\Gamma}(e)$ the length of an edge $e$ in a drawing $\Gamma$.
	For every edge $e=(v,u)$ of $T$ where $v$ is the parent of $u$ and $v$ is \texttt{unvisited}, let
	$C_e$ be a circumference centered at $v$ whose radius is a value $r_e = \ell_{\Delta_0}(e) + \sqrt{2}n$; also,
	let $W_e$ be a wedge whose central angle is bisected by $e$ and has a value $2\alpha$ to be determined later.
	Let $S_e$ be the sector of $C_e$ determined by the intersection of $C_e$ and $W_e$.
	We remark that $r_e$ depends on the length of the edge $e$, whereas $\alpha$ is the same for all the sectors.

	We start by showing that Property (b.i) holds; refer to \cref{fig:prop-i}. In fact, we prove a stronger statement, namely that $S_e$ encloses the disk $K_u$ centered at $u$ with radius $\sqrt{2}n$; note that $K_u$ encloses $\smallbox{u}$. 
	Let $W^u_e$ be the wedge containing $u$, centered at $v$, and delimited by the two lines passing through $v$ that are tangent to $K_u$. Denote by $2\alpha'$ the angle spanned by $W^u_e$.
	In order for $S_e$ to enclose $K_u$, it must happen that both $C_e$ and $W_e$ enclose $K_u$. Actually, the definition of $r_e$ directly implies that $C_e$ encloses $K_u$. Thus, in order to prove that $W_e$ encloses $K_u$, we only need to show that $\alpha' \leq \alpha$.
	By looking at the right triangle whose corners are $u$, $v$, and one of the intersection points of the lines delimiting $W^u_e$ with the boundary of $K_u$, we have that $\alpha' = \arcsin\left(\frac{\sqrt{2}n}{\ell_{\Delta_0}(e)}\right)$. Moreover, since $\ell_{\Gamma_0}(e) \geq 1$ and since $B_0 = \beta_0 D_0^3 n$, we have $\ell_{\Delta_0}(e) \geq \beta_0 D_0^3 n$. Therefore, we have $\alpha' \leq  \arcsin\left(\frac{\sqrt{2}n}{\beta_0 D_0^3 n}\right)$ and Property (b.i) is satisfied as long as
	
	\begin{equation}\label{cr:iv}
	\alpha \geq \arcsin\left(\frac{\sqrt{2}}{\beta_0 D_0^3}\right).
	\end{equation}
	
	Next, we show that Property (b.iii) holds; refer to \cref{fig:prop-iii}. Consider the disk $Q_u$ with radius $3n$ centered at a point $q_u$ of the straight-line segment representing $e$ in $\Delta_0$ at distance $\frac{\ell_0}{2}-3n$ from $v$. 
	
	We first prove that $Q_u$ contains a $2n$-box $B_u$ satisfying the conditions of Property (b.iii).
	Note that $Q_u$ lies inside $\mediumbox{v}$. Further, $Q_u$ does not contain $v$ if $\frac{\ell_0}{2} - 3n > 3n$, that is $\frac{k_0 D_0^2 n}{2} > 6n$, which is true as long as 
	
	\begin{equation}\label{eq:k0}
	k_0 > 12.
	\end{equation}
		
	If $q_u$ is a grid point, then the $2n$-box $B_u$ whose center is $c_u = q_u$ lies inside $\mediumbox{v}$ (since it lies inside $Q_u$), has corners with integer coordinates, and it is such that $y_{\Delta_0}(c_u) \leq y_{\Delta_0}(v)$ if and only if $y_{\Delta_0}(u) \leq y_{\Delta_0}(v)$.
	If $q_u$ is not a grid point, then consider the grid cell containing $q_u$. Let $c_u$ be a corner of this cell such that
	$y_{\Delta_0}(c_u) \leq y_{\Delta_0}(v)$ if and only if $y_{\Delta_0}(u) \leq y_{\Delta_0}(v)$. Note that the distance between $q_u$ and $c_u$ is less than $\sqrt{2}$. Therefore, the $2n$-box $B_u$ whose center is $c_u$ lies inside $Q_u$ (given that $3n > \sqrt{2}n + \sqrt{2}$) and has corners with integer coordinates.
	
	We now prove that $S_e$ encloses $Q_u$.
	Let $X^u_e$ be the wedge containing $q_u$, centered at $v$, and delimited by the two lines passing through $v$ that are tangent to $Q_u$. Denote by $2\alpha''$ the angle spanned by $X^u_e$.
	In order for $S_e$ to enclose $Q_u$, it must happen that $W_e$ encloses $Q_u$; this is implied by $\alpha'' \leq \alpha$, which we ensure next.
	By looking at the right triangle whose corners are $q_u$, $v$, and one of the intersection points of the lines delimiting $X^u_e$ with the boundary of $Q_u$, we have that $\alpha'' = 
	\arcsin\left(\frac{3n}{\ell_0/2 - 3n}\right) = \arcsin\left(\frac{3n}{k_0 D_0^2n/2 - 3n}\right)$. By~\cref{eq:k0}, we have $k_0 > 12$, which implies that $3n < \frac{k_0 D_0^2 n}{4}$. Hence, we have that $\alpha'' \leq 
	\arcsin\left(\frac{3n}{k_0 D_0^2n/2 - k_0 D_0^2n/4}\right)=\arcsin\left(\frac{12}{k_0 D_0^2}\right)$, which implies that Property (b.iii) is satisfied if 
	
	\begin{equation}\label{cr:v}
	\alpha \geq \arcsin\left(\frac{12}{k_0 D_0^2}\right).
	\end{equation}
	
	Before discussing Property (b.iv), we prove the following.
		
		\begin{claim}\label{claim:minimum-grid-distance}
			Let $A=(x_A,y_A)$, $B=(x_B,y_B)$, and $C=(x_C,y_C)$ be three non-collinear points on a $D \times D$ grid. Then the distance between $A$ and the line containing the segment $\overline{BC}$ is at least $\frac{1}{\sqrt{2}D}$.
		\end{claim}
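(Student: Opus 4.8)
The plan is to reduce everything to the elementary fact that twice the area of a lattice triangle is a positive integer. Let $d$ denote the distance from $A$ to the line $\ell$ through $B$ and $C$. First I would invoke the standard point--line distance formula expressed through the triangle area,
$$d = \frac{2\cdot \mathrm{Area}(ABC)}{|\overline{BC}|},$$
where $\mathrm{Area}(ABC)$ is the area of the triangle with vertices $A$, $B$, $C$, and then rewrite the numerator via the shoelace (cross-product) expression
$$2\cdot \mathrm{Area}(ABC) = \bigl| (x_B - x_A)(y_C - y_A) - (x_C - x_A)(y_B - y_A) \bigr|.$$

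The crucial observation is that, since $A$, $B$, and $C$ all have integer coordinates, the right-hand side above is a non-negative \emph{integer}; and since $A$, $B$, and $C$ are non-collinear, it is strictly positive, hence at least $1$. This disposes of the numerator with a single integrality remark and no computation.

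For the denominator I would simply bound $|\overline{BC}|$ by the diameter of the grid: as $B$ and $C$ lie on a $D\times D$ grid, we have $|x_B - x_C|\le D-1$ and $|y_B - y_C|\le D-1$, whence $|\overline{BC}| = \sqrt{(x_B-x_C)^2 + (y_B-y_C)^2} \le \sqrt{2}(D-1) < \sqrt{2}\,D$. Combining the lower bound on the numerator with this upper bound on the denominator gives $d \ge \frac{1}{\sqrt{2}(D-1)} > \frac{1}{\sqrt{2}\,D}$, which is the desired inequality.

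I do not expect any genuine obstacle here: the entire argument rests on the lattice-triangle integrality of twice the area together with a crude diameter bound. The only points requiring care are writing the area formula with $\overline{BC}$ as the base (so that dividing by $|\overline{BC}|$ indeed yields the distance from $A$) and using the correct grid diameter $\sqrt{2}(D-1)$ rather than $\sqrt{2}D$; both are routine.
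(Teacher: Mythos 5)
Your proposal is correct and essentially identical to the paper's proof: the paper's point--line distance formula has as its numerator exactly the shoelace expression for twice the triangle's area (whose integrality and positivity give the lower bound of $1$) and as its denominator exactly $|\overline{BC}|$, bounded by $\sqrt{2}D$ via the grid diameter. Your phrasing through the lattice-triangle area is just a more conceptual reading of the same computation.
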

		
		\begin{proof}
			The distance in the statement is equal to 
			\begin{equation}\label{eq:distance}
			\frac{|(y_C-y_B)x_A - (x_C-x_B)y_A + x_C y_B - x_B y_C|}{\sqrt{ (y_C -y_B)^2 + (x_C -x_B)^2 }}.
			\end{equation}

			Since the numerator of \cref{eq:distance} is the modulus of an expression only containing multiplications and sums between integers, its value is a non-negative integer. Also, since points $A$, $B$, and $C$ are not collinear, its value is at least $1$.
			The square root at the denominator of \cref{eq:distance} contains the sum of two values, each of which is at most $D^2$.
			Hence, the value of the denominator of \cref{eq:distance} is at most $\sqrt{2} D$.
			We thus have that the distance in the statement is at least $\frac{1}{\sqrt{2} D}$.
		\end{proof}

	Next, we show that Property (b.iv) holds; refer to \cref{fig:prop-iv}.
	Let $2\beta$ be the smallest angle formed by any two edges $e$ and $e'$ incident to $v$. In order to prove that $S_e$ and $S_{e'}$ are internally disjoint, it suffices to ensure that $\alpha < \beta$. 
	In the case in which $\beta \geq 45^\circ$, we ensure that $\alpha < \beta$ by the constraint
	
	\begin{equation}\label{cr:vi}
	\alpha < 45^\circ.
	\end{equation}

	Suppose that $\beta < 45^\circ$. Without loss of generality, let $e=(v,w)$ be the shortest between $e$ and $e'$ in $\Gamma_0$. Let $p$ be the projection of $w$ onto $e'$. 
	We have that $\ell_{\Gamma_0}(e) \leq D_0$ and that the length $||\overline{wp}||$ of the segment connecting $w$ and $p$ in $\Gamma_0$ is at least $\frac{1}{\sqrt{2}D_0}$ by \cref{claim:minimum-grid-distance}.
	By looking at the right triangle whose corners are $v$, $w$, and $p$,  we have that
	$\sin(2\beta) =  \frac{||\overline{wp}||}{\ell_{\Gamma_0}(e)} \geq \frac{1}{\sqrt{2}D_0^2}$.
	Hence, by \cref{cr:vi} we have that $\alpha < \beta$ as long as $\sin(2\alpha) \leq 2 \sin(\alpha) < \sin(2 \beta)$, which is ensured by the following constraint
	
	\begin{equation}\label{cr:vii}
	\alpha < \arcsin\left(\frac{1}{2 \sqrt{2} D_0^2}\right).
	\end{equation}
	
	We now prove that Property (b.ii) holds; refer to \cref{fig:prop-ii}. We are going to use the following.

	\begin{claim}\label{claim:circumference}
		Let $\mathcal C$ be the closed disk defined by the inequality $x^2+y^2\leq 1$. Let $x_1\geq x_0>0$ and let $\mathcal C_1$ be the subset of $\mathcal C$ whose points have $x$-coordinate greater than or equal to $x_1$. The points of $\mathcal C_1$ that are farthest from $(x_0,0)$ are $(x_1,\pm \sqrt{1-x_1^2})$.
	\end{claim}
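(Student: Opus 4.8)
The plan is to replace the Euclidean distance by its square, which is legitimate since distance is non-negative and hence maximizing $\|(x,y)-(x_0,0)\|$ over $\mathcal C_1$ is equivalent to maximizing $f(x,y) := (x-x_0)^2 + y^2$. Since $\mathcal C_1$ is closed and bounded (and nonempty, as $x_1 \le 1$), the maximum of $f$ over $\mathcal C_1$ is attained, so it suffices to bound $f$ from above by its value at $(x_1,\pm\sqrt{1-x_1^2})$ and then to check that equality forces exactly these two points.

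First I would record the value of $f$ at the candidate maximizers: using $x^2+y^2=1$ there, the quadratic term in $x$ cancels and one gets $f(x_1,\pm\sqrt{1-x_1^2}) = (x_1-x_0)^2 + (1-x_1^2) = x_0^2 + 1 - 2x_0x_1$. Then, for an arbitrary point $(x,y)\in\mathcal C_1$, I would exploit the two defining constraints one at a time. From $x^2+y^2\le 1$ we have $y^2 \le 1-x^2$, whence
\[
 f(x,y) = (x-x_0)^2 + y^2 \le (x-x_0)^2 + 1 - x^2 = x_0^2 + 1 - 2x_0 x.
\]
The right-hand side is a \emph{decreasing} linear function of $x$ precisely because $x_0 > 0$ --- this is the only place the hypothesis $x_0>0$ is used --- so the constraint $x \ge x_1$ yields $x_0^2 + 1 - 2x_0 x \le x_0^2 + 1 - 2x_0 x_1 = f(x_1,\pm\sqrt{1-x_1^2})$. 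Chaining the two inequalities proves that the claimed points are maximizers.

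Finally, I would verify that they are the \emph{only} maximizers. Equality throughout forces $y^2 = 1-x^2$ (tightness of the first inequality, i.e.\ the point lies on the unit circle) and $x = x_1$ (tightness of the second, using $x_0>0$), which together give exactly $(x_1, \pm\sqrt{1-x_1^2})$. There is essentially no obstacle here: the whole argument is a two-step chain of inequalities, and the only subtlety worth stating is that the cancellation of the quadratic term in $x$ reduces the problem to a monotone linear function, so that both boundary pieces of $\mathcal C_1$ (the circular arc, where $y^2=1-x^2$, and the vertical chord $x=x_1$) are handled simultaneously by the single bound above, with no need to treat the two cases separately.
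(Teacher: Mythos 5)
Your proof is correct and follows essentially the same route as the paper's: both reduce the (squared) distance, after invoking $x^2+y^2\le 1$, to the linear function $1+x_0^2-2x_0x$, which is decreasing in $x$ because $x_0>0$ and hence maximized on $\mathcal C_1$ at $x=x_1$. Your inequality-chain formulation over all of $\mathcal C_1$ is in fact slightly tidier than the paper's, which first asserts that the maximum is attained on the boundary and then substitutes $x^2+y^2=1$ there, even though the boundary of $\mathcal C_1$ also contains the chord $x=x_1$, on which that substitution is not valid.
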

		
	\begin{proof}
		First, note that the maximum distance from $(x_0,0)$ is achieved by a point on the boundary of $\mathcal C_1$. Let $(x,y)$ be any point on the boundary of $\mathcal C_1$. The distance between $(x_0,0)$ and $(x,y)$ is $d(x,y)=\sqrt{x^2-2xx_0 +x_0^2+y^2}$. Since $x^2+y^2=1$, we have $d(x,y)=\sqrt{1-2xx_0 +x_0^2}$. The derivative of $d(x,y)$ with respect to $x$ is equal to $\frac{-x_0}{\sqrt{1-2xx_0 +x_0^2}}$, which is negative when $x_0>0$. Hence, the maximum of $d(x,y)$ is achieved when $x=x_1$ and $y=\pm \sqrt{1-x_1^2}$.
	\end{proof}

	We prove that the distance in $\Delta_0$ between any point of the sector $S_e$ and $e$ is smaller than the distance between $e$ and any node or edge of $T-\{T_u + e\}$; that is, all the nodes and edges of $T$ that are not in $T_u + e$ are entirely outside $S_e$. Note that the distance between $e$ and any node or edge of $T-\{T_u + e\}$ is at least $\frac{\beta_0 D_0^2n}{\sqrt{2}}$, given that it is at least $\frac{1}{\sqrt{2}D_0}$ in $\Gamma_0$ by~\cref{claim:minimum-grid-distance}. 	
	
	Denote by $\kappa_1$ and $\kappa_2$ the intersections between the rays delimiting $W_e$ and $C_e$. Consider the line $\ell_u$ perpendicular to $e$ passing through $u$. We distinguish two cases, based on whether $\ell_u$ intersects the boundary of $S_e$  in $C_e$ or on the segments $\overline{v\kappa_1}$ and $\overline{v\kappa_2}$. 
	
	\begin{itemize}
		\item In the former case, consider the intersection point $p_{\kappa}$ between the segment $\overline{\kappa_1\kappa_2}$ and the edge $e$. Let $\mathcal K$ be the region of the plane whose boundary is the triangle with corners $v$, $\kappa_1$, and $\kappa_2$. The distance between any point of $\mathcal K$ and $e$ is smaller than or equal to $||\overline{p_{\kappa}\kappa_1}||$. Further, by~\cref{claim:circumference}, the distance between any point of $S_e\setminus \mathcal K$ and $p_{\kappa}$ is also at most $||\overline{p_{\kappa}\kappa_1}||$. Note that $||\overline{p_{\kappa}\kappa_1}||\leq \ell_{\Delta_0}(e) \tan(\alpha)$.
		\item In the latter case, consider the intersection points $\lambda_1$ and $\lambda_2$ between the rays delimiting $W_e$ and $\ell_u$. Let $\Lambda$ be the region of the plane whose boundary is the triangle with corners $v$, $\lambda_1$, and $\lambda_2$; further, let $\Lambda_1$ ($\Lambda_2$) be the region of the plane whose boundary is the triangle with corners $v$, $\lambda_1$, and $\kappa_1$ (resp.\ $v$, $\lambda_2$, and $\kappa_2$); finally, let $\Lambda_C$ be the region of the plane delimited by the segments $\overline{u\kappa_1}$ and $\overline{u\kappa_2}$, and by the arc of $C_e$ between $\kappa_1$ and $\kappa_2$. Note that $S_e = \Lambda \cup \Lambda_1 \cup \Lambda_2 \cup \Lambda_C$. 
		
		The distance between any point of $\Lambda$ and $e$ is smaller than or equal to $||\overline{u\lambda_1}||$. Further, the distance between any point of $\Lambda_1$ ($\Lambda_2$) and $e$ is smaller than or equal to $||\overline{u\kappa_1}||$ (resp.\ $||\overline{u\kappa_2}||$). Finally, by~\cref{claim:circumference}, the distance between any point of $\Lambda_C$ and $e$ is smaller than or equal to $||\overline{u\kappa_1}||$. Since $\lambda_1$ ($\lambda_2$) belongs to both $\Lambda$ and $\Lambda_1$ (resp.\ $\Lambda$ and $\Lambda_2$) and since $\kappa_1$ and $\kappa_2$ both belong to $\Lambda_C$, it follows that the distance between any point of $S_e$ and $e$ is smaller than or equal to~$||\overline{u\kappa_1}||$. 
		
		Since $||\overline{v \lambda_1}|| > ||\overline{v u}||$, we have $||\overline{\lambda_1 \kappa_1}|| < \sqrt{2}n$. Further, by the triangular inequality, we have that  $||\overline{u \kappa_1}|| < ||\overline{u \lambda_1}|| + ||\overline{\kappa_1 \lambda_1}||< \ell_{\Delta_0}(e) \tan(\alpha) + \sqrt{2}n$.
	\end{itemize}

	In order to prove that Property (b.ii) holds, we show that $\ell_{\Delta_0}(e) \tan(\alpha) + \sqrt{2}n < \frac{\beta_0 D_0^2n}{\sqrt{2}}$. Since $\tan(\alpha)=\frac{\sin(\alpha)}{\cos  (\alpha)}$ and by the constraint $\alpha<45^\circ$, we have  $\tan(\alpha)\leq \sqrt 2\sin(\alpha)$. Hence, $\ell_{\Delta_0}(e) \tan(\alpha) + \sqrt{2}n < \sqrt 2\sin(\alpha)\ell_{\Delta_0}(e) + \sqrt{2}n < \sqrt 2\sin(\alpha)\beta_0 D_0^4 n + \sqrt{2}n<2\sqrt 2\sin(\alpha)\beta_0 D_0^4 n$. Thus, we have $2\sqrt 2\sin(\alpha)\beta_0 D_0^4 n< \frac{\beta_0 D_0^2n}{\sqrt{2}}$ as long as $4 D_0^2 \sin(\alpha) < 1$. This gives us the constraint
	
	\begin{equation}\label{cr:viii}
	\alpha < \arcsin\left(\frac{1}{4 D^2_0}\right).
	\end{equation}

	Concerning Property (c), note that the minimum distance between two nodes in $\Gamma_0$ is $1$, given that $\Gamma_0$ is a grid drawing. Therefore, the minimum distance between two nodes in $\Delta_0$ is at least $\beta_0 D^3_0 n$.
	Also, for any \texttt{unvisited} node $v$ of $T[0]$, the points of $\largebox{v}$ that are furthest from $v$ are its corners. Such points are
	at distance $\sqrt{2} \frac{\ell_0 + 4n}{2} = \sqrt{2} \frac{k_0 D_0^2 n + 4n}{2} \leq \frac{\sqrt{2}}{2}(k_0+4)D_0^2n$ from $v$. Therefore, there exist no two \texttt{unvisited} nodes $v$ and $w$ of $T[0]$ whose corresponding boxes $\largebox{v}$ and $\largebox{w}$ intersect in $\Delta_0$, as long as $||\overline{vw}||\geq \beta_0 D^3_0 n> \sqrt{2}(k_0+4)D_0^2n$, which is true if the following holds
	
	\begin{equation}\label{cr:i}
	\beta_0 > \sqrt{2}(k_0+4).
	\end{equation}

	Consider Property (d) of \cref{def:partially-canonical}. 
	The fact that, for any \texttt{unvisited} node $v$, the box $\largebox{v}$ does not contain any node different from $v$ follows from the same arguments exploited for Property (c), as long as \cref{cr:i} holds.
	

	We prove that, for any edge $e$ that is not incident to $v$, the sector $S_e$ does not intersect $\largebox{v}$; since $S_e$ contains $e$, this implies that $e$ does not intersect $\largebox{v}$ either. Let $s$ be the straight-line segment representing $e$ in $\Delta_0$. As argued when proving Property~(b.ii), any point of $S_e$ is at distance at most $\ell_{\Delta_0}(e) \tan(\alpha) + \sqrt{2}n < 2\sqrt 2\sin(\alpha)\beta_0 D_0^4 n$ from $s$. 	Further, the maximum distance of any point of $\largebox{v}$ from $v$ is $\sqrt{2} \frac{k_0 D_0^2 n + 4n}{2}<\sqrt 2 k_0 D_0^2 n$, where we exploited \cref{eq:k0}. By \cref{claim:minimum-grid-distance}, we have that the minimum distance between $s$ and $v$ is at least $\frac{1}{\sqrt{2} D_0}$ in $\Gamma_0$ and, thus, at least $\frac{\beta_0}{\sqrt{2}} D_0^2 n$ in $\Delta_0$.
	Therefore, the box $\largebox{v}$ is not traversed by $S_e$ as long as
	$\frac{\beta_0}{\sqrt{2}} D_0^2 n  > \sqrt 2 k_0 D_0^2 n + 2\sqrt 2\sin(\alpha)\beta_0 D_0^4 n$, that is $\beta_0 > 2 k_0 + 4\sin(\alpha)\beta_0 D_0^2$. We have $4\sin(\alpha)\beta_0 D_0^2 < \beta_0/2$ as long as 
	
	\begin{equation}\label{cr:ix}
	\alpha < \arcsin\left(\frac{1}{8 D_0^2}\right),
	\end{equation}
	
	\noindent hence Property~(d) is satisfied as long as 
	
	\begin{equation}\label{cr:iii}
	\beta_0 > 4 k_0. 
	\end{equation}

	We now choose $k_0 = 150$, $\beta_0 = 800$, and $\alpha = \arcsin(\frac{0.1}{D_0^2})$. 
	We have that \cref{eq:k0,cr:i,cr:iii} are satisfied by the choice of $k_0$ and $\beta_0$.
	Inequalities \cref{cr:vi,cr:vii,cr:viii} are weaker than inequality \cref{cr:ix}, which is true since $\frac{1}{8} > 0.1$.
	Inequality \cref{cr:v} holds true since $\frac{12}{k_0}<0.1$.
	Inequality \cref{cr:iv} holds true since $\frac{\sqrt{2}}{\beta_0}<0.1$ and $D_0 \geq 1$.
}
\fi

\subsection{Proof of \cref{lemma:partially-canonical-from-partially-canonical}}\label{sse:lemma-new-canonical}

We denote by $T^*$ the tree obtained by removing $T_{v_i}$ from $T$.
Let $\Delta_i$ be the drawing of $T$ obtained from $\Delta_{i-1}$ by redrawing $T_{v_i}$ so that it is upward canonical, if $v_i=r(T)$ or if $v_i\neq r(T)$ and $y_{\Delta_{i-1}}(v_i) \leq y_{\Delta_{i-1}}(p(v_i))$, or downward canonical, otherwise, while keeping the placement of $v_i$ and of every node of $T^*$ unchanged. We have the following.

\begin{lemma}\label{lem:partially-canonical-delta-i}
The drawing $\Delta_i$ is a partially-canonical drawing of $T[i]$.
\end{lemma}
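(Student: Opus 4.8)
The plan is to verify, one by one, every requirement of \cref{def:partially-canonical} for $\Delta_i$, exploiting the fact that $\Delta_i$ is obtained from $\Delta_{i-1}$ by moving only the proper descendants of $v_i$, while $v_i$, the entire tree $T^*$, and the edge $(p(v_i),v_i)$ keep their positions. The single structural fact that drives everything is that the relocated subtree is now canonical and therefore, by \cref{remark:canonical-area}, lies inside $\smallbox{v_i}\subseteq\largebox{v_i}$. Since $v_i$ is \texttt{unvisited} in $T[i-1]$, Property~(d) of $\Delta_{i-1}$ tells us that $\largebox{v_i}$ is \emph{private} to $v_i$ (it contains no other node and is met only by edges and sectors incident to $v_i$), and Property~(c) of $\Delta_{i-1}$ tells us that $\largebox{v_i}$ is disjoint from $\largebox{w}$ for every other \texttt{unvisited} node $w$. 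Thus the relocation is confined to a region that interferes with nothing except the objects attached to $v_i$.

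First I would check that $\Delta_i$ is an order-preserving straight-line planar grid drawing. It is straight-line and on the grid because the canonical drawing of $T_{v_i}$ uses integer coordinates and $v_i$ sits at an integer point. For planarity, crossings inside $T_{v_i}$ are excluded because a canonical drawing is planar; crossings inside $T^*$, together with the unchanged segment $(p(v_i),v_i)$, are inherited from $\Delta_{i-1}$; and crossings between $T^*$ and $T_{v_i}$ are excluded because $T_{v_i}\subseteq\smallbox{v_i}$ while, by Property~(d) of $\Delta_{i-1}$, no node or edge of $T^*$ meets $\largebox{v_i}$. The only genuinely new interaction is between the segment $(p(v_i),v_i)$ and the relocated $T_{v_i}$: here the orientation choice is exactly what is needed, since using an upward (resp.\ downward) canonical drawing precisely when $y_{\Delta_{i-1}}(v_i)\le y_{\Delta_{i-1}}(p(v_i))$ (resp.\ $>$) places every proper descendant of $v_i$ strictly below (resp.\ above) the horizontal line through $v_i$, whereas the segment $(p(v_i),v_i)$ stays weakly on the opposite side of that line. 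The same orientation choice, together with the fact that a canonical drawing lists the children of each node in left-to-right order, gives that the counter-clockwise rotation at $v_i$ is (parent, first child, $\dots$, last child), i.e.\ order-preserving; at every proper descendant the rotation is inherited from the canonical drawing. This also yields Property~(a): $T_{v_i}$ is canonical with the orientation dictated by the sign of $y_{\Delta_i}(v_i)-y_{\Delta_i}(p(v_i))$, and every subtree of a canonical drawing is itself canonical, so all \texttt{visited} descendants of $v_i$ satisfy~(a); the \texttt{visited} nodes outside $T_{v_i}$ are untouched.

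Next I would establish Properties~(b)--(d) by reusing, verbatim, the sectors and boxes certified for $\Delta_{i-1}$, discarding only the sectors of the child-edges of $v_i$, which are no longer required now that $v_i$ is \texttt{visited}. Every \texttt{unvisited} node of $T[i]$ is \texttt{unvisited} in $T[i-1]$ and is neither $v_i$ nor a proper descendant of $v_i$ (the latter are all \texttt{visited}), hence it does not move; this immediately gives Properties~(c) and~(d), because the relocated subtree lives in $\smallbox{v_i}\subseteq\largebox{v_i}$, which is disjoint from every surviving $\largebox{w}$. For Property~(b), consider a surviving sector $S_e$ with $e=(v,u)$ and $v$ \texttt{unvisited} in $T[i]$. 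If $T_u\supseteq T_{v_i}$ (that is, $u=v_i$ or $u$ is a proper ancestor of $v_i$), then the nodes and edges of $T_{v_i}$ are admissible inside $S_e$, so relocating them changes nothing; in particular for $e=(p(v_i),v_i)$ we have $T_{v_i}\subseteq\smallbox{v_i}\subseteq S_e$ by~(b.i). If instead $T_u$ is disjoint from $T_{v_i}$, then $e$ is not incident to $v_i$, so $S_e$ does not meet $\largebox{v_i}$ by Property~(d) of $\Delta_{i-1}$, hence $S_e$ avoids the relocated $T_{v_i}$ altogether and~(b.ii) is preserved. Conditions~(b.i), (b.iii), and~(b.iv) involve only $v$, $u$, and the sectors around $v$, none of which moved, so they carry over unchanged.

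I expect the main obstacle to be the local analysis at $v_i$ used in the planarity step: proving that the single segment $(p(v_i),v_i)$ does not cross the compacted canonical copy of $T_{v_i}$ and that the rotation system at $v_i$ is correct. Both facts hinge on the horizontal-line separation produced by the upward/downward choice, and both need care in the boundary situation $y_{\Delta_{i-1}}(v_i)=y_{\Delta_{i-1}}(p(v_i))$, where the separating line is attained by the segment itself. A secondary point that must be stated cleanly is the recursive claim that every subtree of a canonical drawing is again canonical with the inherited orientation, since this is precisely what reduces Property~(a) for the descendants of $v_i$ to the construction of $\Delta_i$.
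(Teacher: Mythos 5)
Your proposal is correct and follows essentially the same route as the paper's proof: it isolates the same three construction facts (the drawing of $T^*$ and the position of $v_i$ are unchanged, and $T_{v_i}$ is redrawn canonically with the orientation dictated by $y_{\Delta_{i-1}}(v_i)$ versus $y_{\Delta_{i-1}}(p(v_i))$), proves planarity by the same four-way decomposition using \cref{remark:canonical-area} together with Properties~(c) and~(d) of $\Delta_{i-1}$, and re-certifies Properties~(a)--(d) by reusing the sectors of $\Delta_{i-1}$ with the same case split on whether $T_u$ contains or is disjoint from $T_{v_i}$. The two points you flag as needing care (the horizontal-line separation from $(v_i,p(v_i))$ and the fact that subtrees of a canonical drawing are canonical with the inherited orientation) are exactly the points the paper handles, albeit more tersely, via item~(3) of its construction.
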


\begin{proof}
By construction, the drawing $\Delta_i$ is such that:
\begin{enumerate}[(1)]
	\item the drawing of $T^*$ is the same both in $\Delta_i$ and in $\Delta_{i-1}$, 
	\item $v_i$ is in the same position in $\Delta_i$ and in $\Delta_{i-1}$, and
	\item the drawing of $T_{v_i}$ in $\Delta_i$ is upward canonical, if $v_i=r(T)$ or if $v_i\neq r(T)$ and  $y_{\Delta_{i-1}}(v_i) \leq y_{\Delta_{i-1}}(p(v_i))$, or downward canonical, otherwise.
\end{enumerate}	
	
The drawing $\Delta_i$ is straight-line by construction and it is an order-preserving grid drawing since $\Delta_{i-1}$ is. The planarity of $\Delta_i$ can be proved as follows. First, the drawing of $T^*+(v_i,p(v_i))$ in $\Delta_i$ is planar since it coincides with the drawing of $T^*+(v_i,p(v_i))$ in $\Delta_{i-1}$, by~(1) and~(2), and since $\Delta_{i-1}$ is planar.  Second, the drawing of $T_{v_i}$ in $\Delta_i$ is planar since it is a canonical drawing, by construction. Third, the drawing of $T_{v_i}$ in $\Delta_i$ does not intersect any node or edge of $T^*$; namely, the drawing of $T_{v_i}$ in $\Delta_i$ is contained in $\smallbox{v_i}$, by~\cref{remark:canonical-area}; further $\smallbox{v_i}$ is contained in $\largebox{v_i}$, by definition; finally, $\largebox{v_i}$ contains no node or edge of $T^*$, by Property (d) of $\Delta_{i-1}$. Fourth, the drawing of $T_{v_i}$ in $\Delta_i$ does not intersect the edge $(v_i,p(v_i))$, if such an edge exists, by~(3).  
	
We now show that $\Delta_i$ satisfies Properties (a)--(d) of \cref{def:partially-canonical}.

	We start with Property (a). Consider any \texttt{visited} node $u$. If $u$ is not a node of $T_{v_i}$, then $T_u \subset T^*$. Thus, Property (a) holds for $u$ due to (1) and to the fact that $\Delta_{i-1}$ satisfies Property~(a). 
	If $u$ is a node of $T_{v_i}$, then the property holds due to (3).

	
	We show that $\Delta_i$ satisfies Properties (b.i)--(b.iv) of \cref{def:partially-canonical} by defining the sector $S_e$, for each edge $e=(u,v)$ where $v$ is the parent of $u$ and $v$ is \texttt{unvisited}, as the one defined for the edge $e$ in $\Delta_{i-1}$. Note that any \texttt{unvisited} node $v$ of $T[i]$ is also \texttt{unvisited} in $T[i-1]$.
	
	Property (b.i) holds since the \texttt{unvisited} nodes of $T[i]$, as well as their children, have the same placement in $\Delta_i$ and in $\Delta_{i-1}$ (by (1) and (2)) and since $\Delta_{i-1}$ satisfies Property (b.i).
	
	We show that Property (b.ii) holds. Consider any edge $e=(u,v)$ of $T[i]$, where $v$ is the parent of $u$ and $v$ is \texttt{unvisited}; note that $v$ is a node of $T^*$. By~(1) and since $\Delta_{i-1}$ satisfies Property (b.ii), we have that $S_e$ contains no node of $T^*$ with the exception of $v$ and of, possibly, the nodes of $T_u$, and contains no edge of $T^*$ with the exception of $(u,v)$ and of, possibly, the edges of $T_u$. If $u=v_i$ or if $u$ is a proper ancestor of $v_i$, then this completes the proof. Otherwise, it remains to argue that $S_e$ contains no node or edge of $T_{v_i}$; this follows from the fact that $S_e$ does not intersect $\largebox{v_i}$, by Property~(d) of $\Delta_{i-1}$ and by~(1) and~(2), and from the fact that the drawing of $T_{v_i}$ in $\Delta_i$ is canonical, hence by~\cref{remark:canonical-area} it is contained in $\smallbox{v_i}$ and thus in $\largebox{v_i}$.
	
	The drawing $\Delta_{i}$ trivially satisfies Properties (b.iii) and (b.iv), given that $\Delta_{i-1}$ satisfies the same properties, and given that the sector $S_e$ of any edge $e=(u,v)$ of $T[i]$, where $v$ is \texttt{unvisited} and $u$ is a child of $v$, is the same both in $\Delta_{i-1}$ and in $\Delta_{i}$, by~(1) and~(2) and by construction.   

	We show that $\Delta_i$ satisfies Property (c) of \cref{def:partially-canonical}.
	First, for any two \texttt{unvisited} nodes $v$ and $w$ of $T[i-1]$, it holds $\largebox{v} \cap \largebox{w} = \emptyset$ in $\Delta_{i-1}$, since $\Delta_{i-1}$ satisfies Property (c) of \cref{def:partially-canonical}.
	Further, any node that is \texttt{unvisited} in $T[i]$ is also \texttt{unvisited} in $T[i-1]$. 
	Finally, by~(1) the \texttt{unvisited} nodes of $T[i]$ have the same placement in $\Delta_i$ as in $\Delta_{i-1}$. Thus, Property (c) of \cref{def:partially-canonical} holds~for~$\Delta_i$.
	
	
	We show that $\Delta_i$ satisfies Property (d) of \cref{def:partially-canonical}. Consider any \texttt{unvisited} node $v$ of $T[i]$ and note that $v$ is in $T^*$. By~(1) the node $v$ has the same placement in $\Delta_i$ as in $\Delta_{i-1}$, hence the box $\largebox{v}$ is the same both in $\Delta_i$ and in $\Delta_{i-1}$. By~(1) and~(2) and since $\Delta_{i-1}$ satisfies Property (d) of \cref{def:partially-canonical}, it follows that $\largebox{v}$ contains no node of $T^*+(v_i,p(v_i))$ different from $v$, and any edge $e$ of $T^*+(v_i,p(v_i))$ or any sector $S_e$ of an edge $e$ of $T^*+(v_i,p(v_i))$ intersecting $\largebox{v}$ is such that $e$ is incident to $v$. Further, $\largebox{v}$ contains no node or edge of $T_{v_i}$ since the drawing of $T_{v_i}$ in $\Delta_i$ is contained in $\smallbox{v_i}$, by~\cref{remark:canonical-area}, since $\smallbox{v_i}$ is contained in $\largebox{v_i}$, by definition, and since $\largebox{v}\cap \largebox{v_i}=\emptyset$, by Property (c) of $\Delta_{i-1}$. Finally, note that no sector is defined for any edge of $T_{v_i}$, as all the nodes of $T_{v_i}$ are \texttt{visited} in $T[i]$.
\end{proof}

In order to prove \cref{lemma:partially-canonical-from-partially-canonical}, it remains to show how to construct an $O(\deg(v_i))$-step planar morph $\mathcal M_{i-1,i}$ from $\Delta_{i-1}$ to $\Delta_{i}$ such that $w(\mathcal M_{i-1,i}) \leq w(\Delta_0) + \ell_0 + 4n$ and $h(\mathcal M_{i-1,i}) \leq h(\Delta_0) + \ell_0 + 4n$. This is done in several phases as follows.
Since the drawing of $T^*$ stays unchanged during $\mathcal M_{i-1,i}$, no two edges in $T^*$ cross during such a morph. Thus, we will omit to repeat this in the proofs of the planarity of the morphing steps that compose $\mathcal M_{i-1,i}$.

\begin{figure}[tb]
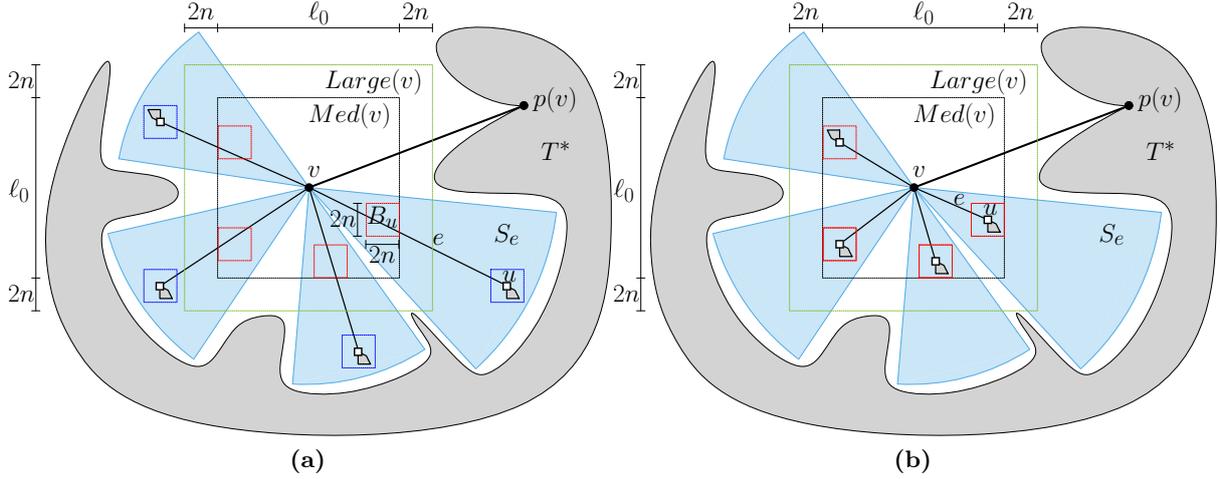
 
	\centering
	\subfloat[]
	{\includegraphics[width=.5\textwidth,page=1]{boxed.pdf}}
	\hfil
	\subfloat[]
	{\includegraphics[width=.5\textwidth,page=2]{boxed.pdf}}
	\caption{
		(a) A partially-canonical drawing $\Delta_{i-1}$ of the tree $T[i-1]$; the subtree $T^*$ lies in the gray region, \texttt{visited} and \texttt{unvisited} nodes are represented as squares and circles, respectively. 
		(b) The drawing $\Delta'$ of the morph $\langle \Delta_{i-1},\Delta'\rangle$ of \cref{claim:m-prime}.}
\label{fig:partially-canonical-first-morph}
\end{figure}

First, consider the drawing $\Delta'$ of $T$ obtained as described next; refer to \cref{fig:partially-canonical-first-morph}.
Initialize $\Delta'=\Delta_{i-1}$. 
Then, for each child $u$ of $v_i$, translate the drawing of $T_{u}$ (which is an upward or downward canonical drawing, by Property~(a) of $\Delta_{i-1}$) so that $u$ is at the center of a $2n$-box $B_u$ that lies in the intersection between $S_e$ and $\mediumbox{v_i}$, whose corners have integer coordinates, and whose center $c_u$ is such that $y_{\Delta_{i-1}}(c_u) \leq y_{\Delta_{i-1}}(v_i)$ if and only if $y_{\Delta_{i-1}}(u) \leq y_{\Delta_{i-1}}(v_i)$; such a box exists by Property~(b.iii) of $\Delta_{i-1}$. 
%
%

\begin{claim}\label{claim:m-prime}
	The	morph $\langle \Delta_{i-1},\Delta'\rangle$ is planar.
\end{claim}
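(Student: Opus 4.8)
The plan is to use the sectors $S_e$ provided by Property~(b) of $\Delta_{i-1}$ to confine each moving subtree to a region disjoint from the rest of the drawing. First I would observe that $\langle \Delta_{i-1},\Delta'\rangle$ is a single linear morphing step in which the nodes of $T^*$ and the node $v_i$ stay fixed, while for each child $u$ of $v_i$ every node of $T_u$ is translated by the common vector carrying $u$ to the center $c_u$ of $B_u$; thus each $T_u$ undergoes a rigid translation. Consequently, the drawing of $T^*+(v_i,p(v_i))$ is fixed throughout and coincides with its (planar) drawing in $\Delta_{i-1}$, and each $T_u$ stays internally planar, being merely translated. Hence the only crossings that could possibly arise involve a moving subtree $T_u$ together with the edge $e=(v_i,u)$ meeting either the fixed part $T^*+(v_i,p(v_i))$ or another moving pair $T_{u'}\cup e'$.

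The key step is to show that, for each child $u$ of $v_i$, the drawing of $T_u$ together with the edge $e$ remains inside the sector $S_e$ at every time of the morph. I would first note that $S_e$ is convex, being a circular sector centered at $v_i$ whose central angle $2\alpha$ is far smaller than $\pi$. At the start of the morph, $T_u$ is a canonical drawing rooted at $u$, so by \cref{remark:canonical-area} (using $|V(T_u)|\le n$) it lies inside $\smallbox{u}$, and $\smallbox{u}\subseteq S_e$ by Property~(b.i). At the end of the morph, the translated $T_u$ lies inside the $2n$-box $B_u$ centered at $c_u$, and $B_u\subseteq S_e\cap\mediumbox{v_i}\subseteq S_e$ by Property~(b.iii). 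Since each node of $T_u$ moves along a straight segment whose endpoints both lie in the convex region $S_e$, its entire trajectory lies in $S_e$; and since both endpoints of each edge of $T_u$, as well as both endpoints $v_i$ and $u$ of $e$, lie in $S_e$ at every instant, convexity guarantees that all these segments remain inside $S_e$ throughout.

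With the confinement established, planarity follows from the isolation properties of the sectors. By Property~(b.iv), the sectors $S_e$ and $S_{e'}$ of two distinct edges incident to $v_i$ are internally disjoint, so $T_u\cup e$ and $T_{u'}\cup e'$ can meet only at their common endpoint $v_i$, which is not a crossing. By Property~(b.ii), the sector $S_e$ contains no node of $T$ other than $v_i$ and the nodes of $T_u$, and no edge other than $e$ and the edges of $T_u$; since $T^*$ and $(v_i,p(v_i))$ are fixed and $T_u\cup e$ never leaves $S_e$, the moving subtree cannot cross the static part of the drawing except at the shared endpoint $v_i$. Combining these observations rules out every potential crossing and proves that $\langle \Delta_{i-1},\Delta'\rangle$ is planar.

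I expect the main obstacle to be the confinement argument of the second paragraph: one must verify that $S_e$ is convex and that \emph{both} the initial placement of $T_u$ (inside $\smallbox{u}$) and its final placement (inside $B_u$) lie within $S_e$, so that the straight-line interpolation of the rigid translation never escapes $S_e$. Once this geometric containment is in hand, the disjointness and emptiness guarantees of Properties~(b.ii) and~(b.iv) reduce the rest of the argument to routine case analysis.
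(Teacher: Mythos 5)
Your confinement argument is the same as the paper's: $T_u$ starts in \smallbox{$u$}~$\subseteq S_e$ by Property~(b.i) and \cref{remark:canonical-area}, ends in $B_u\subseteq S_e$ by Property~(b.iii), and the convexity of the sector keeps $T_u+(v_i,u)$ inside $S_e$ throughout, after which Properties~(b.ii) and~(b.iv) rule out crossings with the static part and with the other moving subtrees. That part is fine (and making the convexity of $S_e$ explicit is a nice touch).

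However, your case analysis is incomplete, and the omission is a genuine gap. You assert that ``the only crossings that could possibly arise involve a moving subtree $T_u$ together with the edge $e=(v_i,u)$ meeting either the fixed part \dots or another moving pair,'' and you dismiss internal crossings on the grounds that each $T_u$ ``undergoes a rigid translation.'' But the object $T_u+(v_i,u)$ is \emph{not} rigidly translated: $v_i$ stays fixed while $u$ and all of $T_u$ move, so the edge $(v_i,u)$ changes direction during the morph and could in principle sweep across the (translating) drawing of $T_u$ and cross one of its edges. Nothing in your argument excludes this. The paper handles exactly this case by using the clause of Property~(b.iii) that you never invoke, namely that $y_{\Delta_{i-1}}(c_u)\leq y_{\Delta_{i-1}}(v_i)$ if and only if $y_{\Delta_{i-1}}(u)\leq y_{\Delta_{i-1}}(v_i)$: this guarantees that $v_i$ lies above $u$ in $\Delta'$ exactly when it does in $\Delta_{i-1}$ (and hence throughout the linear morph), so the edge $(v_i,u)$ stays on the side of the horizontal line through $u$ opposite to the upward (resp.\ downward) canonical drawing of $T_u$ and cannot cross it. You need to add this step to close the proof.
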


\begin{proof}
	First, for each child $u$ of $v_i$, the drawing of $T_u$ lies in the interior of the sector $S_e$ with $e = (v_i,u)$ both in $\Delta_{i-1}$ and in $\Delta'$; the former follows by Property~(b.i) of $\Delta_{i-1}$ and by \cref{remark:canonical-area}, while the latter follows by Property~(b.iii) of $\Delta_{i-1}$, by \cref{remark:canonical-area}, and by construction. It follows that the drawing of $T_u+(v_i,u)$ lies in the interior of $S_e$ throughout the morph $\langle \Delta_{i-1},\Delta'\rangle$. Thus, by Properties~(b.ii) and~(b.iv) of $\Delta_{i-1}$, the drawing of $T_u+(v_i,u)$ crosses neither the drawing of $T^*+(v_i,p(v_i))$ nor the drawing of $T_w+(v_i,w)$ during $\langle \Delta_{i-1},\Delta'\rangle$, where $w\neq u$ is a child of $v_i$.
	
	It remains to prove that no two edges in $T_u + (v_i,u)$ cross each other during $\langle \Delta_{i-1},\Delta'\rangle$, for each child $u$ of $v_i$. By construction, the drawing of $T_u$ in $\Delta'$ is a translation of the drawing of $T_u$ in $\Delta_{i-1}$, hence no two edges of $T_u$ cross during the morph.
	Further, by construction, we have $y_{\Delta_{i-1}}(c_u) \leq y_{\Delta_{i-1}}(v_i)$ if and only if $y_{\Delta_{i-1}}(u) \leq y_{\Delta_{i-1}}(v_i)$. This implies that $v_i$ lies above $u$ in $\Delta'$ if and only if it lies above $u$ in $\Delta_{i-1}$. Therefore, the edge $(v_i,u)$ does not cross any edge of $T_u$ during the morph. This concludes the proof that $\langle \Delta_{i-1},\Delta'\rangle$ is planar.
\end{proof}

\begin{figure}[t]
	\centering
	\includegraphics[page=7]{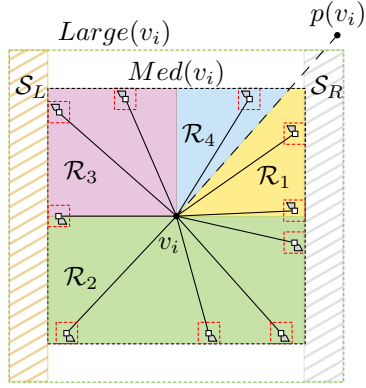}\hfil
	\caption{Regions for $v_i$.}
	\label{fig:regions}
\end{figure}

Second, we show how to move the subtrees rooted at the children of $v_i$ in the interior of $\largebox{v_i}$, so that they land in the position they have in $\Delta_i$. By Property~(d) of $\Delta_{i-1}$, no node or edge of $T^*$ intersects $\largebox{v_i}$. Since the drawing of $T^*+(v_i,p(v_i))$ stays unchanged throughout the morph from $\Delta'$ to $\Delta_i$, no node or edge of $T^*$ crosses any node or edge of $T_{v_i}+(v_i,p(v_i))$ during such a morph. Thus, in the proofs of the planarity of the morphing steps that compose the morph from $\Delta'$ to $\Delta_i$, we will only need to prove that the subtrees rooted at the children of $v_i$ do not cross each other and do not cross the edges incident to $v_i$.

The way we move the subtrees rooted at the children of $v_i$ depends on their placement with respect to $v_i$ and to the drawing of the edge $(v_i,p(v_i))$. We assume that, if $v_i\neq r(T)$, then $y(p(v_i))\geq y(v_i)$ and $x(p(v_i)) \geq x(v_i)$; the other cases can be treated similarly. We distinguish four regions $\mathcal R_1$, $\mathcal R_2$, $\mathcal R_3$, and $\mathcal R_4$ defined as follows; refer to \cref{fig:regions}. Let $h_\rightarrow(v)$ and $h_\leftarrow(v)$ be the horizontal rays originating at a node $v$ and directed rightward and leftward, respectively. Further, let $h_\uparrow(v)$ and $h_\downarrow(v)$ be the vertical rays originating at a node $v$ and directed upward and downward, respectively.

\begin{description}
	\item[Region $\mathbf{\mathcal R_1}$] is defined as follows. If $v_i=r(T)$, then $\mathcal R_1$ is the intersection of $\mediumbox{v_i}$ with the wedge centered at $v_i$ obtained by clockwise rotating $h_\uparrow(v_i)$ until it coincides with $h_\rightarrow(v_i)$. Otherwise, $\mathcal R_1$ is the intersection of $\mediumbox{v_i}$ with the wedge centered at $v_i$ obtained by counter-clockwise rotating $h_\rightarrow(v_i)$ until it passes~through~$p(v_i)$; note that, if $(v_i,p(v_i))$ is a horizontal segment, then $\mathcal R_1=\emptyset$. 
	\item[Region $\mathbf{\mathcal R_2}$] is the rectangular region that is the lower half of $\mediumbox{v_i}$;
	\item[Region $\mathbf{\mathcal R_3}$] is the intersection of $\mediumbox{v_i}$ with the wedge centered at $v_i$ obtained by clockwise rotating $h_\leftarrow(v_i)$ until it coincides with $h_\uparrow(v_i)$;~and
	\item[Region $\mathbf{\mathcal R_4}$] is defined as follows. If $v_i=r(T)$, then $\mathcal R_4=\emptyset$. Otherwise, $\mathcal R_4$ is the intersection of $\mediumbox{v_i}$ with the wedge centered at $v_i$ obtained by clockwise rotating $h_\uparrow(v_i)$ until it passes~through~$p(v_i)$; note that, if $(v_i,p(v_i))$ is a vertical segment, then $\mathcal R_4=\emptyset$.
\end{description}

Note that $\mediumbox{v_i}=\mathcal R_1 \cup \mathcal R_2 \cup \mathcal R_3 \cup \mathcal R_4$.

\begin{figure}[tb!]
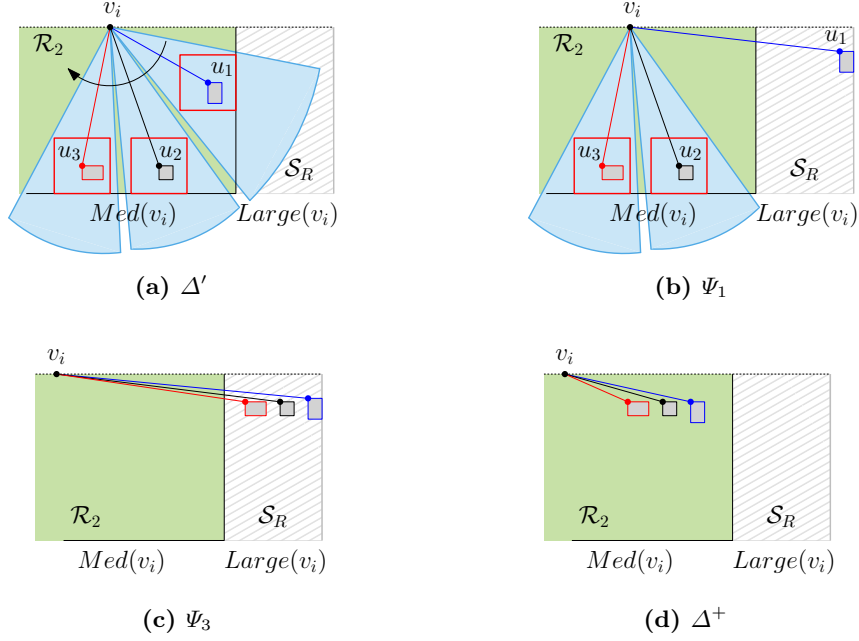

	\centering
	\subfloat[$\Delta'$\label{m:lower-a}]{\includegraphics[page=3,height=.22\textwidth]{boxed.pdf}}\hfil
	\subfloat[$\Psi_1$\label{m:lower-b}]{\includegraphics[page=4,height=.22\textwidth]{boxed.pdf}}\\
	\subfloat[$\Psi_3$\label{m:lower-c}]{\includegraphics[page=5,height=.22\textwidth]{boxed.pdf}}\hfil
	\subfloat[$\Delta^+$\label{m:lower-d}]{\includegraphics[page=6,height=.22\textwidth]{boxed.pdf}}
	\caption{Illustrations for the proof of \cref{lemma:partially-canonical-from-partially-canonical}.}
	\label{m:lower}
\end{figure}

We define two more regions, which will be exploited as ``buffer regions'' to allow rotations of subtrees via \cref{lemma:rotate} in a safe way; refer again to \cref{fig:regions}. Let $\mathcal S_L$ and $\mathcal S_R$ be the rectangular regions in $\Delta'$ containing all the points in $\largebox{v_i} - \mediumbox{v_i}$ to the left of the left side of $\mediumbox{v_i}$ and to the right of the right side of $\mediumbox{v_i}$, respectively. By Properties~(b.iii) and~(d) of the partially-canonical drawing $\Delta_{i-1}$, by the construction of $\Delta'$, and by the assumptions that, if $v_i\neq r(T)$, then $y(p(v_i))\geq y(v_i)$ and $x(p(v_i)) \geq x(v_i)$, we have that $\mathcal S_L$ is empty, while $\mathcal S_R$ may only contain the drawing of the part of the edge $(v_i, p(v_i))$ that possibly traverses such a region.

We start by dealing with the children $u_j$ of $v_i$ that lie in the interior of $\mathcal R_2$; refer to \cref{m:lower}.
Consider the edges $(v_i,u_j)$ in the order $(v_i,u_1),(v_i,u_2),\dots, (v_i,u_m)$ in which such edges are encountered while clockwise rotating $h_\rightarrow(v_i)$; see \cref{m:lower-a}.
Let $\Psi_1$ be the drawing obtained from $\Delta'$ by translating the drawing of the tree $T_{u_1}$ so that $u_1$ lies one unit below $v_i$ and so that the right side of the bounding box of the drawing of $T_{u_1}$ lies upon the right side of $\largebox{v_i}$; refer to \cref{m:lower-b}. 
%
%
\begin{claim}
	The morph $\langle \Delta',\Psi_1\rangle$ is planar.
\end{claim}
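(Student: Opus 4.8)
The plan is to use that the step $\langle \Delta',\Psi_1\rangle$ moves only the subtree $T_{u_1}$, and moves it by a pure \emph{translation}, while the drawing of $T^*$ and of every sibling subtree $T_{u_j}$ ($j\ge 2$) stays fixed. A translation cannot introduce a crossing inside $T_{u_1}$, nor can it detach the edge $(v_i,u_1)$ from $u_1$; and, as already noted, no edge of $T^*$ is involved. Thus the claim reduces to showing that, at every instant of the step, the moving copy of $T_{u_1}+(v_i,u_1)$ is disjoint from the fixed drawing of all the other subtrees rooted at children of $v_i$ and of all edges incident to $v_i$ (in particular $(v_i,p(v_i))$).

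First I would control the motion vertically. Since $u_1$ lies in the interior of $\mathcal R_2$ we have $y(u_1)<y(v_i)=y(p(u_1))$, so by Property~(a) of \cref{def:partially-canonical} the drawing of $T_{u_1}$ is upward canonical: $u_1$ is its top-left corner and the whole subtree lies weakly below and weakly to the right of $u_1$. As $u_1$ starts strictly below $v_i$ and ends one unit below $v_i$, linear interpolation keeps $y(u_1)<y(v_i)$ throughout, hence $T_{u_1}$ stays strictly below the horizontal line $\ell$ through $v_i$. Because we are in the case $y(p(v_i))\ge y(v_i)$, the edge $(v_i,p(v_i))$ lies weakly above $\ell$ and leaves $v_i$ upward while $(v_i,u_1)$ leaves it downward, so the two meet only at $v_i$; likewise every child of $v_i$ whose subtree lies strictly above $\ell$ is immediately separated from $T_{u_1}$.

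Next I would control the motion angularly, centering the analysis at $v_i$ and measuring clockwise angles from $h_\rightarrow(v_i)$. Since $u_1$ is the first child met when rotating $h_\rightarrow(v_i)$ clockwise, its angle $\theta_1$ is the smallest among the children sitting in $\mathcal R_2$, and by Property~(b.iv) the sectors $S_{e_j}$ carrying the siblings $T_{u_j}+(v_i,u_j)$ have pairwise-disjoint angular ranges, those with $j\ge 2$ lying beyond $\theta_1$. The target of $u_1$ has clockwise angle essentially $0$, so both endpoints of the trajectory of $u_1$ lie in the convex wedge of clockwise angles $[0,\theta_1]$, and therefore so does $u_1(t)$ for every $t$, and with it the edge $(v_i,u_1)$. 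Upward-canonicity pays off again: $T_{u_1}$ extends from $u_1$ only toward smaller clockwise angles, so the angular span of the whole moving object about $v_i$ does not exceed $\theta_1$ up to the tiny slack $\alpha$ induced by $\smallbox{u_1}$, which keeps it clear of $S_{e_2},\dots,S_{e_m}$. Finally, the landing cell sits in $\mathcal S_R$ together with the top strip of $\mathcal R_2$, a region free of sibling subtrees by Property~(d) and the emptiness of $\mathcal S_R$, so the final placement is clean.

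The main obstacle will be the quantitative bookkeeping in two borderline situations. First, when $u_1$ starts far toward the lower-left (large $\theta_1$), its straight path to the far right skirts close to $v_i$, where $\smallbox{u_1}$ subtends a wide angle and the crude slack bound $\alpha$ is useless; here one must rely on upward-canonicity — that every point of $T_{u_1}$ has clockwise angle at most that of $u_1$ — to keep the subtree on the small-angle side of each $S_{e_j}$ even while passing beneath $v_i$. Second, the interaction with children of $v_i$ lying on or just above $h_\rightarrow(v_i)$ is the most delicate: their upward-canonical subtrees may dip below $\ell$, so the below-$\ell$ argument does not separate them, and their angular range can nearly coincide with that of $u_1$'s target. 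Separating these relies on the radial structure — the siblings live at radius $\Theta(\ell_0)$ inside $\mediumbox{v_i}$, whereas $u_1$'s target is pushed out to the right side of $\largebox{v_i}$, i.e.\ into the empty $\mathcal S_R$ — combined with the disjoint-sector property and the explicit constants $k_0$, $\beta_0$, $\alpha$ fixed in \cref{lemma:scaling}. Reconciling the vertical, angular, and radial separations with these constants is the genuinely fiddly part of the proof.
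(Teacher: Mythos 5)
Your overall strategy is the paper's: reduce to showing that the translating copy of $T_{u_1}+(v_i,u_1)$ stays inside a wedge at $v_i$ swept clockwise from $h_\rightarrow(v_i)$ to the sector of $u_1$ (clipped to $\largebox{v_i}$), which contains no other edge of $T_{v_i}+(v_i,p(v_i))$ because $(v_i,u_1)$ is the first edge met in the clockwise sweep and the sectors are internally disjoint; the paper packages exactly this as the region $R(u_1)$. However, the specific tool you invoke to handle the delicate case --- ``upward-canonicity implies that every point of $T_{u_1}$ has clockwise angle (about $v_i$, measured from $h_\rightarrow(v_i)$) at most that of $u_1$'' --- is false. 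In an upward canonical drawing the descendants of $u_1$ lie (weakly) to the right of and strictly below $u_1$; if $u_1$ sits in the lower-right of $\mediumbox{v_i}$, say at $(d,-1)$ relative to $v_i$ with $d$ large, then a descendant at $(d+1,-1-c)$ has clockwise angle $\arctan\bigl(\tfrac{1+c}{d+1}\bigr)>\arctan\bigl(\tfrac{1}{d}\bigr)$, i.e., \emph{larger} than that of $u_1$. So angular control cannot come from upward-canonicity.

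The correct replacement is already in your toolbox and is what the paper uses: in $\Delta'$ the whole of $T_{u_1}$ lies in the $2n$-box $B_{u_1}$ (by \cref{remark:canonical-area}), which by Property~(b.iii) of \cref{def:partially-canonical} lies inside $S_{(v_i,u_1)}\cap\mediumbox{v_i}$; hence the initial angular extent of $T_{u_1}$ about $v_i$ is bounded by the far delimiting ray of $S_{(v_i,u_1)}$ (i.e., by $\theta_1+\alpha$, not by $\theta_1$), and the target lies in $\mathcal S_R$ just below $h_\rightarrow(v_i)$, so each point's direction from $v_i$ interpolates monotonically within the wedge from $h_\rightarrow(v_i)$ to that far ray. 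Disjointness of sectors (Property~(b.iv)) then separates the moving subtree from $T_{u_2},\dots,T_{u_m}$ even when the translation path passes near $v_i$. Two smaller corrections: it is $\mathcal S_L$ that is empty, while $\mathcal S_R$ may contain part of the edge $(v_i,p(v_i))$ --- which is harmless here only because that edge stays weakly above the horizontal line through $v_i$ and the moving subtree stays weakly below it; and the children in the interior of $\mathcal R_1$ have, by Property~(a), \emph{downward} canonical subtrees extending upward from their roots, so they do not dip below that horizontal line, making your second ``borderline situation'' moot.
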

\begin{proof}
	By \cref{remark:canonical-area}, we have that the drawing of $T_{u_1}$ in $\Psi_1$ lies in $\mathcal S_R$, hence the drawing of $T_{u_1}$ is in $\largebox{v_i}$ throughout $\langle \Delta',\Psi_1\rangle$. Note that only $T_{u_1}+(v_i,u_1)$ moves during $\langle \Delta',\Psi_1\rangle$, hence any crossing during such a morph involves an edge of $T_{u_1}+(v_i,u_1)$ and an edge of $T_{v_i}+(v_i,p(v_i))$. 
	
	First, no two edges of $T_{u_1}$ cross each other during $\langle \Delta',\Psi_1\rangle$ since the drawing of $T_{u_1}$ in $\Psi_1$ is a translation of the drawing of $T_{u_1}$ in $\Delta'$.
	
	Second, since the edge $(v_i,u_1)$ lies above the drawing of $T_{u_1}$ both in $\Delta'$ and in $\Psi_1$, it lies above the drawing of $T_{u_1}$ throughout $\langle \Delta',\Psi_1\rangle$, hence it does not cross any edge of $T_{u_1}$. 
	
	It remains to prove that no edge of $T_{u_1}+(v_i,u_1)$ crosses an edge of $T_{v_i}+(v_i,p(v_i))$ that is not in $T_{u_1}+(v_i,u_1)$. Consider the region $R(u_1)$ which is the intersection of $\largebox{v_i}$ and the wedge obtained by clockwise rotating the ray $h_\rightarrow(v_i)$ around $v_i$ until it passes through both rays delimiting the sector $S_{v_i,u_1}$. We have that $T_{u_1}$ moves in the interior of $R(u_1)$ during the morph. Further, by Properties~(b.iii) and~(b.iv) of the partially-canonical drawing $\Delta_{i-1}$, and by the assumptions that $y(p(v_i))\geq y(v_i)$, that $x(p(v_i))\geq x(v_i)$, and that $(v_i,u_1)$ is the first edge incident to $v_i$ that is encountered while clockwise rotating $h_\rightarrow(v_i)$, it follows that $R(u_1)$ does not contain any edge of $T_{v_i}+(v_i,p(v_i))$ that is not in $T_{u_1}+(v_i,u_1)$ throughout the morph $\langle \Delta',\Psi_1\rangle$. Hence, no edge of $T_{u_1}+(v_i,u_1)$ crosses an edge of $T_{v_i}+(v_i,p(v_i))$ that is not in $T_{u_1}+(v_i,u_1)$. 
\end{proof}

For $j=2,\dots,m$, let $\Psi_j$ be the drawing obtained from $\Psi_{j-1}$ by translating the drawing of the tree $T_{u_j}$ so that $u_j$ lies one unit below $v_i$ and so that the right side of the bounding box of the drawing of $T_{u_j}$ lies one unit to the left of $u_{j-1}$; refer to \cref{m:lower-c}.

\begin{claim}
	For $j=2,\dots,m$, the morph $\langle \Psi_{j-1},\Psi_j\rangle$ is planar.
\end{claim}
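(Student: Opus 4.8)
The plan is to follow the template of the preceding two claims: since only $T_{u_j}+(v_i,u_j)$ moves during $\langle \Psi_{j-1},\Psi_j\rangle$ and it moves as a rigid translation, any crossing must involve an edge of $T_{u_j}+(v_i,u_j)$. Hence it suffices to rule out (i) self-crossings of $T_{u_j}$, (ii) a crossing between $(v_i,u_j)$ and $T_{u_j}$, and (iii) a crossing between $T_{u_j}+(v_i,u_j)$ and the part of $T_{v_i}+(v_i,p(v_i))$ not in $T_{u_j}+(v_i,u_j)$. Item (i) is immediate from rigidity. For item (ii) I would note that $u_j$ lies strictly below $v_i$ at both ends (in $\Psi_{j-1}$ because $u_j\in\mathcal R_2$, in $\Psi_j$ because $u_j$ is one unit below $v_i$), hence throughout; since $T_{u_j}$ is upward canonical it hangs below $u_j$, so $(v_i,u_j)$ stays above $T_{u_j}$ and cannot cross it.

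Next I would confine the motion. As $\largebox{v_i}$ is convex and contains $T_{u_j}$ at both ends (at the start $T_{u_j}\subseteq\smallbox{u_j}\subseteq\mediumbox{v_i}$, at the end $T_{u_j}$ is packed inside $\largebox{v_i}$ by construction), $T_{u_j}$ stays in $\largebox{v_i}$, which by Property~(d) of $\Delta_{i-1}$ removes any interaction with $T^*$. Moreover, both endpoints place $u_j$ below the horizontal through $v_i$ and $T_{u_j}$ is upward canonical, so $T_{u_j}$ stays below that horizontal; together with the assumptions $y(p(v_i))\ge y(v_i)$ and $x(p(v_i))\ge x(v_i)$ this rules out any crossing with $(v_i,p(v_i))$ and with the children (and subtrees) lying in $\mathcal R_1,\mathcal R_3,\mathcal R_4$, all of which lie on or above that horizontal. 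Thus item (iii) reduces to the remaining $\mathcal R_2$-children of $v_i$: the not-yet-moved subtrees $T_{u_\ell}$ with $\ell>j$ and the already-packed subtrees $T_{u_\ell}$ with $\ell<j$.

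For the not-yet-moved subtrees I would reuse the wedge: let $R(u_j)$ be the intersection of $\largebox{v_i}$ with the wedge centered at $v_i$ spanned from $h_\rightarrow(v_i)$ clockwise to the far ray of $S_{v_i,u_j}$. At the start $T_{u_j}\subseteq\smallbox{u_j}\subseteq S_{v_i,u_j}$ by Property~(b.i), and at the end $T_{u_j}$ lies one unit below $v_i$ at a shallow angle between $h_\rightarrow(v_i)$ and that far ray; by convexity of the wedge, $T_{u_j}+(v_i,u_j)$ remains in $R(u_j)$ throughout. Since the clockwise order places the sectors $S_{v_i,u_\ell}$ with $\ell>j$ strictly beyond the far ray of $S_{v_i,u_j}$, Property~(b.iv) guarantees those subtrees lie outside $R(u_j)$ and are never met.

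The hard part is the packed subtrees $T_{u_1},\dots,T_{u_{j-1}}$, which sit at shallow angles inside $R(u_j)$, so the wedge alone does not separate them. Here I would argue by a ``left-of'' estimate. Let $X_{j-1}=x(u_{j-1})$ be the left edge of the packed block, which has total width at most $2n$ and is flush against the right side of $\largebox{v_i}$, so $X_{j-1}\ge x(v_i)+\ell_0/2$. Because $u_j$ moves along a segment and $T_{u_j}$ is rigid, the rightmost abscissa of $T_{u_j}$ is an affine function of time, hence bounded by its two endpoint values; at the start it is at most $x(v_i)+\ell_0/2-2n$ (as $\smallbox{u_j}\subseteq\mediumbox{v_i}$ with $u_j$ well inside) and at the end it equals $X_{j-1}-1$, both strictly below $X_{j-1}$, so $T_{u_j}$ stays strictly left of the block and never touches the packed subtrees; the edge $(v_i,u_j)$ joins two points left of $X_{j-1}$, so it too stays left of the block. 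It then remains to separate $T_{u_j}+(v_i,u_j)$ from the packed edges $(v_i,u_\ell)$, $\ell<j$, which emanate from $v_i$ and do cross into the strip left of $X_{j-1}$. I would handle these by the angular order: the preserved clockwise order and the enforced one-unit horizontal gap make the shallowest ray of $T_{u_j}$ (through the top-right corner of its box, just left of $u_{j-1}$) strictly steeper than $(v_i,u_{j-1})$, hence than every $(v_i,u_\ell)$ with $\ell<j$, and the angle of $u_j$ along its trajectory never becomes shallower than this end value, so $T_{u_j}+(v_i,u_j)$ stays on the steep side of all these rays. Verifying this angular gap rigorously — that the linear trajectory combined with the flush, one-unit-gap packing never lets $T_{u_j}$ swing above a packed edge — is the delicate step I expect to require the most care.
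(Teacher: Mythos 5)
Your decomposition is the same as the paper's: rigidity of the translation rules out self-crossings of $T_{u_j}$; the edge $(v_i,u_j)$ stays above $T_{u_j}$ because it does so at both endpoints of the step; confinement of the swept region inside $\largebox{v_i}$ together with Property~(d) of $\Delta_{i-1}$ eliminates $T^*$; the lower-half argument eliminates $(v_i,p(v_i))$; the wedge $R(u_j)$ together with Properties~(b.iii) and~(b.iv) eliminates the not-yet-moved siblings $T_{u_l}+(v_i,u_l)$ with $l>j$; and a ``strictly to the left of the packed block'' estimate eliminates the nodes and internal edges of $T_{u_l}$ with $l<j$ (the paper states this last point simply as ``$T_{u_l}$ is to the right of $T_{u_j}+(v_i,u_j)$ throughout,'' which is your affine-in-time abscissa bound).

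The genuine gap is exactly the step you flag yourself: separating $T_{u_j}+(v_i,u_j)$ from the already-relocated, nearly-horizontal edges $(v_i,u_l)$, $l<j$. Your argument only verifies the angular comparison at the \emph{final} position (where the one-unit gap makes the top-right corner of the box of $T_{u_j}$ steeper than $(v_i,u_{j-1})$); the assertion that ``the angle of $u_j$ along its trajectory never becomes shallower than this end value'' is false as a general principle --- the polar angle of a point translating along a segment not through $v_i$ varies monotonically between its two endpoint values, so you must also bound the \emph{initial} angle, i.e., show that all of $B_{u_j}\subseteq S_{(v_i,u_j)}$ already lies strictly on the steep (lower) side of the supporting line of each relocated $(v_i,u_l)$. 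That fact is true, but it is not free: it requires the quantitative separation from the proof of \cref{lemma:scaling} (the original direction of $(v_i,u_j)$ is clockwise of that of $(v_i,u_l)$ by at least $2\beta$ with $\sin(2\beta)\geq 1/(\sqrt 2 D_0^2)$, the sector half-angle $\alpha$ is an order of magnitude smaller, and the relocated edge drops only one unit over a horizontal run of order $\ell_0/2$, so its clockwise deviation from $h_\rightarrow(v_i)$ is $O(1/\ell_0)$, smaller still). The paper closes this point by observing that each $(v_i,u_l)$ is \emph{fixed} during $\langle\Psi_{j-1},\Psi_j\rangle$ and lies above the drawing of $T_{u_j}+(v_i,u_j)$ in both $\Psi_{j-1}$ and $\Psi_j$, hence throughout (the region swept by a convex set translating along a segment is the convex hull of its two end positions); verifying the $\Psi_{j-1}$ instance of that above-ness is precisely the computation your sketch leaves open.
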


\begin{proof}
By \cref{remark:canonical-area}, we have that the drawing of $T_{u_j}$ in $\Psi_j$ lies in $\mathcal S_R$, hence the drawing of $T_{u_j}$ is in $\largebox{v_i}$ throughout $\langle \Psi_{j-1},\Psi_j\rangle$. Note that only $T_{u_j}+(v_i,u_j)$ moves during $\langle \Psi_{j-1},\Psi_j\rangle$, hence any crossing during such a morph involves an edge of $T_{u_j}+(v_i,u_j)$ and an edge of $T_{v_i}+(v_i,p(v_i))$. 	

First, no two edges of $T_{u_j}$ cross each other during $\langle \Psi_{j-1},\Psi_j\rangle$ since the drawing of $T_{u_j}$ in $\Psi_j$ is a translation of the drawing of $T_{u_j}$ in $\Psi_{j-1}$.

Second, since the edge $(v_i,u_j)$ lies above the drawing of $T_{u_j}$ both in $\Psi_{j-1}$ and in $\Psi_j$, it lies above the drawing of $T_{u_j}$ throughout $\langle \Psi_{j-1},\Psi_j\rangle$, hence it does not cross any edge of $T_{u_j}$. 

It remains to prove that no edge of $T_{u_j}+(v_i,u_j)$ crosses an edge of $T_{v_i}+(v_i,p(v_i))$ that is not in $T_{u_j}+(v_i,u_j)$. First, by the assumption that $y(p(v_i))\geq y(v_i)$ and since the drawing of $T_{u_j}+(v_i,u_j)$  stays in the lower half of $\largebox{v_i}$ throughout $\langle \Psi_{j-1},\Psi_j\rangle$, we have that the drawing of $T_{u_j}+(v_i,u_j)$ does not cross the edge $(v_i,p(v_i))$, if such an edge exists. Second, we deal with possible crossings between $T_{u_j}+(v_i,u_j)$ and $T_{u_l}+(v_i,u_l)$, for any $l=j+1,\dots,m$. Consider the region $R(u_j)$ which is the intersection of $\largebox{v_i}$ and the wedge obtained by clockwise rotating the ray $h_\rightarrow(v_i)$ around $v_i$ until it passes through both rays delimiting the sector $S_{v_i,u_j}$. We have that $T_{u_j}$ moves in the interior of $R(u_j)$ during the morph. By Properties~(b.iii) and~(b.iv) of the partially-canonical drawing $\Delta_{i-1}$, the region $R(u_j)$ does not contain any edge of $T_{u_l}+(v_i,u_l)$ throughout $\langle \Psi_{j-1},\Psi_j\rangle$, hence no edge of $T_{u_j}+(v_i,u_j)$ crosses any edge of $T_{u_l}+(v_i,u_l)$. Third, we deal with possible crossings between $T_{u_j}+(v_i,u_j)$ and $T_{u_l}+(v_i,u_l)$, for any $l=1,\dots,j-1$. The edge $(v_i,u_l)$ is above the drawing of $T_{u_j}+(v_i,u_j)$ throughout $\langle \Psi_{j-1},\Psi_j\rangle$, hence $(v_i,u_l)$ does not cross any edge of $T_{u_j}+(v_i,u_j)$. Finally, the drawing of $T_{u_l}$ is to the right of the drawing of $T_{u_j}+(v_i,u_j)$ throughout $\langle \Psi_{j-1},\Psi_j\rangle$, hence no edge of $T_{u_l}$ crosses any edge of $T_{u_j}+(v_i,u_j)$.
\end{proof}

Let $\Delta^+$ be the drawing obtained from $\Psi_m$ by horizontally translating every subtree $T_{u_j}$ so that $u_j$ lands at the position it has in $\Delta_i$, for $j=1,2,\dots,m$; see \cref{m:lower-c,m:lower-d}.
%
%
\begin{claim}\label{claim:horizontal-shift}
	The morph $\langle \Psi_m,\Delta^+\rangle$ is planar.
\end{claim}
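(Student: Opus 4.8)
The plan is to exploit that $\langle \Psi_m,\Delta^+\rangle$ is a \emph{unidirectional} morph. In both $\Psi_m$ and $\Delta^+$ every root $u_j$ of a moving subtree lies exactly one unit below $v_i$: in $\Psi_m$ by construction, and in $\Delta^+$ because the drawing of $T_{v_i}$ is upward canonical, so $u_j$ is the top-left corner of the bounding box of $T_{u_j}$ and therefore sits at $y$-coordinate $y(v_i)-1$ (by \cref{def:canonical}). Since each $T_{u_j}$ is the very same upward canonical drawing in $\Psi_m$ and in $\Delta^+$, up to a horizontal translation, every moving node travels along a horizontal line. Because $T^*+(v_i,p(v_i))$ is stationary, it then suffices to check three kinds of crossings: between two moving subtrees, between a moving subtree (or its connecting edge) and an edge incident to $v_i$, and between moving material and a \emph{stationary} child-subtree $T_w$ of $v_i$.

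For the mutual disjointness of the moving subtrees I would first argue that their left-to-right order coincides in $\Psi_m$ and in $\Delta^+$. As the children in the interior of $\mathcal R_2$ all lie below $v_i$ and $\Delta_{i-1}$ is order-preserving, the clockwise order $u_1,\dots,u_m$ in which the edges $(v_i,u_j)$ are met while rotating $h_\rightarrow(v_i)$ clockwise is exactly the reverse of the left-to-right order of these children in the ordered tree $T$; this latter order is, by \cref{def:canonical}, the left-to-right order of the subtrees in the upward canonical drawing $\Delta^+$. Since $\Psi_m$ packs the subtrees from right to left as $u_1,u_2,\dots,u_m$, the left-to-right order of the bounding boxes is $u_m,\dots,u_1$ both in $\Psi_m$ and in $\Delta^+$. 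Each subtree being rigidly translated at constant speed, the horizontal gap between any two consecutive bounding boxes is an affine function of $t$; positive at $t=0$ and $t=1$, it stays positive throughout, so the bounding boxes remain pairwise disjoint and no two moving subtrees ever cross.

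The edges are handled by a slab argument. Every edge $(v_i,u_j)$ lies in the closed horizontal slab bounded by $y=y(v_i)$ and $y=y(v_i)-1$ and meets the lower line only at $u_j$, whereas each moving subtree $T_{u_l}$ meets that same line only at its root $u_l$; since the roots are distinct (pairwise disjoint bounding boxes, as just shown), no edge $(v_i,u_j)$ crosses a moving subtree. Two edges $(v_i,u_j)$ and $(v_i,u_l)$ emanate from the fixed point $v_i$ and hence meet only there. Finally, under the standing assumption $y(p(v_i))\ge y(v_i)$, the edge $(v_i,p(v_i))$ stays weakly above the horizontal line through $v_i$, while all moving material stays strictly below it, so it is never crossed. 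These steps are essentially elementary once unidirectionality is noted, and could alternatively be phrased through Corollary 7.2 in~\cite{DBLP:journals/siamcomp/AlamdariABCLBFH17} exactly as in \cref{le:unidirectional-morph}.

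The delicate part, and the step I expect to be the main obstacle, is ruling out crossings between the moving material and the stationary child-subtrees $T_w$ (with $w\notin\{u_1,\dots,u_m\}$) together with their edges $(v_i,w)$. The moving subtrees sweep horizontally across a corridor consisting of the lower portion of $\largebox{v_i}$ lying to the right of the vertical line through $v_i$ (they start packed against the right side of $\largebox{v_i}$, inside the buffer $\mathcal S_R$, and end inside $\smallbox{v_i}$, by \cref{remark:canonical-area}), and one must certify that this corridor is free of stationary material. The plan is to show that every child subtree entering this corridor is itself one of the moving $\mathcal R_2$ subtrees: children in $\mathcal R_1$, $\mathcal R_3$, $\mathcal R_4$ are either strictly above $v_i$ (whence their downward canonical drawings stay in the closed upper half) or to the left of $v_i$, so they do not meet the right-lower corridor. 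This confinement has to be extracted from Properties (b.iii), (b.iv), and (d) of $\Delta_{i-1}$ and from the construction of $\Delta'$ and of the buffer $\mathcal S_R$, which is precisely where the region decomposition $\mediumbox{v_i}=\mathcal R_1\cup\mathcal R_2\cup\mathcal R_3\cup\mathcal R_4$ earns its keep; verifying it carefully, including the treatment of children lying exactly on $h_\rightarrow(v_i)$, is the crux of the argument.
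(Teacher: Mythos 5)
Your proof is correct and covers the same ground as the paper's, but it routes around the paper's main shortcut in one place and over-engineers another. For the mutual non-crossing of the moving subtrees and of the edges $(v_i,u_j)$, the paper simply observes that the tree $T_2$ consisting of $v_i$, of the $T_{u_j}$'s, and of the edges $(v_i,u_j)$ has an order-preserving strictly-upward straight-line planar drawing in both $\Psi_m$ and $\Delta^+$ with identical $y$-coordinates, and invokes \cref{le:unidirectional-morph} wholesale; your explicit order-preservation, affine-gap, and slab arguments are an elementary unfolding of exactly what that lemma encapsulates (your observation that the clockwise order $u_1,\dots,u_m$ is the reverse of the left-to-right order in $T$ is precisely what makes the order-preservation hypothesis of that lemma hold), so this part is sound, just longer than necessary. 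Conversely, the step you flag as ``the crux'' is dispatched by the paper in one sentence: every point of the moving material other than $v_i$ stays at $y$-coordinate at most $y(v_i)-1$ throughout the morph, while the edge $(v_i,p(v_i))$ (by the assumption $y(p(v_i))\geq y(v_i)$) and the subtrees rooted at children of $v_i$ lying in $\mathcal R_1$, $\mathcal R_3$, $\mathcal R_4$ --- which are stationary during this morph, are rooted strictly above $v_i$, and are drawn downward canonically, hence lie weakly above the horizontal line through $v_i$ --- never enter the open lower half-plane; no corridor analysis and no case distinction involving $\mathcal S_R$ is needed here. Two small points to tighten: you should state explicitly that the moving material remains inside $\largebox{v_i}$ (it travels between $\mathcal S_R$ and $\smallbox{v_i}$, both contained in the convex region $\largebox{v_i}$), so that Property~(d) of $\Delta_{i-1}$ excludes collisions with $T^*$ --- stationarity of $T^*$ alone does not suffice; and the boundary case of a child lying exactly on $h_\rightarrow(v_i)$ is a pre-existing imprecision of the region decomposition that the paper's proof does not resolve either, rather than an obstacle specific to this claim.
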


\begin{proof}
Let $T_2$ be the rooted ordered tree composed of the node $v_i$ and of $T_{u_j}+(v_i,u_j)$, for $j=1,\dots,m$; that is, $T_2$ is the tree obtained from $T_{v_i}$ by removing the nodes of the subtrees rooted at the children of $v_i$ in $\mathcal R_1$, $\mathcal R_3$, and $\mathcal R_4$, and their incident edges. By \cref{remark:canonical-area}, we have that the drawing of $T_2$ in $\Delta^+$ lies in $\smallbox{v_i}$, hence the drawing of $T_2$ is in $\largebox{v_i}$ throughout $\langle \Psi_m,\Delta^+\rangle$. Note that only the nodes of $T_2$ move during $\langle \Psi_m,\Delta^+\rangle$, hence any crossing during such a morph involves an edge of $T_2$ and an edge of $T_{v_i}+(v_i,p(v_i))$. 	

By the assumption that $y(p(v_i))\geq y(v_i)$ and since the drawing of $T_2$  stays in the lower half of $\largebox{v_i}$ throughout $\langle \Psi_m,\Delta^+\rangle$, we have that the drawing of $T_2$ does not cross the edge $(v_i,p(v_i))$, if such an edge exists, and does not cross any edge of a subtree of $T_{v_i}$ rooted at a child of $v_i$ that lies in $\mathcal R_1$, $\mathcal R_3$, or $\mathcal R_4$. Finally, the drawing of $T_2$ is order-preserving, strictly-upward, straight-line, and planar both in $\Psi_m$ and in $\Delta^+$; further, by construction, we have $y_{\Psi_m}(v)=y_{\Delta^+}(v)$, for each node $v$ of $T_2$. Thus, the linear morph $\langle \Psi_m,\Delta^+\rangle$ is planar, by \cref{le:unidirectional-morph}.
\end{proof}

\begin{figure}[tb!]
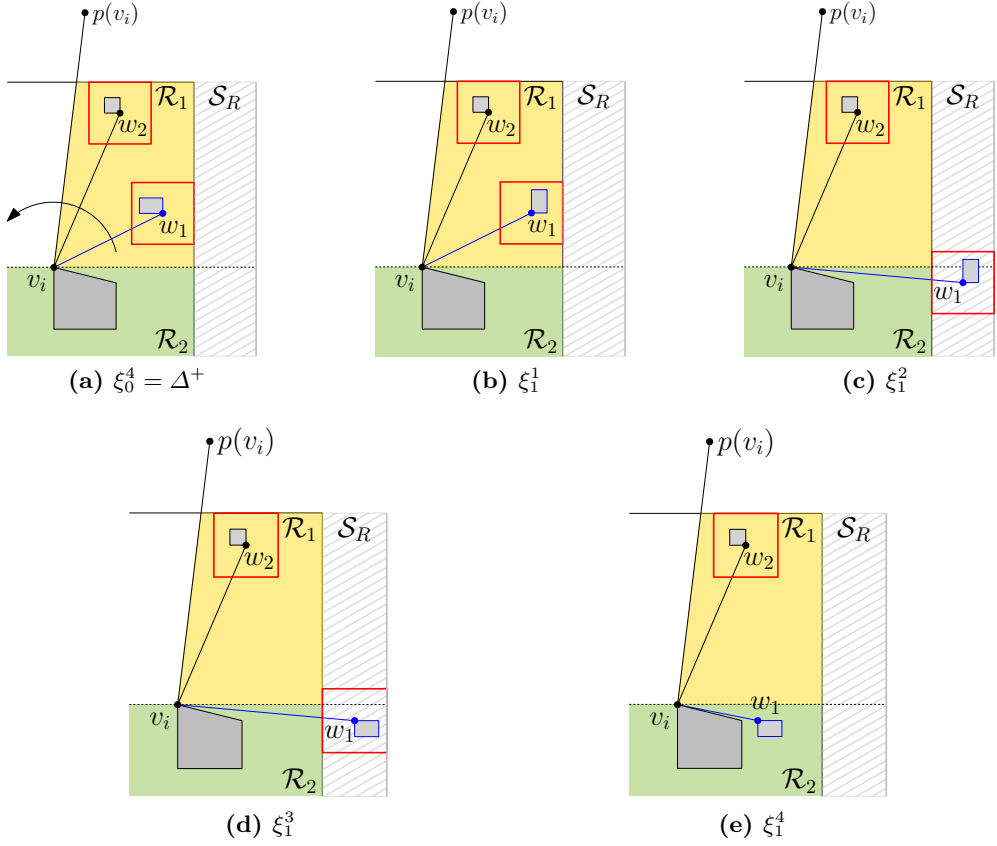

	\centering
	\subfloat[$\xi_0^{4} = \Delta^+$\label{fig:r-two-a}]{\includegraphics[page=8,width=.22\textwidth]{boxed.pdf}}\hfil
	\subfloat[$\xi_1^{1}$\label{fig:r-two-b}]{\includegraphics[page=9,width=.22\textwidth]{boxed.pdf}}\hfil
	\subfloat[$\xi_1^{2}$\label{fig:r-two-c}]{\includegraphics[page=10,width=.22\textwidth]{boxed.pdf}}\\
	\subfloat[$\xi_1^{3}$\label{fig:r-two-d}]{\includegraphics[page=11,width=.22\textwidth]{boxed.pdf}}\hfil
	\subfloat[$\xi_1^{4}$\label{fig:r-two-e}]{\includegraphics[page=12,width=.22\textwidth]{boxed.pdf}}
	
	\caption{Illustrations for \cref{lemma:partially-canonical-from-partially-canonical}, focused on the children of $v_i$ that lie in $\mathcal R_1$.}
	\label{fig:r-two}
\end{figure}

Next, we deal with the children $w_j$ of $v_i$ that lie in the interior of $\mathcal R_1$.
Consider the edges $(v_i,w_j)$ in the order $(v_i,w_1),(v_i,w_2),\dots, (v_i,w_\ell)$ in which such edges are encountered while counter-clockwise rotating $h_{\rightarrow}(v_i)$ around $v_i$; refer to \cref{fig:r-two}.
We are going to move the subtrees rooted at the children of $v_i$ in~$\mathcal R_1$, one by one in the order $T_{w_1},T_{w_2},\dots,T_{w_\ell}$, so that they land in the position that they have in~$\Delta_i$. Such a movement consists of four phases. 
First, we rotate the drawing of $T_{w_j}$ so that it becomes leftward canonical (see \cref{fig:r-two-b}). Second, we translate the drawing of $T_{w_j}$ so that $w_j$ lies
in the interior of $S_R$ and one unit below $v_i$ (see \cref{fig:r-two-c}).
Third, we rotate the drawing of $T_{w_j}$ so that it becomes upward canonical (see \cref{fig:r-two-d}).
Finally, we horizontally translate the drawing of $T_{w_j}$ to its final position in $\Delta_i$ (see \cref{fig:r-two-e}). We now provide the details of the above transformations.

For $j=1,\dots,\ell$, let $\xi_{j-1}^{4}$ be a drawing of $T$ with the following properties, where $\xi_0^{4} = \Delta^+$ (refer to \cref{fig:r-two-a,fig:r-two-e}, showing $\xi_{0}^{4}$ and $\xi_{1}^{4}$, respectively):
\begin{enumerate}[(P1)]
	\item \label{item:props-four-ii} the drawing of $T^*$ is the same as in $\Delta_i$; 
	\item \label{item:props-four-iii} $v_i$ lies at the same point as in $\Delta_i$;  
	\item \label{item:props-four-iv} the drawing of the subtrees  $T_{u_1},T_{u_2}, \dots,T_{u_m}$ and $T_{w_1},T_{w_2}, \dots,T_{w_{j-1}}$ is the same as in $\Delta_i$;
	\item \label{item:props-four-i} the drawing of the subtrees $T_{w_{j}},T_{w_{j+1}}, \dots, T_{w_\ell}$ is the same as~in~$\Delta^+$; and
	\item \label{item:props-four-v} the drawing of the subtrees of $T_{v_i}$ rooted at the children of $v_i$ that lie in the interior of $\mathcal R_3$ and $\mathcal R_4$ is the same as in $\Delta^+$. 
\end{enumerate}

For $j=1,\dots,\ell$, we construct a drawing $\xi_j^{1}$ from $\xi_{j-1}^{4}$ by rotating $T_{w_j}$ so that it is leftward canonical in $\xi_j^{1}$ and by leaving the position of the nodes not in $T_{w_j}$ unaltered. This rotation can be accomplished via a linear morph $\langle \xi_{j-1}^{4}, \xi_j^{1}\rangle$ by \cref{lemma:rotate}. 

\begin{claim}
	For $j=1,\dots,\ell$, the morph $\langle \xi_{j-1}^{4}, \xi_j^{1}\rangle$ is planar.
\end{claim}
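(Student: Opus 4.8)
The plan is to prove planarity by noting that $\langle \xi_{j-1}^{4}, \xi_j^{1}\rangle$ moves only the subtree $T_{w_j}$, and that this movement is exactly a rotation of a canonical drawing about its root, so that \cref{lemma:rotate} applies. First I would determine the orientation of $T_{w_j}$ in $\xi_{j-1}^{4}$: by~(P4) its drawing coincides with the one in $\Delta^+$, which in turn coincides with the one in $\Delta'$, where $w_j$ was placed at the center of the box $B_{w_j}\subseteq S_e\cap\mediumbox{v_i}$ with $e=(v_i,w_j)$. Since $w_j$ lies in the interior of $\mathcal R_1$ we have $y(w_j)>y(v_i)=y(p(w_j))$, so Property~(a) of $\Delta_{i-1}$ forces the drawing of $T_{w_j}$ to be \emph{downward canonical}. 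Rotating it into leftward canonical is therefore the reverse of case~(ii) of \cref{lemma:rotate}; hence $\langle \xi_{j-1}^{4}, \xi_j^{1}\rangle$ is a planar linear morph, and throughout it the drawing of $T_{w_j}$ lies in the interior of the top half of the $2n$-box $B_{w_j}$ centered at $w_j$ (by \cref{remark:canonical-area} both canonical drawings fit in $B_{w_j}$, and the pinwheel confines the whole rotation to its top half). Note that $w_j$ itself is fixed, being the center of the rotation.

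Since only $T_{w_j}$ moves, any crossing during the morph must involve an edge of $T_{w_j}$ and an edge not belonging to $T_{w_j}$, and I would rule these out in three groups. First, the static part $T^*$ together with the edge $(v_i,p(v_i))$: as $B_{w_j}\subseteq S_e$, the drawing of $T_{w_j}$ stays inside $S_e$, while Property~(b.ii) of $\Delta_{i-1}$ (which still holds here since $v_i$ and $T^*$ are unmoved) guarantees that $S_e$ contains no node or edge other than $v_i$, $e$, and the nodes and edges of $T_{w_j}$; in particular it contains neither any edge of $T^*$ nor $(v_i,p(v_i))$, so $T_{w_j}$ crosses none of them. Second, the other subtrees rooted at children of $v_i$: those still sitting in their own boxes lie in sectors internally disjoint from $S_e$ by Property~(b.iv), while those already relocated to their $\Delta_i$ positions lie inside $\smallbox{v_i}$. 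Third, the incident edge $(v_i,w_j)$ itself: both its endpoints are fixed during the step, and since $v_i$ lies strictly below $w_j$, every point of this edge other than $w_j$ has $y$-coordinate smaller than $y(w_j)$ and hence lies below the top half of $B_{w_j}$; as $T_{w_j}$ stays in that top half, it meets $(v_i,w_j)$ only at the common endpoint $w_j$.

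The main obstacle I anticipate is the second group. In $\xi_{j-1}^{4}$ the subtrees $T_{w_1},\dots,T_{w_{j-1}}$ and the subtrees already processed for $\mathcal R_2$ no longer occupy the ``clean'' sectors described by Properties~(b.ii)--(b.iv) of $\Delta_{i-1}$, so their disjointness from the rotating $T_{w_j}$ cannot be read off those properties directly and must instead be obtained from a scale-separation estimate: $B_{w_j}$ sits at distance about $\tfrac{\ell_0}{2}-3n$ from $v_i$, whereas $\smallbox{v_i}$ has radius at most $\sqrt2\,n$, and the choice $\ell_0=k_0D_0^2n$ makes $\tfrac{\ell_0}{2}-3n$ far exceed $\sqrt2\,n$, so $B_{w_j}\cap\smallbox{v_i}=\emptyset$ and the relocated subtrees cannot reach $T_{w_j}$. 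Once this elementary separation is recorded, every remaining point reduces to invoking \cref{lemma:rotate} and the containment guarantee of \cref{remark:canonical-area}, and the claim follows.
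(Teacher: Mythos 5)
Your proof is correct and follows essentially the same route as the paper's: planarity and confinement of the rotating subtree via the Pinwheel lemma (\cref{lemma:rotate}), sector properties (b.ii)--(b.iv) of $\Delta_{i-1}$ to separate it from the static tree and from the not-yet-moved siblings, and the observation that the edge $(v_i,w_j)$ stays below the rotating drawing throughout. The only substantive difference is that where the paper dismisses the already-relocated subtrees by citing Property~(P3) (they sit in their $\Delta_i$ positions inside $\smallbox{v_i}$), you substitute an explicit distance estimate showing $B_{w_j}\cap\smallbox{v_i}=\emptyset$; both justifications work, and yours makes the needed separation more explicit.
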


\begin{proof}
By \cref{remark:canonical-area}, we have that the drawing of $T_{w_j}$ in $\xi_j^{1}$ lies in $\smallbox{w_j}$, hence the drawing of $T_{w_j}$ is in $\largebox{v_i}$ throughout $\langle \xi_{j-1}^{4}, \xi_j^{1}\rangle$. Note that only $T_{w_j}+(v_i,w_j)$ moves during $\langle \xi_{j-1}^{4}, \xi_j^{1}\rangle$, hence any crossing during such a morph involves an edge of $T_{w_j}+(v_i,w_j)$ and an edge of $T_{v_i}+(v_i,p(v_i))$. 	

First, no two edges of $T_{w_j}$ cross each other during $\langle \xi_{j-1}^{4}, \xi_j^{1}\rangle$, by \cref{lemma:rotate}.

Second, since the edge $(v_i,w_j)$ lies below the drawing of $T_{w_j}$ both in $\xi_{j-1}^{4}$ and in $\xi_j^{1}$, it lies below the drawing of $T_{w_j}$ throughout $\langle \xi_{j-1}^{4}, \xi_j^{1}\rangle$, hence it does not cross any edge of $T_{w_j}$. 

It remains to prove that no edge of $T_{w_j}+(v_i,w_j)$ crosses an edge of $T_{v_i}+(v_i,p(v_i))$ that is not in $T_{w_j}+(v_i,w_j)$. No edge of $T_{w_j}+(v_i,w_j)$ crosses any edge of $T_{u_h}+(v_i,u_h)$, for any $h\in \{1,\dots,m\}$, and any edge of $T_{w_h}+(v_i,w_h)$, for any $h\in \{1,\dots,j-1\}$, by Property~(P3). Further, no edge of $T_{w_j}+(v_i,w_j)$ crosses any edge of $T_{w_h}+(v_i,w_h)$, for any $h\in \{j+1,\dots,\ell\}$, by Properties~(P4) and~(P5) and since the sector $S_{(v_i,w_j)}$, which contains $\smallbox{w_j}$ (by Property~(b.i) of $\Delta_{i-1}$) and hence the drawing of $T_{w_j}+(v_i,w_j)$ throughout the morph $\langle \xi_{j-1}^{4}, \xi_j^{1}\rangle$, does not intersect the sector $S_{(v_i,w_h)}$ (by Property~(b.iv) of $\Delta_{i-1}$), which contains $\smallbox{w_h}$ (by Property~(b.i) of $\Delta_{i-1}$) and hence the drawing of $T_{w_h}+(v_i,w_h)$ throughout the morph $\langle \xi_{j-1}^{4}, \xi_j^{1}\rangle$. The same arguments prove that no edge of $T_{w_j}+(v_i,w_j)$ crosses any edge of $T_{w_h}+(v_i,w_h)$, for any $h\in \{j+1,\dots,\ell\}$, crosses any edge of a subtree of $T_{v_i}$ rooted at a child of $v_i$ lying in $\mathcal R_3$ or in $\mathcal R_4$. 
\end{proof}

For $j=1,\dots,\ell$, let $\xi_j^{2}$ be the drawing obtained from $\xi_j^{1}$ by translating the drawing of $T_{w_j}$ in such a way that $w_j$ lies one unit below $v_i$ and so that the right side of $\smallbox{w_j}$ lies upon the right side of $\largebox{v_i}$. 

\begin{claim}
	For $j=1,\dots,\ell$, the morph $\langle \xi^1_j,\xi^2_j\rangle$ is planar.
\end{claim}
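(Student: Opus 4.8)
The plan is to reuse the four-part template that handles every earlier morphing step of this lemma. Since $\langle \xi^1_j,\xi^2_j\rangle$ translates only $T_{w_j}$ together with its pendant edge $(v_i,w_j)$, I would first record that the drawing of $T_{w_j}$ stays leftward canonical and is therefore contained in the translate of $\smallbox{w_j}$ at every instant, by \cref{remark:canonical-area}. I would check that the source box (the copy of $\smallbox{w_j}$ around the position of $w_j$ in $\xi^1_j$, which lies inside $\mediumbox{v_i}$) and the target box (whose right side coincides with the right side of $\largebox{v_i}$, so that it lies in the buffer region $\mathcal S_R$) are both contained in $\largebox{v_i}$; since $\largebox{v_i}$ is convex, the translated box stays inside $\largebox{v_i}$ for all $t$. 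Hence only $T_{w_j}+(v_i,w_j)$ moves, and every candidate crossing pits an edge of $T_{w_j}+(v_i,w_j)$ against an edge of $T_{v_i}+(v_i,p(v_i))$ that is not in $T_{w_j}+(v_i,w_j)$, exactly as in the previous claims.

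I would then clear the self-crossings and the crossings with the incident edges. No two edges of $T_{w_j}$ meet, because the drawing of $T_{w_j}$ in $\xi^2_j$ is a translate of its drawing in $\xi^1_j$. The edge $(v_i,w_j)$ does not meet $T_{w_j}$: since $w_j$ is the leftmost point of the leftward-canonical drawing and $v_i$ stays strictly to the left of $w_j$ (its $x$-coordinate never exceeds that of $w_j$) throughout the step, the vertical line through $w_j$ separates $(v_i,w_j)$ from $T_{w_j}\setminus\{w_j\}$ at every time. Finally, the edges $(v_i,w_j)$ and $(v_i,p(v_i))$ emanate from $v_i$ at distinct angles and hence share only the point $v_i$.

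The main obstacle is the last family: crossings of $T_{w_j}+(v_i,w_j)$ with the edge $(v_i,p(v_i))$ and with the subtrees rooted at the other children of $v_i$. The sector bookkeeping of the preceding claims does not apply verbatim, because here $T_{w_j}$ deliberately leaves its sector $S_{(v_i,w_j)}$ and migrates into $\mathcal S_R$. My plan is to confine the whole sweep to an explicit region: let $R(w_j)$ be the intersection of $\largebox{v_i}$ with the wedge centered at $v_i$ obtained by rotating a ray clockwise from the upper boundary ray of $S_{(v_i,w_j)}$ down to $h_\rightarrow(v_i)$, and argue that the moving drawing of $T_{w_j}+(v_i,w_j)$ lies in $R(w_j)\cup\mathcal S_R$ for all $t$. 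This is exactly where the preliminary rotation to leftward canonical and the target position ``$w_j$ one unit below $v_i$, flush with the right side of $\largebox{v_i}$'' are designed to help: the box $\smallbox{w_j}$ subtends only an angle $O(n/\ell_0)$ at $v_i$, which is smaller than the half-angle $\alpha=\arcsin(0.1/D_0^2)$ of the sectors, so as the center of the box swings from a direction inside $S_{(v_i,w_j)}$ down to a direction just below $h_\rightarrow(v_i)$ the box never escapes the wedge.

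It then remains to verify that $R(w_j)\cup\mathcal S_R$ meets none of the forbidden objects. The children already placed in $\Delta_i$, namely $w_1,\dots,w_{j-1}$ and the $\mathcal R_2$-children $u_1,\dots,u_m$, sit one unit below $v_i$ and within $O(n)$ of $v_i$, whereas every point of the region swept by $T_{w_j}$ is at distance $\Omega(\ell_0)$ from $v_i$; since $\ell_0=k_0D_0^2n\gg n$, these cannot interfere. The still-unmoved subtrees $T_{w_{j+1}},\dots,T_{w_\ell}$ and the children lying in $\mathcal R_3,\mathcal R_4$ stay inside their own sectors, which by Property~(b.iv) of $\Delta_{i-1}$ are internally disjoint from $S_{(v_i,w_j)}$, hence at polar angles above the upper boundary of $S_{(v_i,w_j)}$ and disjoint from the wedge defining $R(w_j)$. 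The edge $(v_i,p(v_i))$ is excluded from $S_{(v_i,w_j)}$ by Property~(b.ii), so it runs above $R(w_j)$ inside $\mediumbox{v_i}$, and since its slope is at least $1/D_0$ it crosses $\mathcal S_R$ at height at least $\Omega(D_0 n)$, well above the height $O(n)$ reached by $T_{w_j}$ there. Finally, $\mathcal S_R$ is otherwise empty by the construction of $\Delta'$ and Property~(d). The one genuinely delicate point is the quantitative claim that the translated $2n$-box never leaves $R(w_j)\cup\mathcal S_R$, i.e.\ the comparison between the box's angular width $O(n/\ell_0)$ and the sector half-angle $\alpha$; this is precisely where the choice $\ell_0=k_0D_0^2n$ with $k_0=150$ is exploited.
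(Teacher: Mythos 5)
Your overall strategy coincides with the paper's: only $T_{w_j}+(v_i,w_j)$ moves, so every potential crossing involves that subgraph; self-crossings of $T_{w_j}$ are excluded because the step is a translation; the edge $(v_i,w_j)$ is separated from $T_{w_j}$ by the vertical line through $w_j$; and the not-yet-moved subtrees, the subtrees in $\mathcal R_3$ and $\mathcal R_4$, and the edge $(v_i,p(v_i))$ are kept away by confining the motion to a wedge-shaped region with apex $v_i$ together with $\mathcal S_R$ (the paper phrases this via a region $R(w_j)$ and Properties~(P4) and~(P5), but the substance is the same).

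There is, however, one concrete gap: the already-placed subtrees $T_{u_1},\dots,T_{u_m}$ and $T_{w_1},\dots,T_{w_{j-1}}$ together with their pendant edges, which by Property~(P3) sit at their positions in $\Delta_i$, i.e., one unit below $v_i$ and within $\smallbox{v_i}$. Your argument for them is that they lie within $O(n)$ of $v_i$ while ``every point of the region swept by $T_{w_j}$ is at distance $\Omega(\ell_0)$ from $v_i$''. That distance bound holds for the subtree $T_{w_j}$ itself, but not for the moving pendant edge $(v_i,w_j)$, which has $v_i$ as an endpoint and therefore sweeps through the immediate neighbourhood of $v_i$ --- exactly where those subtrees live --- and which by the end of the step dips below $h_\rightarrow(v_i)$, since $w_j$ finishes one unit below $v_i$. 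Your confining region does not settle this either: $R(w_j)$ is a wedge with apex $v_i$, and once you extend it ``just below $h_\rightarrow(v_i)$'' to contain the target position you must still show that its portion near $v_i$ misses the placed subtrees. What is needed, and what the paper supplies at this point, is a slope comparison: whenever the segment $(v_i,w_j)$ reaches below the level of $v_i$, its far endpoint is at horizontal distance $\Theta(\ell_0)$ from $v_i$ and at depth at most one unit, so over the horizontal range $[0,n]$ occupied by the placed subtrees it stays at depth $O(n/\ell_0)<1$, hence strictly above all of their nodes, and its slope is shallower than that of every edge $(v_i,u_h)$ and $(v_i,w_h)$, so it cannot cross those edges either. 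Adding this observation closes the gap; the remainder of your proof matches the paper's.
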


\begin{proof}
By \cref{remark:canonical-area}, we have that the drawing of $T_{w_j}$ in $\xi_j^{1}$ lies in $\smallbox{w_j}$, hence the drawing of $T_{w_j}$ is in $\largebox{v_i}$ throughout $\langle \xi_{j}^{1}, \xi_j^{2}\rangle$. Note that only $T_{w_j}+(v_i,w_j)$ moves during $\langle \xi_{j}^{1}, \xi_j^{2}\rangle$, hence any crossing during such a morph involves an edge of $T_{w_j}+(v_i,w_j)$ and an edge of $T_{v_i}+(v_i,p(v_i))$. 	

First, no two edges of $T_{w_j}$ cross each other during $\langle \xi_{j}^{1}, \xi_j^{2}\rangle$ since the drawing of $T_{w_j}$ in $\xi_j^{2}$ is a translation of the drawing of $T_{w_j}$ in $\xi_{j}^{1}$.

Second, since the edge $(v_i,w_j)$ lies to the left of the drawing of $T_{w_j}$ both in $\xi_{j}^{1}$ and in $\xi_j^{2}$, it lies to the left of the drawing of $T_{w_j}$ throughout $\langle \xi_{j}^{1}, \xi_j^{2}\rangle$, hence it does not cross any edge of $T_{w_j}$. 

It remains to prove that no edge of $T_{w_j}+(v_i,w_j)$ crosses an edge of $T_{v_i}+(v_i,p(v_i))$ that is not in $T_{w_j}+(v_i,w_j)$. First, consider the region $R(w_j)$ which is the intersection of $\largebox{w_j}$ and the wedge obtained by counter-clockwise rotating the ray $h_\downarrow(w_j)$ around $w_j$ until it passes through both rays delimiting the sector $S_{v_i,w_j}$. We have that $T_{w_j}$ moves in the interior of $R(w_j)$ during the morph. Further, by Property~(P4), the region $R(w_j)$ does not contain any edge of $T_{w_h}+(v_i,w_h)$, for any $h\in \{j+1,\dots,\ell\}$, hence no edge of $T_{w_j}+(v_i,w_j)$ crosses any edge of $T_{w_h}+(v_i,w_h)$, for any $h\in \{j+1,\dots,\ell\}$. Similarly, no edge of $T_{w_j}+(v_i,w_j)$ crosses $(v_i,p(v_i))$ and, by Property~(P5), no edge of $T_{w_j}+(v_i,w_j)$ crosses any edge of a subtree of $T_{v_i}$ rooted at a child of $v_i$ lying in $\mathcal R_3$ or in $\mathcal R_4$. Next, we argue about the absence of crossings between the edges of $T_{w_j}+(v_i,w_j)$ and the edges of $T_{w_h}+(v_i,w_h)$, for any $h\in \{1,\dots,j-1\}$. The edge $(v_i,w_j)$ lies above all the edges of $T_{w_h}+(v_i,w_h)$ throughout $\langle \xi_{j}^{1}, \xi_j^{2}\rangle$. Further, $T_{w_j}$ lies above $T_{w_h}$ in every drawing of the morph $\langle \xi_{j}^{1}, \xi_j^{2}\rangle$, except for $\xi_j^{2}$, in which $T_{w_j}$ is to the right of $T_{w_h}$. Finally, $T_{w_j}$ lies above the line through $(v_i,w_h)$ in $\xi_{j}^{1}$ and in $\xi_j^{2}$, and hence throughout $\langle \xi_{j}^{1}, \xi_j^{2}\rangle$. It follows that no edge of $T_{w_j}+(v_i,w_j)$ crosses any edge of $T_{w_h}+(v_i,w_h)$, for any $h\in \{1,\dots,j-1\}$. An analogous proof shows that  no edge of $T_{w_j}+(v_i,w_j)$ crosses any edge of $T_{u_h}+(v_i,u_h)$, for any $h\in \{1,\dots,m\}$. 
\end{proof}

For $j=1,\dots,\ell$, we construct a drawing $\xi_j^{3}$ from $\xi_j^{2}$ by rotating $T_{w_j}$ so that it is upward canonical in $\xi_j^{3}$ and by leaving the position of the nodes not in $T_{w_j}$ unaltered. This rotation can be accomplished via a linear morph $\langle \xi_j^{2}, \xi_j^{3}\rangle$ by \cref{lemma:rotate}. 

\begin{claim}
	For $j=1,\dots,\ell$, the morph $\langle \xi_{j}^{2}, \xi_j^{3}\rangle$ is planar.
\end{claim}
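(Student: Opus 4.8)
The plan is to realize the rotation of $T_{w_j}$ from leftward canonical to upward canonical through the Pinwheel lemma (\cref{lemma:rotate}) and to confine the entire sweep to the buffer region $\mathcal S_R$, which (apart from the moving subtree itself) can contain only a portion of the edge $(v_i,p(v_i))$. First I would note that the center $w_j$ of the rotation is fixed during $\langle \xi_j^{2}, \xi_j^{3}\rangle$ — in $\xi_j^{2}$ it was placed one unit below $v_i$ with the right side of $\smallbox{w_j}$ upon the right side of $\largebox{v_i}$, and \cref{lemma:rotate} keeps the root in place — so the edge $(v_i,w_j)$ does not move and only the edges of $T_{w_j}$ move. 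Since $\langle \xi_j^{2}, \xi_j^{3}\rangle$ is exactly the reverse of the morph of case~(i) of \cref{lemma:rotate} (upward to leftward), it is planar and, as $T_{w_j}$ has at most $n$ nodes, lies in the interior of the right half of $\smallbox{w_j}$; since the right side of $\smallbox{w_j}$ coincides with the right side of $\largebox{v_i}$, this right half is contained in $\mathcal S_R\subseteq\largebox{v_i}$. Hence every crossing to be excluded involves a (moving) edge of $T_{w_j}$ and an edge of $T_{v_i}+(v_i,p(v_i))$.

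I would then dispatch the routine cases. No two edges of $T_{w_j}$ cross one another, directly by \cref{lemma:rotate}. All other subtrees rooted at children of $v_i$ lie in $\mediumbox{v_i}$ throughout the morph: those already processed sit at their final $\Delta_i$ positions inside $\smallbox{v_i}$ by Property~(P3), while those not yet processed, together with the children in $\mathcal R_3$ and $\mathcal R_4$, lie in their sectors inside $\mediumbox{v_i}$ by Properties~(P4) and~(P5); since $\mathcal S_R$ is disjoint from $\mediumbox{v_i}$, none of them can meet $T_{w_j}$. For the fixed edge $(v_i,w_j)$, every one of its points other than $w_j$ has $x$-coordinate strictly smaller than $x(w_j)$ (because $v_i$ lies to the left of $w_j$), whereas $T_{w_j}$ is confined to $\{x\ge x(w_j)\}$; thus the edge and the rotating subtree share only the common endpoint $w_j$.

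The one genuine obstacle is excluding a crossing between $T_{w_j}$ and the part of $(v_i,p(v_i))$ that may traverse $\mathcal S_R$, and this is the only place where metric estimates enter. Here I would use that, since $\mathcal R_1\neq\emptyset$, the segment $(v_i,p(v_i))$ is not horizontal, so together with $y(p(v_i))\ge y(v_i)$ it has slope at least $1/D_0$ (slopes are preserved by the scaling of \cref{lemma:scaling}). At the left boundary of $\mathcal S_R$, which is $\ell_0/2$ to the right of $v_i$, the edge has therefore risen to height at least $y(v_i)+\tfrac{1}{D_0}\cdot\tfrac{\ell_0}{2}=y(v_i)+\tfrac{k_0 D_0 n}{2}$, and it only climbs higher as $x$ grows across $\mathcal S_R$. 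By contrast $T_{w_j}$ stays inside $\smallbox{w_j}$, whose top lies at height $y(v_i)-1+n$. Since $\tfrac{k_0 D_0 n}{2}\ge 75n>n-1$, the edge $(v_i,p(v_i))$ remains strictly above $T_{w_j}$ throughout $\mathcal S_R$, so no crossing occurs and $\langle \xi_j^{2}, \xi_j^{3}\rangle$ is planar. I expect this height comparison — and the observation that a nonempty $\mathcal R_1$ forces a strictly positive slope — to be the crux; everything else reduces to the confinement to $\mathcal S_R$ guaranteed by the Pinwheel lemma and to the partially-canonical invariants of $\Delta_{i-1}$.
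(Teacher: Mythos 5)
Your proof is correct and follows essentially the same route as the paper's: confine the rotating subtree to $\mathcal S_R$ via \cref{remark:canonical-area} and \cref{lemma:rotate}, obtain internal planarity of $T_{w_j}$ from \cref{lemma:rotate}, separate the static edge $(v_i,w_j)$ by an $x$-coordinate argument, and separate everything else using the disjointness of $\mathcal S_R$ from $\mediumbox{v_i}$. The one place where you go beyond the paper is the portion of $(v_i,p(v_i))$ that may traverse $\mathcal S_R$: the paper's proof dismisses all of $T_{v_i}+(v_i,p(v_i))$ as lying in $\mediumbox{v_i}$, which is not literally true for that edge (the paper itself observes earlier that $\mathcal S_R$ may contain part of it). Your slope estimate --- a non-horizontal grid edge with $y(p(v_i))\geq y(v_i)$ has slope at least $1/D_0$, preserved under scaling, so at the left boundary of $\mathcal S_R$ it has already risen by $\ell_0/(2D_0)=k_0D_0n/2\geq 75n$, far above the top of $\smallbox{w_j}$ at height $y(v_i)+n-1$ --- is a valid and welcome way to close that small gap, and the rest of your containment checks (right half of $\smallbox{w_j}$ inside $\mathcal S_R$, processed subtrees in $\smallbox{v_i}$ by (P3), unprocessed ones in $\mediumbox{v_i}$ by (P4) and (P5)) match the paper's invariants.
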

\begin{proof}
By \cref{remark:canonical-area}, we have that the drawing of $T_{w_j}$ lies in $\smallbox{w_j}$ both in $\xi_j^{2}$ and in $\xi_j^{3}$, and hence throughout $\langle \xi_{j}^{2}, \xi_j^{3}\rangle$. This implies that the drawing of $T_{w_j}$ lies in $\largebox{v_i}$, and in particular in $\mathcal S_R$, throughout $\langle \xi_{j}^{2}, \xi_j^{3}\rangle$. Note that only $T_{w_j}$ moves during $\langle \xi_{j}^{2}, \xi_j^{3}\rangle$, hence any crossing during such a morph involves an edge of $T_{w_j}$ and an edge of $T_{v_i}+(v_i,p(v_i))$. 

First, no two edges of $T_{w_j}$ cross each other during $\langle \xi_{j}^{2}, \xi_j^{3}\rangle$, by \cref{lemma:rotate}.

Second, since the edge $(v_i,w_j)$ lies to the left of the drawing of $T_{w_j}$ both in $\xi_{j}^{2}$ and in $\xi_j^{3}$, it lies to the left of the drawing of $T_{w_j}$ throughout $\langle \xi_{j}^{2}, \xi_j^{3}\rangle$, hence it does not cross any edge of $T_{w_j}$. 

Finally, no edge of $T_{w_j}$ crosses any edge of $T_{v_i}+(v_i,p(v_i))$ that is not in $T_{w_j}+(v_i,w_j)$, as throughout $\langle \xi_{j}^{2}, \xi_j^{3}\rangle$ the former lies in $\mathcal S_R$, while the latter lies in $\mediumbox{v_i}$.
\end{proof}

For $j=1,\dots,\ell$, let $\xi_j^{4}$ be the drawing obtained from $\xi_j^{3}$ by translating the drawing of $T_{w_j}$ in such a way that $w_j$ lands at the position it has in $\Delta_i$ (that is, one unit below $v_i$ and one unit to the right of the rightmost node in $T_{w_{j-1}}$, if $j\geq 2$, or one unit to the right of the rightmost node in $T_{u_m}$, if $j=1$ and $v_i$ has children in $\mathcal R_2$ in $\Delta'$, or one unit to the right of $v_i$ if $j=1$ and $v_i$ has no child in $\mathcal R_2$ in $\Delta'$). 
%
%

\begin{claim}
	For $j=1,\dots,\ell$, the morph $\langle \xi_{j}^{3}, \xi_j^{4}\rangle$ is planar.
\end{claim}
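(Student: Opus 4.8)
The plan is to reduce the claim, exactly as in the sibling claims, to controlling the single moving piece $T_{w_j}+(v_i,w_j)$. First I would record that, by \cref{remark:canonical-area}, the drawing of $T_{w_j}$ is contained in $\smallbox{w_j}$ in both $\xi_j^{3}$ and $\xi_j^{4}$, hence in $\largebox{v_i}$ throughout $\langle \xi_j^{3},\xi_j^{4}\rangle$; since $w_j$ stays one unit below $v_i$ and $v_i$ is fixed, $\langle \xi_j^{3},\xi_j^{4}\rangle$ translates $T_{w_j}$ purely horizontally. Consequently every potential crossing pits an edge of $T_{w_j}+(v_i,w_j)$ against a static edge of $T_{v_i}+(v_i,p(v_i))$, and (as noted before the claim) $T^*$ can be ignored since it lies outside $\largebox{v_i}$.

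For the self-planarity of the moving piece I would invoke \cref{le:unidirectional-morph}: the subtree consisting of $v_i$ and $T_{w_j}$ is order-preserving, strictly-upward, straight-line and planar in both $\xi_j^{3}$ and $\xi_j^{4}$ (with $v_i$ one unit above the upward-canonical $T_{w_j}$), and since $v_i$ is stationary while $T_{w_j}$ moves horizontally, all $y$-coordinates are preserved and every trajectory is horizontal; thus no two edges of $T_{w_j}+(v_i,w_j)$ cross during the morph.

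It then remains to separate the moving piece from the static siblings, via two mechanisms. Let $\ell$ be the horizontal line through $v_i$. Because $T_{w_j}$ is upward canonical with $w_j$ one unit below $v_i$, the drawing of $T_{w_j}+(v_i,w_j)$ lies on or below $\ell$ and meets $\ell$ only at $v_i$. Conversely, the edge $(v_i,p(v_i))$ runs up-right from $v_i$, and each not-yet-processed child in $\mathcal R_1$ (the $w_h$ with $h>j$), $\mathcal R_3$, or $\mathcal R_4$ sits above $v_i$; hence, by Property~(a) of the partially-canonical drawing, its subtree is downward canonical and grows upward from its root, staying at height at least $y(v_i)$. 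So all of these lie on or above $\ell$ and cannot be crossed by $T_{w_j}+(v_i,w_j)$, which lies below. For the already-relocated children (those in $\mathcal R_2$ and the $w_h$ with $h<j$) I would instead use horizontal separation: $T_{w_j}$ enters from $\mathcal S_R$ and lands one unit to the right of the rightmost node of $T_{w_{j-1}}$, so it stays strictly to the right of all of them throughout, while their connecting edges to $v_i$ are confined to $x\le x(w_h)$ (resp.\ $x\le x(u_h)$) and fan out from the common apex $v_i$ without meeting $(v_i,w_j)$.

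The step I expect to be the crux is the upper separation: one must be certain that the siblings still waiting in $\mathcal R_1$, $\mathcal R_3$, $\mathcal R_4$ never protrude below $\ell$ into the lane that $T_{w_j}$ sweeps. This is exactly where Property~(a) is doing the work — a child drawn above $v_i$ is downward canonical, so its whole canonical subtree grows upward and the strip just beneath $v_i$ is kept clear for $T_{w_j}$ to slide through. Once this is in hand, the remaining verifications are the same routine bookkeeping used in the preceding claims, so I do not anticipate further obstacles.
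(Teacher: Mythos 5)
Your proof is correct and follows essentially the same route as the paper's: containment of the moving piece in \largebox{v_i}, reduction to crossings involving $T_{w_j}+(v_i,w_j)$, separation from the not-yet-processed siblings and $(v_i,p(v_i))$ by the horizontal line through $v_i$, and separation from the already-placed siblings by $x$-coordinates. The only (harmless) deviation is that you invoke \cref{le:unidirectional-morph} for the self-planarity of the moving piece, where the paper simply observes that the motion is a translation of $T_{w_j}$ with the edge $(v_i,w_j)$ staying above it.
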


\begin{proof}
By \cref{remark:canonical-area}, we have that the drawing of $T_{w_j}$ in $\xi_j^{3}$ lies in $\smallbox{w_j}$, hence the drawing of $T_{w_j}$ is in $\largebox{v_i}$ throughout $\langle \xi_{j}^{3}, \xi_j^{4}\rangle$. Note that only $T_{w_j}+(v_i,w_j)$ moves during $\langle \xi_{j}^{3}, \xi_j^{4}\rangle$, hence any crossing during such a morph involves an edge of $T_{w_j}+(v_i,w_j)$ and an edge of $T_{v_i}+(v_i,p(v_i))$. 	

First, no two edges of $T_{w_j}$ cross each other during $\langle \xi_{j}^{3}, \xi_j^{4}\rangle$ since the drawing of $T_{w_j}$ in $\xi_j^{4}$ is a translation of the drawing of $T_{w_j}$ in $\xi_{j}^{3}$.

Second, since the edge $(v_i,w_j)$ lies above the drawing of $T_{w_j}$ both in $\xi_{j}^{3}$ and in $\xi_j^{4}$, it lies above the drawing of $T_{w_j}$ throughout $\langle \xi_{j}^{3}, \xi_j^{4}\rangle$, hence it does not cross any edge of $T_{w_j}$. 

It remains to prove that no edge of $T_{w_j}+(v_i,w_j)$ crosses an edge of $T_{v_i}+(v_i,p(v_i))$ that is not in $T_{w_j}+(v_i,w_j)$. First, the edges of $T_{w_j}+(v_i,w_j)$ lie below the horizontal line through $v_i$ throughout $\langle \xi_{j}^{3}, \xi_j^{4}\rangle$. Hence, no edge of $T_{w_j}+(v_i,w_j)$ crosses the edge $(v_i,p(v_i))$, or the edges of $T_{w_h}+(v_i,w_h)$, for any $h\in \{j+1,\dots,\ell\}$, or the edges of the subtrees of $T_{v_i}$ rooted at the children of $v_i$ lying in $\mathcal R_3$ or in $\mathcal R_4$, as all such edges lie above the horizontal line through $v_i$ throughout $\langle \xi_{j}^{3}, \xi_j^{4}\rangle$. Next, we argue about the absence of crossings between the edges of $T_{w_j}+(v_i,w_j)$ and the edges of $T_{w_h}+(v_i,w_h)$, for any $h\in \{1,\dots,j-1\}$. The edge $(v_i,w_j)$ lies above all the edges of $T_{w_h}+(v_i,w_h)$ throughout $\langle \xi_{j}^{3}, \xi_j^{4}\rangle$. Further, $T_{w_j}$ lies to the right of $T_{w_h}+(v_i,w_h)$ throughout $\langle \xi_{j}^{3}, \xi_j^{4}\rangle$. It follows that no edge of $T_{w_j}+(v_i,w_j)$ crosses any edge of $T_{w_h}+(v_i,w_h)$, for any $h\in \{1,\dots,j-1\}$. An analogous proof shows that  no edge of $T_{w_j}+(v_i,w_j)$ crosses any edge of $T_{u_h}+(v_i,u_h)$, for any $h\in \{1,\dots,m\}$. 
\end{proof}

Note that $\xi_j^{4}$ satisfies Properties (P1)--(P5), given that $\xi_{j-1}^{4}$ satisfies the same properties and given that during the morph $\langle \xi_{j-1}^{4}, \xi_{j}^{1}, \xi_{j}^{2}, \xi_{j}^{3}, \xi_j^{4}\rangle$ only the nodes of $T_{w_j}$ move, from their position in $\Delta^+$ to their position in $\Delta_i$. Eventually, the drawing $\xi_{\ell}^{4}$ coincides with $\Delta_{i}$, except for the drawing of the subtrees lying in the interior of $\mathcal R_3$ and $\mathcal R_4$.

Subtrees in $\mathcal R_3$ are treated symmetrically to the ones in $\mathcal R_1$. In particular, the subtrees rooted at the children of $v_i$ that lie in $\mathcal R_3$ are processed according to the clockwise order of the edges from $v_i$ to their roots, while the role played by $\mathcal S_R$ is now assumed by $\mathcal S_L$.

The treatment of the subtrees in $\mathcal R_4$ is similar to the one of the subtrees in $\mathcal R_3$. However, when a subtree is considered, it is first horizontally translated in the interior of $\mathcal R_3$ and then processed according to the rules for such a region.

Altogether, we have described a morph $\mathcal M_{i-1,i}$ from the partially-canonical drawing $\Delta_{i-1}$ of $T[i-1]$ to $\Delta_i$, which is a partially-canonical drawing of $T[i]$ by \cref{lem:partially-canonical-delta-i}. Next, we argue about the properties of $\mathcal M_{i-1,i}$.

We first deal with the space requirements of $\mathcal M_{i-1,i}$. Consider the drawing $\Delta_0$ and place the boxes $\largebox{v}$ around the nodes $v$ of $T$; the bounding box of the arrangement of such boxes has width $w(\Delta_0)+\ell_0+4n$ and height $h(\Delta_0)+\ell_0+4n$. We claim that the drawings of $\mathcal M_{i-1,i}$ lie inside such a bounding box. Assume this is true for $\Delta_{i-1}$ (this is indeed the case when $i=1$); all subsequent drawings of $\mathcal M_{i-1,i}$ coincide with $\Delta_{i-1}$, except for the placement of the subtrees rooted at the children of $v_i$, which however lie inside $\largebox{v_i}$ in each of such drawings. Since $v_i$ has the same position in $\Delta_{i}$ as in $\Delta_0$ and since $\largebox{v_i}$ has width and height equal to $\ell_0+4n$, the claim follows.

Finally, we deal with the number of linear morphs composing $\mathcal M_{i-1,i}$. The morph $\mathcal M_{i-1,i}$ consists of the morph $\langle \Delta_{i-1},\Delta'\rangle$, followed by the morphs needed to move the subtrees rooted at the children of $v_i$ to their final positions in $\Delta_i$. Since the number of morphing steps needed to deal with each of such subtrees is constant, we conclude that $M_{i-1,i}$ consists of $O(\deg(v_i))$ linear morphing steps. This concludes the proof of \cref{lemma:partially-canonical-from-partially-canonical}.

\section{Conclusions and Open Problems}\label{se:conclusions}

We presented an algorithm that, given any two order-preserving straight-line planar grid drawings $\Gamma_0$ and $\Gamma_1$ of an $n$-node ordered tree $T$, constructs a morph $\langle \Gamma_0=\Delta_0,\Delta_1,\dots,\Delta_{k}=\Gamma_1\rangle$ such that $k$ is in $O(n)$ and such that the area of each intermediate drawing $\Delta_i$ is polynomial in $n$ and in the area of $\Gamma_0$ and $\Gamma_1$. Better bounds can be achieved if $T$ is rooted and $\Gamma_0$ and $\Gamma_1$ are also strictly-upward drawings, especially in the case in which $T$ is a binary tree.


We make two remarks about the generality of the model that we adopted. Both observations apply not only to tree drawings but, more in general, to planar graph drawings.

First, our assumption that $\Gamma_0$ and $\Gamma_1$ are grid drawings seems restrictive, and it seems more general to consider drawings that have bounded resolution, where the {\em resolution} of a drawing is the ratio between the largest and the smallest distance between a pair of geometric objects in the drawing (points representing nodes or segments representing edges). However, by using an observation from~\cite{cegl-dgppof-12}, one can argue that two morphing steps suffice to transform a drawing with resolution $r$ in a grid drawing whose area is polynomial in $r$. This is formalized in the following.

\begin{lemma} \label{le:from-resolution-to-grid}
Let $\Gamma$ be a planar straight-line drawing of a planar graph $G$ and let $r$ be the resolution of $\Gamma$. There exists a $2$-step planar morph $\langle \Gamma, \Gamma', \Gamma'' \rangle$ such that the resolution of $\Gamma'$ is $r$ and such that $\Gamma''$ is a planar straight-line grid drawing lying on an $O(r)\times O(r)$ grid.
\end{lemma}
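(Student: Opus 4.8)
The plan is to realize the two steps as (i) a uniform scaling of $\Gamma$ and (ii) a snapping of every vertex to its nearest grid point. Let $d_{\min}$ and $d_{\max}$ denote the smallest and the largest distance between a pair of geometric objects in $\Gamma$, so that $r = d_{\max}/d_{\min}$. I would define $\Gamma'$ as the drawing obtained by scaling $\Gamma$, about any fixed point, by the factor $c = 4/d_{\min}$. Every intermediate drawing of the linear morph $\langle \Gamma, \Gamma' \rangle$ is a uniform scaling of $\Gamma$ by a factor in $[1,c]$, and since a uniform scaling is a similarity it preserves both planarity and the ratios of all pairwise distances; hence $\langle \Gamma, \Gamma' \rangle$ is planar and the resolution of $\Gamma'$ equals $r$, as required. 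Moreover, in $\Gamma'$ the smallest distance between two geometric objects is $c\,d_{\min} = 4$ and the largest is $c\,d_{\max} = 4r$.

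Next I would obtain $\Gamma''$ from $\Gamma'$ by moving each vertex to a closest integer point, so that every vertex is displaced by at most $\tfrac{\sqrt2}{2}$. Since any two vertices are at distance at least $4$ in $\Gamma'$, after the displacement they remain at distance more than $4 - \sqrt2 > 0$ apart, so $\Gamma''$ assigns distinct integer coordinates to distinct vertices and is indeed a grid drawing. The bounding box of $\Gamma'$ has width and height at most $c\,d_{\max} = 4r$, and rounding enlarges each side by at most one unit; therefore $\Gamma''$ lies on an $O(r) \times O(r)$ grid.

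It remains to prove that the linear morph $\langle \Gamma', \Gamma'' \rangle$ is planar, which I expect to be the main obstacle: one must argue that no crossing appears at any intermediate time, not merely that the two endpoints are crossing-free. Here I would invoke the observation from~\cite{cegl-dgppof-12}, in the following feature-size form. In $\Gamma'$ every vertex lies at distance at least $4$ from every non-incident edge. During $\langle \Gamma', \Gamma'' \rangle$ each point stays within $\tfrac{\sqrt2}{2}$ of its position in $\Gamma'$, so the segment representing each edge stays within Hausdorff distance $\tfrac{\sqrt2}{2}$ of its position in $\Gamma'$ (the endpoints move by at most $\tfrac{\sqrt2}{2}$, and every point of the segment is a convex combination of them). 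Consequently, at every time $t$ the distance between any vertex and any non-incident edge is at least $4 - 2 \cdot \tfrac{\sqrt2}{2} = 4 - \sqrt2 > 0$, so no vertex ever lies on a non-incident edge. Since a continuously moving straight-line drawing can become non-planar only by first creating a vertex--edge incidence --- two segments passing from disjoint to crossing must, at the first instant of contact, touch at an endpoint of one of them, i.e.\ at a vertex lying on the other, non-incident edge --- planarity is preserved throughout $\langle \Gamma', \Gamma'' \rangle$. This yields the desired $2$-step planar morph $\langle \Gamma, \Gamma', \Gamma'' \rangle$.
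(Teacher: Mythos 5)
Your proposal is correct and follows essentially the same route as the paper's own proof: a uniform scaling (the paper makes the minimum distance $2$ rather than $4$) followed by snapping each vertex to the nearest grid point, with planarity of the second step guaranteed because each vertex moves by at most $\frac{\sqrt2}{2}$ and hence any two geometric objects get closer by at most $\sqrt2$, which is less than the separation ensured by the scaling. Your additional remarks --- that the intermediate drawings of the first step are themselves similarities, and that a loss of planarity in the second step would have to begin with a vertex--edge incidence --- only make explicit what the paper leaves implicit.
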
	

\begin{proof}	
First, we construct $\Gamma'$ by scaling $\Gamma$ in such a way that the smallest distance between any pair of geometric objects is $2$. Clearly, $\langle \Gamma, \Gamma' \rangle$ is a planar linear morph and the resolution of $\Gamma'$ is the same as the one of $\Gamma$; hence, the largest distance between any pair of geometric objects in $\Gamma'$ is in $O(r)$. Second, we construct $\Gamma''$ from $\Gamma'$ by moving each node to the nearest grid point; the linear morph $\langle \Gamma', \Gamma'' \rangle$ is planar since each node moves by at most $\sqrt 2/2$, hence this motion brings any two geometric objects closer by at most $\sqrt 2$, while their distance is at least $2$. Thus, $\Gamma''$ lies on an $O(r)\times O(r)$ grid.
\end{proof}

Second, our model deals with morphs that consist of sequences of drawings $\Delta_0,\Delta_1,\dots,\Delta_k$ lying on polynomial-size grids. However, no bound on the resolution is explicitly required for the drawings visualized during these morphs, i.e., for the drawings intermediate to each linear morph $\langle \Delta_i,\Delta_{i+1} \rangle$. Thus, one might wonder whether the resolution becomes arbitrarily large in some drawing of such morphs. The next lemma proves that this is not the case.

\begin{lemma}\label{le:resolution}
Let $\Gamma_0$ and $\Gamma_1$ be two planar straight-line grid drawings of the same graph $G$ lying on $\ell\times \ell$ grids, for some value $\ell>0$. Suppose that the linear morph $\langle \Gamma_0,\Gamma_1 \rangle$ is planar. Then the maximum resolution of any drawing of $\mathcal M$ is in $O(\ell^4)$.
\end{lemma}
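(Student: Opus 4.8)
The plan is to bound the resolution of every intermediate drawing $\Gamma_t$ of the linear morph $\mathcal M=\langle\Gamma_0,\Gamma_1\rangle$ by controlling separately the numerator (largest distance) and the denominator (smallest distance) of the resolution. The largest distance is immediate: since every vertex $v$ is at $v(t)=(1-t)v_0+tv_1$, a convex combination of two points of the $\ell\times\ell$ grid, the whole drawing $\Gamma_t$ stays inside an $\ell\times\ell$ box, so any two geometric objects are at distance at most $\sqrt2\,\ell\in O(\ell)$. Everything therefore reduces to proving that the smallest distance between two disjoint geometric objects, over all $t\in[0,1]$, is $\Omega(\ell^{-3})$, from which the bound $O(\ell^4)$ follows.

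To lower-bound the smallest distance I would first reduce the three types of pairs to two elementary cases. Two non-crossing segments attain their distance at an endpoint of one of them, so the edge--edge case reduces to the vertex--edge case; and the distance from a vertex to a segment is either attained at an endpoint of the segment (reducing to the vertex--vertex case) or equals the perpendicular distance to the supporting line. For the vertex--vertex case, for two distinct vertices $a,d$ the vector $a(t)-d(t)=(1-t)(a_0-d_0)+t(a_1-d_1)$ traces the segment between two integer points of a $2\ell\times2\ell$ grid; since planarity forbids $a(t)=d(t)$, the origin is never interior to this segment, so either it is collinear with and beyond the two integer endpoints (distance $\ge1$) or it is non-collinear and \cref{claim:minimum-grid-distance} gives distance $\ge\frac{1}{2\sqrt2\,\ell}$. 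Thus the vertex--vertex distance is $\Omega(\ell^{-1})$.

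The crux is the remaining vertex--edge case with the perpendicular foot interior to the segment. Here the distance from $a$ to the line through $b,c$ equals $|f(t)|/|b(t)-c(t)|$, where $f(t)=(b(t)-a(t))\times(c(t)-a(t))$ is twice the signed area of the triangle $a(t)b(t)c(t)$. Expanding, $f(t)=At^2+Bt+C$ is a quadratic whose coefficients are cross products of integer difference vectors with entries in $[-\ell,\ell]$, hence integers of magnitude $O(\ell^2)$, while $|b(t)-c(t)|\le\sqrt2\,\ell$. I would prove $|f(t)|\ge\Omega(\ell^{-2})$ on the (relatively open) set $\mathcal S$ of times where the foot is interior, which yields the desired $\Omega(\ell^{-3})$ bound after dividing by $|b(t)-c(t)|$. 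If $f$ has no real root, then writing $f=A\big((t+\tfrac{B}{2A})^2+\tfrac{-\mathrm{disc}}{4A^2}\big)$ and using that $\mathrm{disc}=B^2-4AC$ is a nonzero integer gives $|f(t)|\ge\frac{|\mathrm{disc}|}{4|A|}\ge\frac{1}{4|A|}=\Omega(\ell^{-2})$ for every $t$. If $f$ has real roots, then on each connected component of $\mathcal S$ the quadratic $f$ has constant sign, so $|f|$ is minimized either at a boundary of the component or at the parabola vertex $-B/2A$; at the vertex the value is $|\mathrm{disc}|/(4|A|)\ge\Omega(\ell^{-2})$ unless $\mathrm{disc}=0$, in which case the unique root is exterior and does not lie in $\mathcal S$. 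To finish I would bound the boundary values of $|f|$: at a boundary time in $(0,1)$ the foot lies on an endpoint, so the perpendicular distance equals a vertex--vertex distance $\Omega(\ell^{-1})$, whence $|f|=\mathrm{dist}\cdot|b-c|\ge\Omega(\ell^{-1})\cdot\Omega(\ell^{-1})=\Omega(\ell^{-2})$; at a boundary time equal to $0$ or $1$ the value $|f|$ is $|C|$ or $|A+B+C|$, a nonzero integer (nonzero because an interior foot with $f=0$ would place a vertex on a non-incident edge of the grid drawing $\Gamma_0$ or $\Gamma_1$, contradicting planarity).

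The main obstacle is exactly this last case: unlike at the endpoints of the morph, an interior-foot minimum can occur at an irrational time $t^*$, so one cannot directly invoke the integrality of triangle areas. The argument above circumvents this by never evaluating $f$ at such a $t^*$ in isolation; instead it controls $f$ globally on $[0,1]$ through the integrality of its coefficients and of its discriminant, and pins the troublesome small-$|f|$ behaviour to the roots of $f$, which the planarity of $\mathcal M$ forces to have an exterior foot and hence to lie outside $\mathcal S$. Combining the three cases gives a uniform lower bound $\Omega(\ell^{-3})$ on the smallest distance, and together with the $O(\ell)$ upper bound on the largest distance this proves that the resolution of every $\Gamma_t$ is $O(\ell^4)$.
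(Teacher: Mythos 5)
Your proposal is correct and follows essentially the same route as the paper: bound the diameter of every intermediate drawing by $O(\ell)$, reduce edge--edge to vertex--edge to vertex--vertex, handle vertex--vertex via \cref{claim:minimum-grid-distance} applied after a translation, and handle vertex--edge by writing twice the triangle area as an integer quadratic in $t$ with coefficients of magnitude $O(\ell^2)$ and extracting a $\Omega(1/\ell^2)$ lower bound on its absolute value from the integrality of the discriminant. The one place where you genuinely diverge is the crux of the vertex--edge case: the paper asserts that, since the distance at the minimizing time is positive, the quadratic $\mathcal P(t)$ has constant sign over all of $[0,1]$, which is not immediate (the vertex may cross the \emph{supporting line} outside the segment without violating planarity). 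You instead restrict attention to the set of times at which the perpendicular foot is interior to the segment, argue constant sign on each connected component of that set, bound $|\mathcal P|$ at the parabola vertex via the discriminant (disposing of the zero-discriminant case because the double root would force a vertex onto the segment), and bound $|\mathcal P|$ at the component boundaries by reduction to the vertex--vertex case or to integrality at $t\in\{0,1\}$. This is a more careful execution of the same argument and in fact patches the paper's loosest step; the price is a slightly longer case analysis.
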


\begin{proof}
Assume w.l.o.g.\ that the morph $\langle \Gamma_0,\Gamma_1 \rangle$ happens between times $t=0$ and $t=1$; then, for every $t\in [0,1]$, denote by $\Gamma_t$ the drawing of $\langle \Gamma_0,\Gamma_1 \rangle$ at time $t$.   	
	
Since each of $\Gamma_0$ and $\Gamma_1$ lies on a $\ell\times \ell$ grid, it follows that $\Gamma_t$ lies on a $\ell\times \ell$ grid, for every $t\in [0,1]$. Hence, the largest distance between a pair of geometric objects in $\Gamma_t$ is in $O(\ell)$. It remains to prove that the smallest distance between a pair of geometric objects in $\Gamma_t$ is in $\Omega(\frac{1}{\ell^3})$, for every $t\in [0,1]$. Such a smallest distance occurs either between two vertices of $G$ or between a vertex and an edge of $G$. Indeed, the distance between two edges always coincides with the distance between one of the edges and an end-vertex of the other edge. 

\begin{itemize}
	\item First, consider any two vertices $u$ and $v$ of $G$. We are going to prove that, for any $t\in [0,1]$, the distance between $u$ and $v$ in $\Gamma_t$ is in $\Omega(\frac{1}{\ell})$. In order to do so, it is convenient to translate $\Gamma_1$ by a vector $\vec v:= (x_{\Gamma_0}(v)-x_{\Gamma_1}(v),y_{\Gamma_0}(v)-y_{\Gamma_1}(v))$; this defines a drawing $\Gamma'_1$ of $G$ in which $v$ is placed at the same point as in $\Gamma_0$. Note that the morph $\langle \Gamma_0,\Gamma'_1 \rangle$ is planar, as for any $t\in [0,1]$ the drawing $\Gamma'_t$ at time $t$ of $\langle \Gamma_0,\Gamma'_1 \rangle$  coincides with the drawing $\Gamma_t$ translated by the vector $t \cdot \vec v$. Hence, the distance between $u$ and $v$ in $\Gamma'_t$ is the same as in $\Gamma_t$. Further, since $v$ does not move during $\langle \Gamma_0,\Gamma'_1 \rangle$, the minimum distance between $u$ and $v$ during $\langle \Gamma_0,\Gamma'_1 \rangle$ coincides with the distance between $v$ and the straight-line segment $s$ whose end-points are the positions of $u$ in $\Gamma_0$ and in $\Gamma'_1$. Since both $\Gamma_0$ and $\Gamma'_1$ lie inside a box with side-length $2\ell$ centered at $v$, it follows by Lemma~\ref{claim:minimum-grid-distance} that the distance between $v$ and $s$ is in $\Omega(\frac{1}{\ell})$.
	\item Second, consider any vertex $u$ and any edge $(v,w)$ of $G$, where $u\neq v,w$. In order to compute the minimum distance between $u$ and $(v,w)$ during $\langle \Gamma_0,\Gamma_1 \rangle$, we translate $\Gamma_0$ by a vector $\vec v:= (-x_{\Gamma_0}(v),-y_{\Gamma_0}(v))$, thus obtaining a drawing $\Gamma'_0$ of $G$ in which $v$ lies at $(0,0)$, and we translate $\Gamma_1$ by a vector $\vec v':= (-x_{\Gamma_1}(v),-y_{\Gamma_1}(v))$, thus obtaining a drawing $\Gamma'_1$ of $G$ in which $v$ also lies at $(0,0)$. Note that the morph $\langle \Gamma'_0,\Gamma'_1 \rangle$ is planar, as for any $t\in [0,1]$ the drawing $\Gamma'_t$ at time $t$ of $\langle \Gamma'_0,\Gamma'_1 \rangle$  coincides with the drawing $\Gamma_t$ translated by the vector $(1-t) \cdot \vec v + t \cdot \vec v'$. Hence, the distance between $u$ and $(v,w)$ in $\Gamma'_t$ is the same as in $\Gamma_t$. Note that $x_{\Gamma'_t}(u)=(1-t)x_{\Gamma'_0}(u)+t \cdot x_{\Gamma'_1}(u)$, and similar for $y_{\Gamma'_t}(u)$, $x_{\Gamma'_t}(w)$, and $y_{\Gamma'_t}(w)$; further, $x_{\Gamma'_t}(v)=y_{\Gamma'_t}(v)=0$. 
	
	Let $t^*\in [0,1]$ be the time of $\langle \Gamma'_0,\Gamma'_1 \rangle$ in which the distance $d$ between $u$ and $(v,w)$ is minimum; we need to prove that $d\in \Omega(\frac{1}{\ell^3})$. Let $l^{vw}_{t^*}$ be the line through $v$ and $w$ in $\Gamma'_{t^*}$ and let $d^*$ be the distance between $u$ and $l^{vw}_{t^*}$; since $(v,w)$ is part of $l^{vw}_{t^*}$ in $\Gamma'_{t^*}$, we have $d\geq d^*$. 
	
	\begin{itemize}
		\item We can assume that $d^*>0$. Indeed, if $d^*=0$, we have that $u$ lies on $l^{vw}_{t^*}$. By the planarity of $\langle \Gamma'_0,\Gamma'_1 \rangle$, we have that $u$ does not lie on the edge $(v,w)$, hence $d$ is equal to the distance between $u$ and one of the vertices $v$ and $w$, which is in $\Omega(\frac{1}{\ell})$ as proved above.
		\item We can also assume that $t^*\in (0,1)$; indeed, if $t^*=0$ or $t^*=1$, then $d,d^*\in \Omega(\frac{1}{\ell})$, by \cref{claim:minimum-grid-distance}.
	\end{itemize}
	
	The line $l^{vw}_{t^*}$ has equation $(y-y_{\Gamma'_{t^*}}(v))/(y_{\Gamma'_{t^*}}(w)-y_{\Gamma'_{t^*}}(v))=(x-x_{\Gamma'_{t^*}}(v))/(x_{\Gamma'_{t^*}}(w)-x_{\Gamma'_{t^*}}(v))$, which is $y_{\Gamma'_{t^*}}(w) \cdot x - x_{\Gamma'_{t^*}}(w) \cdot y = 0$. Then we have 
	
	\begin{equation}\label{le:distance}
	d^*=\frac{|y_{\Gamma'_{t^*}}(w) \cdot  x_{\Gamma'_{t^*}}(u)- x_{\Gamma'_{t^*}}(w) \cdot  y_{\Gamma'_{t^*}}(u)|}{\sqrt{(y_{\Gamma'_{t^*}}(w))^2+(x_{\Gamma'_{t^*}}(w))^2}}.
	\end{equation}
	
	Since $\Gamma'_{t^*}$ lies inside a box with side-length $2\ell$ centered at $v\equiv (0,0)$, it follows that $|y_{\Gamma'_{t^*}}(w)|\leq \ell$ and $|x_{\Gamma'_{t^*}}(w)|\leq \ell$, hence the denominator of \cref{le:distance} is at most $\sqrt{2\ell^2}\in O(\ell)$. It remains to prove that the numerator of \cref{le:distance} is in $\Omega(\frac{1}{\ell^2})$.
	
	The numerator of \cref{le:distance} is the absolute value of a second-degree polynomial $\mathcal P(t)$, namely \begin{eqnarray*}
	\mathcal P(t)	&:=& y_{\Gamma'_{t^*}}(w) \cdot  x_{\Gamma'_{t^*}}(u)- x_{\Gamma'_{t^*}}(w) \cdot  y_{\Gamma'_{t^*}}(u)=\\
		&=& (((1-t)y_{\Gamma'_0}(w)+t \cdot y_{\Gamma'_1}(w))\cdot ((1-t)x_{\Gamma'_0}(u)+t \cdot x_{\Gamma'_1}(u))) -\\ & & (((1-t)x_{\Gamma'_0}(w)+t \cdot x_{\Gamma'_1}(w))\cdot ((1-t)y_{\Gamma'_0}(u)+t \cdot y_{\Gamma'_1}(u)))=\\
		&=& A\cdot t^2 + B\cdot t + C, 
	\end{eqnarray*}
	
	\noindent where each of $A$, $B$, and $C$ is the algebraic sum of a constant number of terms, each of which is either an element of $\mathcal S=\{y_{\Gamma'_0}(w),y_{\Gamma'_1}(w),x_{\Gamma'_0}(w),x_{\Gamma'_1}(w),y_{\Gamma'_0}(u),y_{\Gamma'_1}(u),x_{\Gamma'_0}(u),x_{\Gamma'_1}(u)\}$, or the product of two elements in $\cal S$; note that each element in $\mathcal S$ is an integer in $[-\ell,\ell]$. Since $d^*>0$ and since the function $z=\mathcal P(t)$ is continuous, we have that $\mathcal P(t)$ is either positive over the entire interval $[0,1]$ or negative over the entire interval $[0,1]$. Assume the former, as in the latter case the absolute value in the numerator of \cref{le:distance} reverts the signs of the terms of $\mathcal P(t)$ and our arguments are the same. We then only need to prove that  $\mathcal P(t^*)\in \Omega(\frac{1}{\ell^2})$.
	
	\begin{itemize}
		\item We can assume that the minimum value of $\mathcal P(t)$ over the interval $[0,1]$ is achieved at a time $t^m\in (0,1)$. Indeed, if the minimum value of $\mathcal P(t)$ over the interval $[0,1]$ is achieved at $t=0$ or $t=1$, then we have $\mathcal P(t^*)\geq A + B + C\geq 1$ or $\mathcal P(t^*)\geq C\geq 1$ (recall that $A$, $B$, and $C$ are integers and that $\mathcal P(t)$ is positive over the entire interval $[0,1]$), respectively.
	\end{itemize}
	
	Since  $\mathcal P(t)$ has a minimum at $t^m\in (0,1)$, we have $A>0$ and the derivative $\frac{\partial \mathcal P(t)}{\partial t}=2A\cdot t+B$ has a zero in $t^m=-B/2A$. Then we have $\mathcal P(t^*)\geq \mathcal P(t^m) = A (\frac{-B}{2A})^2 + B (\frac{-B}{2A}) + C = \frac{-B^2+4AC}{4A}$. The numerator of the previous fraction is greater than or equal to $1$ (given that $A$, $B$, and $C$ are integers and that $\mathcal P(t)$ is positive over the entire interval $[0,1]$), while the denominator is in $O(\ell^2)$ (given that $A$ is the algebraic sum of a constant number of terms, each of which is either an integer in $[-\ell,\ell]$, or the product of two integers in $[-\ell,\ell]$). Hence, $\mathcal P(t^m)\in \Omega(\frac{1}{\ell^2})$ and thus $\mathcal P(t^*)\in \Omega(\frac{1}{\ell^2})$.
\end{itemize}

This concludes the proof of the lemma.
\end{proof}

\cref{le:from-resolution-to-grid,le:resolution} imply that the problem of constructing planar morphs with polynomial resolution between two planar straight-line drawings of the same planar graph can indeed be reduced to the problem of constructing planar morphs with polynomial area between two planar straight-line grid drawings of the same planar graph. For example, it follows by \cref{le:from-resolution-to-grid,le:resolution} and by \cref{th:morph-straight-line-planar-drawings} that, given an $n$-node ordered tree $T$ and given two order-preserving straight-line planar drawings $\Gamma_0$ and $\Gamma_1$ of $T$ with maximum resolution $r$, there exists an $O(n)$-step planar morph $\cal M$ from $\Gamma_0$ to $\Gamma_1$ such that the resolution of any intermediate drawing of $\cal M$ is a polynomial function of $r$.

Several problems are left open by our research. Is it possible to generalize our results to graph classes richer than trees? How about outerplanar graphs? Is it possible to improve our area bounds for morphs of straight-line planar grid drawings of trees or even just of paths? Is there a trade-off between the number of steps and the area required by a morph? Is it possible to construct upward planar morphs with a constant number of steps between any two  order-preserving strictly-upward straight-line planar grid drawings of an $n$-node rooted ordered tree? Are there other relevant tree drawing standards for which it makes sense to consider the morphing problem?

\bibliographystyle{abbrv}
\bibliography{bibliography}

\clearpage


\end{document}